\documentclass[acmsmall,10pt,screen]{acmart}

\setcopyright{rightsretained}
\acmPrice{}
\acmDOI{10.1145/3547648}
\acmYear{2022}
\copyrightyear{2022}
\acmSubmissionID{icfp22main-p85-p}
\acmJournal{PACMPL}
\acmVolume{6}
\acmNumber{ICFP}
\acmArticle{117}
\acmMonth{8}

\pdfoutput=1 

\usepackage{mathpartir}
\usepackage{proof}

\usepackage{amsmath}
\usepackage{enumitem}
\usepackage{thmtools}
\usepackage{thm-restate}
\usepackage{bm}
\usepackage{booktabs}
\usepackage{framed}
\usepackage[utf8]{inputenc}
\usepackage{mdframed}
\usepackage{graphicx}
\usepackage{multirow}
\usepackage{subcaption}
\usepackage{tabularx}
\usepackage[textsize=tiny]{todonotes}
\renewcommand{\todo}[1]{}

\usepackage{xspace}

\usepackage{cmap}
\usepackage[T1]{fontenc}

\usepackage{wrapfig}

\usepackage[scaled=0.75]{beramono}

\usepackage{nanoml}

\newcommand{\code}[1]{\mbox{\lstinline[language=nanoml,basicstyle=\footnotesize\ttfamily,columns=flexible,mathescape=true]^#1^}}


\newcommand{\mandate}{\textsc{Mandate}\xspace}


\newcommand{\eg}{\textit{e.g.:\ }}
\newcommand{\where}{\textbf{where}\xspace}

\hyphenation{assn-eval}


\newcommand{\secref}[1]{\S \ref{#1}}
\newcommand{\appref}[1]{App. \ref{#1}}
\newcommand{\figref}[1]{Fig. \ref{#1}}





\newcommand{\keyterm}[1]{\textbf{#1}}

\newtheorem{property}{Property}
\newtheorem{observation}{Observation}
\newtheorem{assumption}{Assumption}


\newcommand{\set}[1]{\ensuremath{\textsf{#1}}}
\newcommand{\p}{^\prime}
\newcommand{\pp}{^{\prime\prime}}
\newcommand{\ppp}{^{\prime\prime\prime}}


\newcommand{\gramdef}{\ensuremath{\mathrel{\mathord{:}\mathord{:=}}}}
\newcommand{\sor}{\ensuremath{\mathrel{|}}}
\newcommand{\gramalt}{\sor}

\newcommand{\IMP}{\textsc{IMP}\xspace}

\newcommand{\true}{\textbf{true}}
\newcommand{\false}{\textbf{false}}

\newcommand{\ite}[3]{\textbf{if }#1\textbf{ then }#2\textbf{ else }#3}
\newcommand{\while}[2]{\textbf{while }#1\textbf{ do }#2}

\newcommand{\assign}[2]{#1\text{ := }#2}
\newcommand{\impskip}{\textbf{skip}}

\newcommand{\seq}[2]{#1 ; #2}
\newcommand{\error}{\textbf{error}}


\newcommand{\lockstep}[2]{#1 \parallel #2}

\newcommand{\stmtvalname}{Stmt. Values}
\newcommand{\stmtvalvar}{w}


\newcommand{\mtval}{\textsf{Val}}
\newcommand{\mtnonval}{\textsf{NonVal}}
\newcommand{\mtall}{\textsf{All}}

\newcommand{\valstar}{\star_\mtval}
\newcommand{\nonvalstar}{\star_\mtnonval}
\newcommand{\allstar}{\star_\mtall}


\newcommand{\sosto}{\leadsto}
\newcommand{\build}[1]{#1}
\newcommand{\letstepto}[3]{\textbf{let}\ [#1 \sosto #2]\ \textbf{in}\ #3}
\newcommand{\letcomp}[3]{\textbf{let}\ #1 = #2\ \textbf{in}\ #3}


\newcommand{\with}{\mathop{\big |}}

\newcommand{\hole}{\square}
\newcommand{\plug}[2]{#1[#2]}

\newcommand{\halt}{\textbf{emp}}
\newcommand{\shortcxtfr}[1]{\left[ #1 \right]}
\newcommand{\cxtfr}[2]{\left[#1 \rightarrow #2\right]}
\newcommand{\frcomp}{\circ}


\newcommand{\pamto}{\mathrel{\hookrightarrow}}

\newcommand{\up}{\uparrow}
\newcommand{\down}{\downarrow}
\newcommand{\updown}{\updownarrow}

\newcommand{\pam}[3]{\,\am{#1}{#2}\!\scalebox{1.2}{\ensuremath{#3}}\,}


\newcommand{\am}[2]{\left< #1 \with #2 \right>}
\newcommand{\amto}{\mathrel{\rightarrow}}

\newcommand{\unfusedamto}{\longrightarrow}


\newcommand{\abstr}[1]{\widehat{#1}}


\newcommand{\partialfunc}{\rightharpoonup}

\newcommand{\emptyenv}{\emptyset}
\newcommand{\upd}[3]{#1[#2\rightarrow#3]}


\newcommand{\narrowsto}{\rightsquigarrow}
\newcommand{\absred}[1]{\mathrel{\underset{#1}{\abstr{\amto}}}}
\newcommand{\absnarrow}[1]{\mathrel{\underset{#1}{\abstr{\narrowsto}}}}

\setlist[itemize]{leftmargin=*}
\setlist[enumerate]{leftmargin=*}

\setlength{\textfloatsep}{10pt plus 1.0pt minus 2.0pt}
\setlength{\floatsep}{6pt plus 1.0pt minus 2.0pt}



\setcopyright{none}             

\bibliographystyle{ACM-Reference-Format}
\citestyle{acmauthoryear}   

\begin{document}

\title{Automatically Deriving Control-Flow Graph Generators from Operational Semantics}


\author{James Koppel}
\affiliation{
  \institution{MIT}            
  \city{Cambridge}
  \state{MA}
  \country{USA}
}
\email{jkoppel@mit.edu}          

\author{Jackson Kearl}
\affiliation{
  \institution{MIT}
  \city{Cambridge}
  \state{MA}
  \country{USA}
}

\author{Armando Solar-Lezama}
\affiliation{
  \institution{MIT}            
  \city{Cambridge}
  \state{MA}
  \country{USA}
}
\email{asolar@csail.mit.edu}          


\begin{abstract}
We develop the first theory of control-flow graphs from first principles, and use it to create an algorithm for automatically synthesizing many variants of control-flow graph generators from a language's operational semantics. Our approach first introduces a new algorithm for converting a large class of small-step operational semantics to an abstract machine. It next uses a technique called ``abstract rewriting'' to automatically abstract the semantics of a language, which is used both to directly generate a CFG from a program (``interpreted mode'') and to generate standalone code, similar to a human-written CFG generator, for any program in a language. We show how the choice of two \textit{abstraction} and \textit{projection} parameters allow our approach to synthesize several families of CFG-generators useful for different kinds of tools. We prove the correspondence between the generated graphs and the original semantics. We provide and prove an algorithm for automatically proving the termination of interpreted-mode generators. In addition to our theoretical results, we have implemented this algorithm in a tool called \mandate, and show that it produces human-readable code on two medium-size languages with $60-80$ rules, featuring nearly all intraprocedural control constructs common in modern languages. We then show these CFG-generators were sufficient to build two static analyses atop them. Our work is a promising step towards the grand vision of being able to synthesize all desired tools from the semantics of a programming language.
\end{abstract}

\begin{CCSXML}
<ccs2012>
<concept>
<concept_id>10003752.10003753.10010622</concept_id>
<concept_desc>Theory of computation~Abstract machines</concept_desc>
<concept_significance>500</concept_significance>
</concept>
<concept>
<concept_id>10003752.10010124.10010131.10010134</concept_id>
<concept_desc>Theory of computation~Operational semantics</concept_desc>
<concept_significance>500</concept_significance>
</concept>
<concept>
<concept_id>10011007.10010940.10010992.10010998.10011000</concept_id>
<concept_desc>Software and its engineering~Automated static analysis</concept_desc>
<concept_significance>300</concept_significance>
</concept>
</ccs2012>
\end{CCSXML}

\ccsdesc[500]{Theory of computation~Abstract machines}
\ccsdesc[500]{Theory of computation~Operational semantics}

\keywords{}  

\maketitle

\section{Introduction}
\label{sec:introduction}

%
%


Many programming tools use control-flow graphs, from compilers to model-checkers. They provide a simple way to order the subterms of a program, and are usually taken for granted. According to folklore, their definition is simple: ``control-flow graphs are an abstraction of control-flow."

\setlength{\columnsep}{8pt}
\begin{wrapfigure}{r}{0.21\textwidth}
\vspace{-1em}
\begin{mdframed}
\begin{center}
\vspace{-0.4em}
\begin{lstlisting}[language=Algol,basicstyle=\scriptsize\sffamily]
b := prec < 5;
if (b) then
  print("(")
else
  skip;
print(left);
print("+");
print(right);
if (b) then
  print(")")
else
  skip
\end{lstlisting}
\end{center}
\vspace{-0.5em}
\end{mdframed}
\vspace{-0.5em}
\caption{}
\label{fig:example-cfg-code}
\vspace{-1.5em}
\end{wrapfigure}

In fact, as we shall argue, CFGs are not well understood, and their definition is not so simple. Consider: Even for a single language, no two tools generate the same CFG for the same program, and we have found no prior attempt to define what it means for a given CFG to correctly abstract a program. Before diving deeper into the need for a theory of CFGs, let us illustrate the nuances of CFG-generation: \figref{fig:example-cfg-code} is a fragment of a pretty-printer. How would a CFG for it look? Here are three possible answers for different kinds of tools:

\begin{enumerate}
\item Compilers want small graphs that use little memory. A compiler may only give one node per basic block, giving the graph in \figref{fig:cfg-variants-bb}.

\item Many static analyzers give abstract values to the inputs and result of every expression. To that end, frameworks such as Polyglot \cite{nystrom2003polyglot} and IncA \cite{szabo2016inca} give two nodes for every expression: one each for entry and exit. \figref{fig:cfg-variants-expr} shows part of this graph.

\item Consider an analyzer that proves this program's output has balanced parentheses. It must show that there are no paths in which one if-branch is taken but not the other. This can be easily written using a path-sensitive CFG that partitions on the value of \texttt{b} (\figref{fig:cfg-variants-path-sensitive}). Indeed, in \secref{sec:mandate}, we build such an analyzer atop path-sensitive CFGs in under $50$ lines of code.
\end{enumerate}

\noindent All three tools require separate CFG-generators.

\todo{Why is the centering off?}
\begin{figure}
\vspace{-0.6em}
\centering
\begin{subfigure}[t]{0.25\textwidth}
\centering
\includegraphics[scale=0.29]{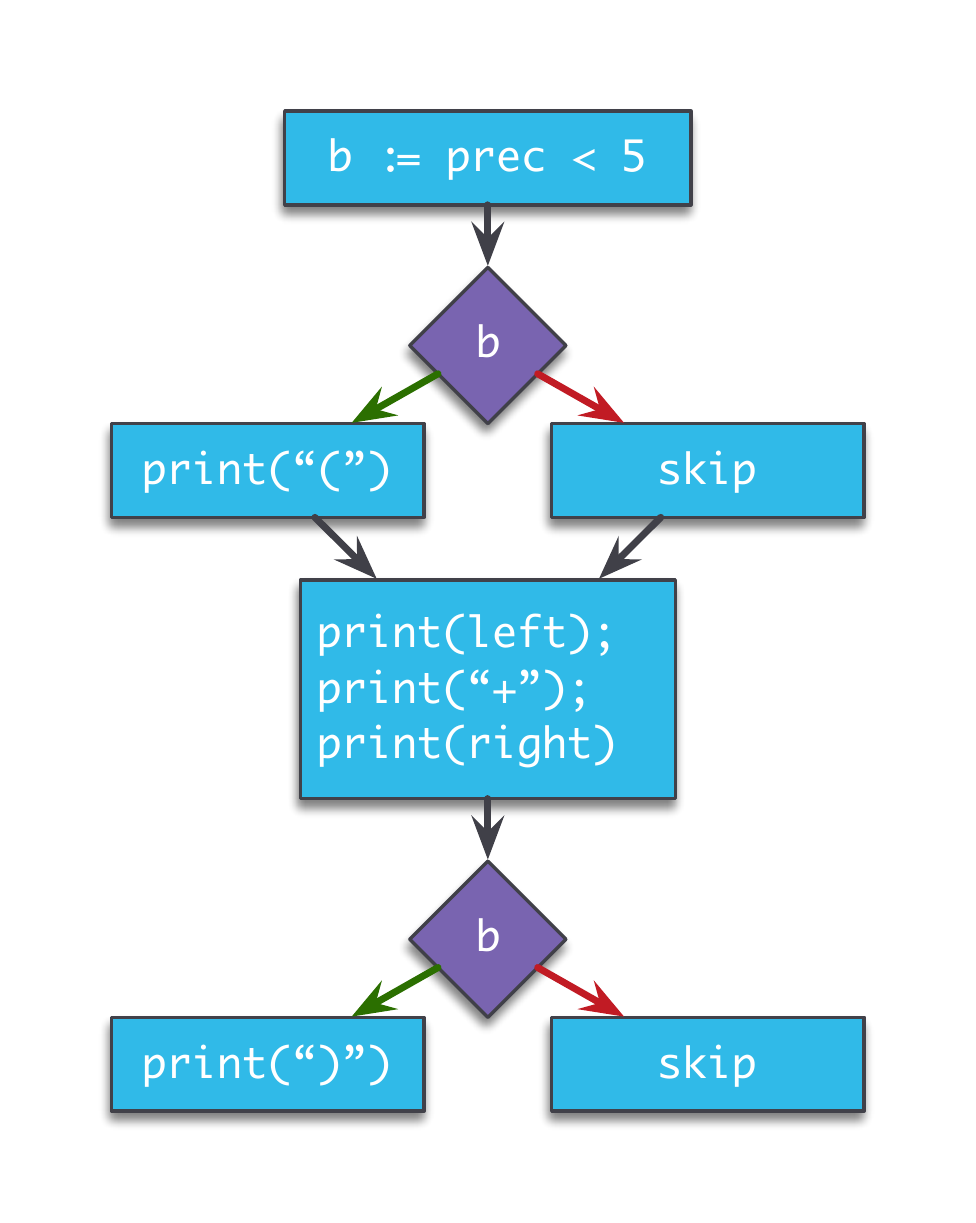}
\vspace{-0.75em}
\subcaption{}
\label{fig:cfg-variants-bb}
\end{subfigure}
~ 
\begin{subfigure}[t]{0.33\textwidth}
\centering
\includegraphics[scale=0.33]{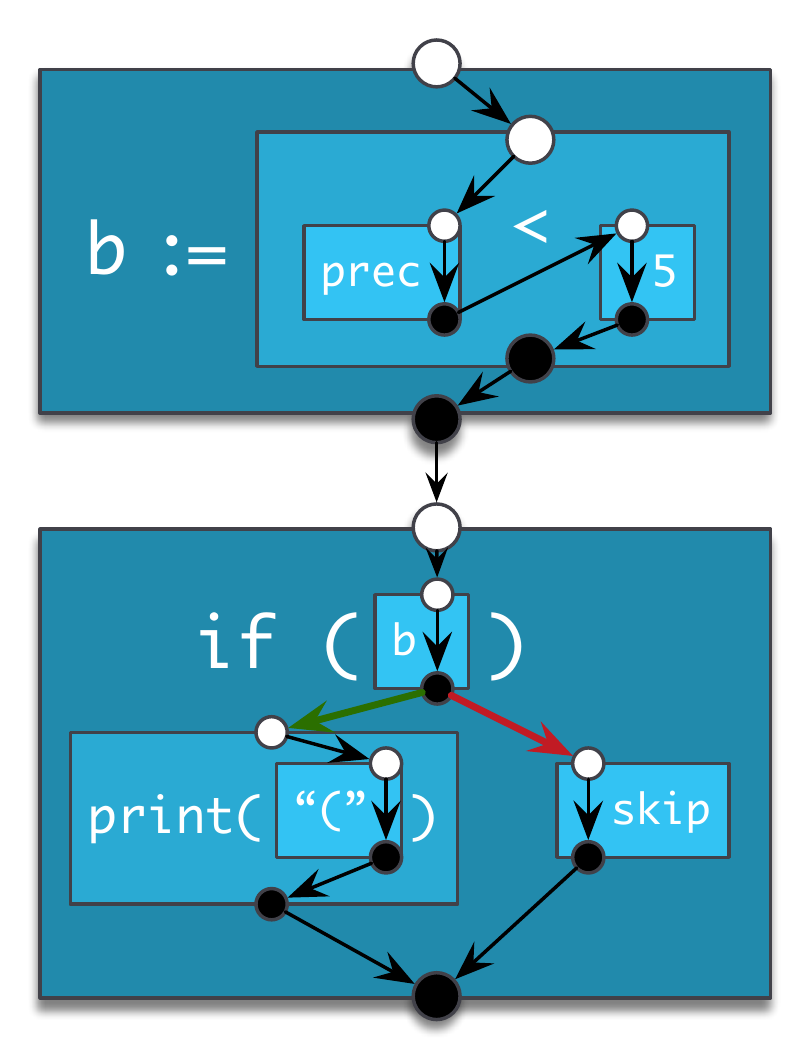}
\vspace{-0.75em}
\subcaption{}
\label{fig:cfg-variants-expr}
\end{subfigure}
~
\begin{subfigure}[t]{0.31\textwidth}
\centering
\includegraphics[scale=0.29]{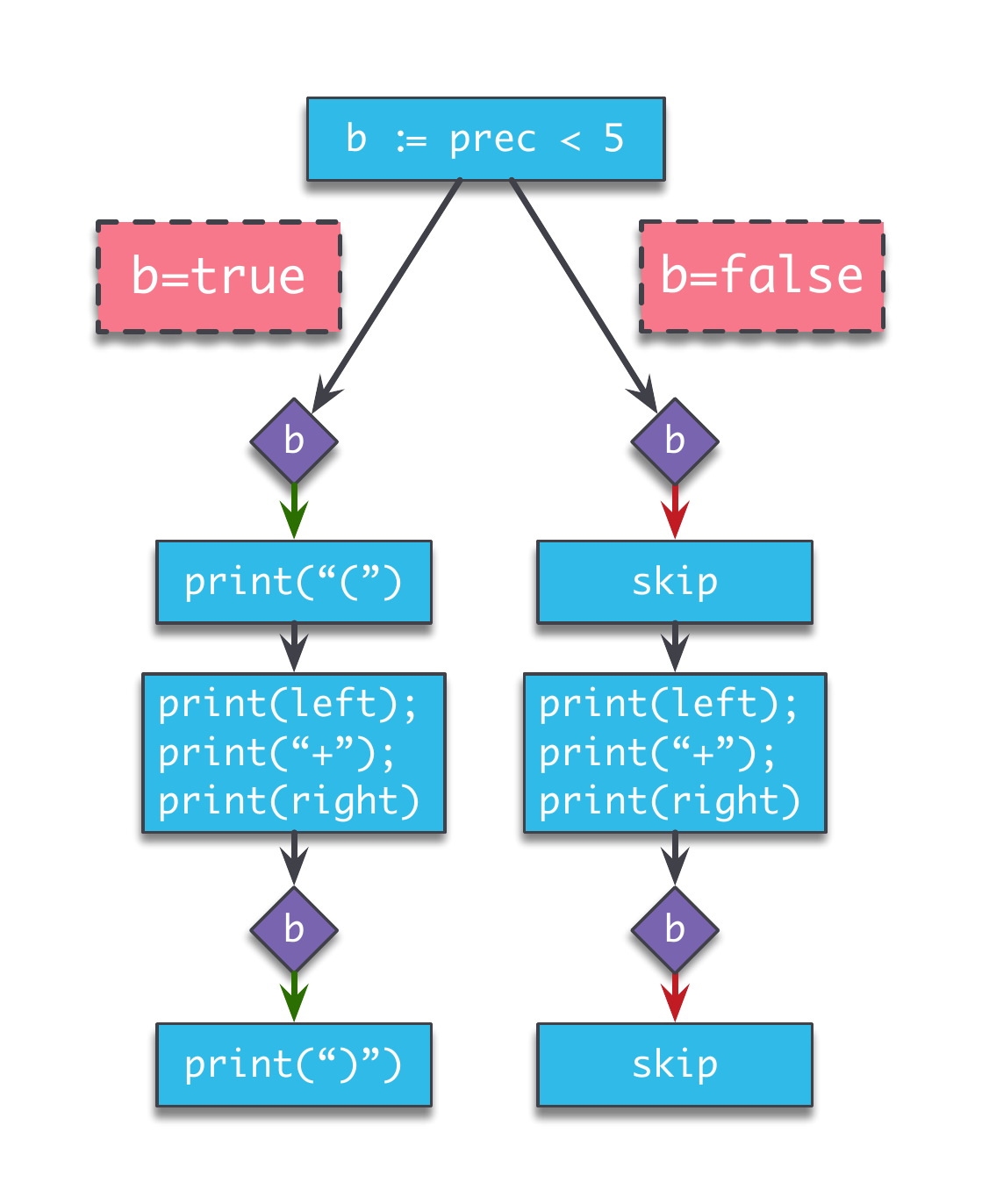}
\vspace{-0.75em}
\subcaption{}
\label{fig:cfg-variants-path-sensitive}
\end{subfigure}

\vspace{-0.75em}
\caption{Variants of control-flow graphs. Colors are for readability.}
\vspace{-1em}
\label{fig:cfg-variants}
\end{figure}

\paragraph{Whence CFGs?}

\todo{Martin doesn't like this paragraph. What to do? I think we're keeping it; no complaints yet.}

What if we had a formal semantics (and grammar) for a programming language? In principle, we should be able to automatically derive all tools for the language. In this dream, only one group needs to build a semantics for each, and then all tools will automatically become available --- and semantics have already been built for several major languages \cite{bogdanas2015k,park2015kjs,DBLP:conf/pldi/HathhornER15}. In this paper, we take a step towards that vision by developing the formal correspondence between semantics and control-flow graphs, and use it to automatically derive CFG generators from a large class of operational semantics.

While operational semantics define each step of computation of a program, the correspondence with control-flow graphs is not obvious. The ``small-step'' variant of operational semantics defines individual steps of program execution. Intuitively, these steps should correspond to the edges of a control-flow graph. In fact, we shall see that many control-flow edges correspond to the second half of one rule, and the first half of another. We shall similarly find the nodes of a control-flow graph correspond to neither the subterms of a program nor its intermediate values. Existing CFG generators skip these questions, taking some notion of labels or ``program points'' as a given (e.g.: \cite{shivers1991control}). We instead develop CFGs from first principles, and, after much theory, discover that \textbf{``a CFG is a projection of the transition graph of abstracted abstract machine states.''}

\paragraph{Abstraction and Projection and Machines}

The first insight is to transform the operational semantics into another style of semantics, the abstract machine, via a new algorithm. Evaluating a program under these semantics generates an infinite-state transition system with recognizable control-flow. Typically at this point, an analysis-designer would manually specify some kind of abstract semantics which collapses this system into a finite-state one. Our approach does this automatically by interpreting the concrete semantics differently, using an obscure technique called \textit{abstract rewriting}. From this reduced transition system, a familiar structure emerges: we have obtained our control-flow graphs!


Now all three variants of control-flow graph given in this section follow from different abstractions of the abstract machine, followed by an optional {\it projection}, or merging of nodes. With this approach, we can finally give a formal, proven correspondence between the operational semantics and all such variants of a control-flow graph.

\paragraph{\mandate: A CFG-Generator Generator}

The primary goal of this paper is to develop the first theory of CFGs from first principles. Yet our theory immediately lends itself to automation. We have implemented our approach in \mandate, the first control-flow-graph generator generator. \mandate takes as input the operational semantics for a language, expressed in an embedded Haskell DSL that can express a large class of systems, along with a choice of abstraction and projection functions. It then outputs a control-flow-graph generator for that language. By varying the abstraction and projection functions, the user can generate any of a large number of CFG-generators.

\mandate has two modes. In the \textit{interpreted mode}, \mandate abstractly executes a program with its language's semantics to produce a CFG. For cases where the control-flow of a node is independent from its context (e.g.: including variants (1) and (2) but not (3)), \mandate's \textit{compiled mode} can output a CFG-generator as a short program, similar to what a human would write (\figref{fig:front-page-result}).

\begin{figure*}
\begin{center}
\includegraphics[scale=0.65]{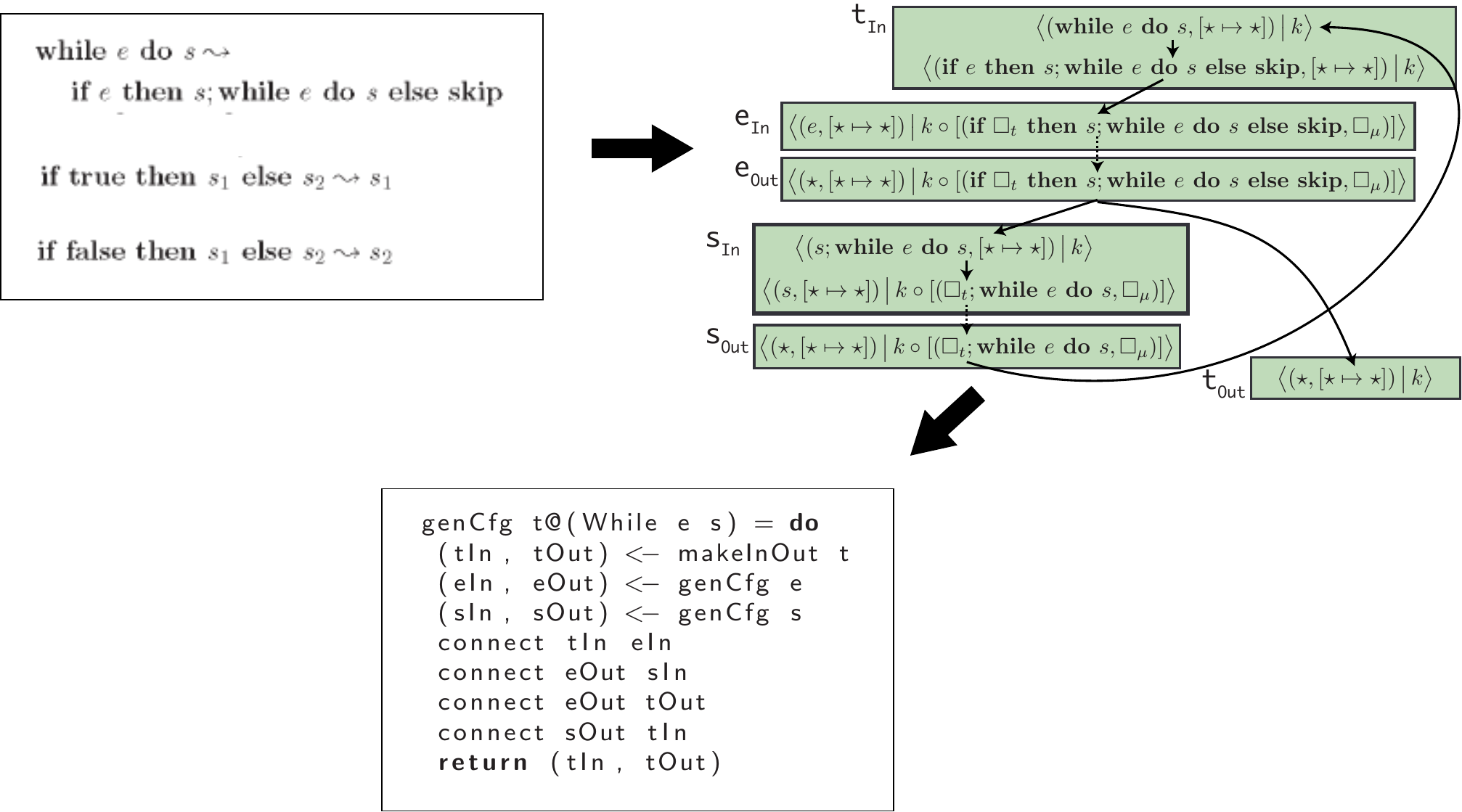}
\end{center}
\vspace{-1em}
\caption{(Top left) SOS rules for loops and conditionals (Top right) A graph-pattern generated from these rules, describing the control-flow of all while-loops (Bottom) Generated CFG-generation code}
\label{fig:front-page-result}
\end{figure*}

We have evaluated \mandate on several small languages, as well as two larger ones. The first is \textsc{Tiger}, an Algol-family language used in many compilers courses, and made famous by a textbook of \citet{DBLP:books/cu/Appel1998ml}. The second is \textsc{MITScript} \cite{mitscript}, a JavaScript-like language with objects and higher-order functions used in an undergraduate JIT-compilers course. While these are pedagogical languages without the edge-cases of C or SML, they nonetheless contain all common control-flow features except exceptions. And, since previous work on conversion of semantics features small lambda calculi, ours are the largest examples of automatically converting a semantics into a different form, by a large margin.

In summary, our work makes the following contributions:

\begin{enumerate}
\item A formal and proven correspondence between the operational semantics and many common variations of CFGs, giving the first from-first-principles theoretical explanation of CFGs.
\item An elegant new algorithm for converting small-step structural operational semantics into abstract machines.
\item An algorithm which derives many types of control-flow graph generators from an abstract machine, determined by choice of abstraction and projection functions, including standalone generators which execute without reference to the semantics (``compiled mode'').
\item An ``automated termination proof,'' showing that, if the compiled-mode CFG-generator terminates (run once per language/abstraction pair), then so does the corresponding interpreted-mode CFG generator (run once per program).
\item The first CFG-generator generator, \mandate, able to automatically derive many types of CFG generators for a language from an operational semantics for that language, and successfully used to generate CFG-generators for two rich languages. The generated CFG-generators were then used to build two static-analyzers.
\end{enumerate}

\noindent Further, our approach using \textit{abstract rewriting} offers great promise in deriving other artifacts from language semantics.

\section{Control-Flow Graphs for \IMP}
\label{sec:overview}

\begin{wrapfigure}{l}{0.58\textwidth}
\begin{center}
\vspace{-1.5em}
\includegraphics[scale=0.45]{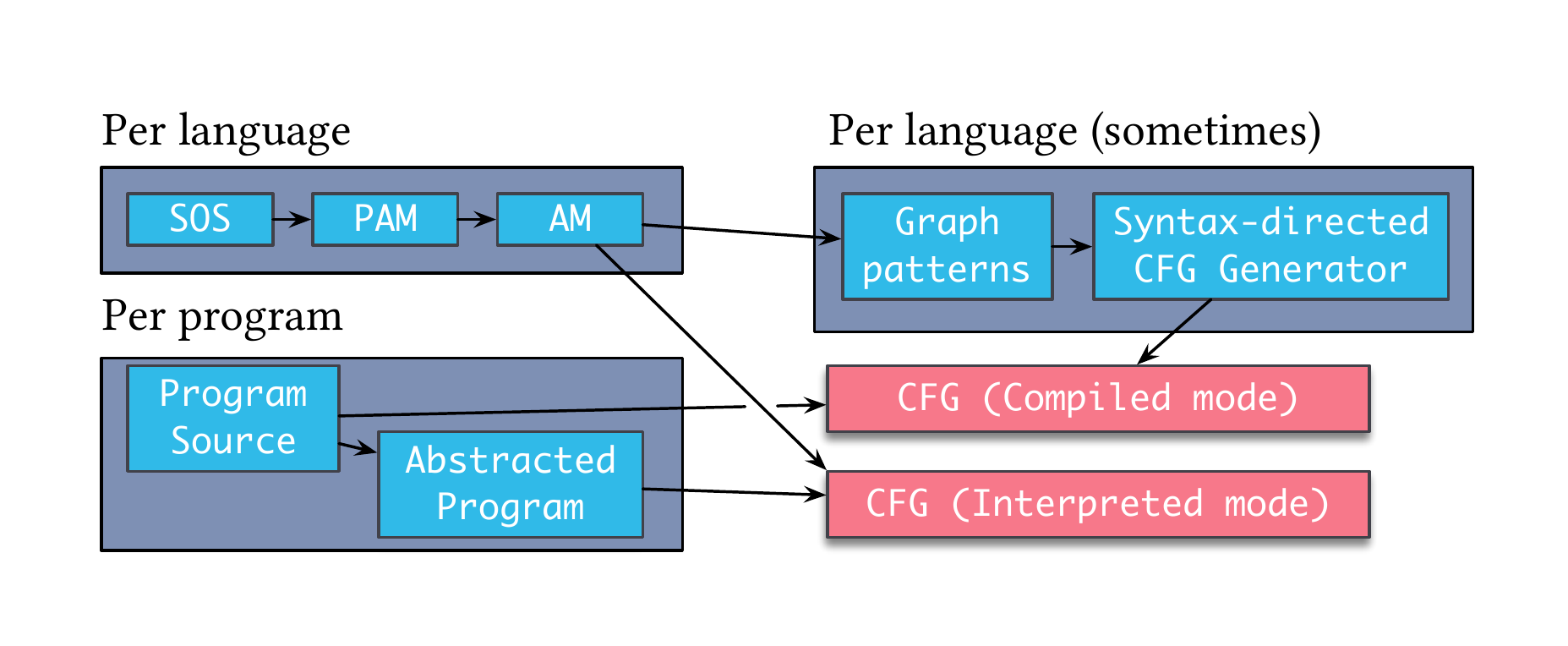}
\end{center}
\vspace{-2em}
\caption{Dataflow of our approach}
\label{fig:architecture}
\vspace{-0.6em}
\end{wrapfigure}

We shall walk through a simple example of generating a control-flow graph generator, using a simple imperative language called \IMP. \IMP features conditionals, loops, mutable state, and two binary operators. \figref{fig:imp-syntax} gives the syntax. In this syntax, we explicitly split terms into values and non-values to make the rules easier to write. We will do this more systematically in \secref{sec:sos}.

\begin{figure}
\vspace{-0.5em}
\begin{mdframed}
\begin{small}
\begin{center}
\vspace{-0.5em}
\begin{mathpar}
\hspace {-0.4cm}\begin{tabular}{rlcl}

\text{Variables} & \multicolumn{3}{l}{$x,y,\dots$ $\in \set{Var}$} \\
\text{Expr. Values} & $v$ & \gramdef & $n \in \set{Int} \sor \true \sor \false$ \\
\text{Expressions} & $e$ & \gramdef & $v \sor x \sor e + e \sor e < e$ \\
\text{\stmtvalname} & $\stmtvalvar$ & \gramdef & \impskip \\
\text{Statements} & $s$ & \gramdef & $\stmtvalvar \sor \seq{s}{s} \sor \while{e}{s}$ \\
 & & &  $\sor \assign{x}{e} \sor \ite{e}{s}{s} $

\end{tabular}
\end{mathpar}
\vspace{-0.5em}
\end{center}
\end{small}
\end{mdframed}
\vspace{-0.5em}
\caption{Syntax of \IMP}
\vspace{-0.7em}
\label{fig:imp-syntax}
\end{figure}

The approach proceeds in three phases, corresponding roughly to the large boxes in \figref{fig:architecture}. In the first phase (top-left box / \secref{sec:overview-sos-to-am}), we transform the semantics of \IMP into a form that reveals the control flow. In the second phase (bottom-left box / \secref{sec:overview-cfg-gen}), we show how to interpret these semantics with abstract rewriting \cite{bert1993abstract} to obtain CFGs. In the finale (top-right box / \secref{sec:overview-graph-pat}), we show how to use abstract rewriting to discover facts about all \IMP programs, resulting in human-readable code for a CFG-generator.

\subsection{Getting Control of the Semantics}
\label{sec:overview-sos-to-am}

\paragraph{Semantics} The language has standard semantics, so we only show a few of the rules, given as structural operational semantics (SOS). Each rule relates an old term and environment $(t, \mu)$ to a new one $(t\p, \mu\p)$ following one step of execution. As our first \textbf{running example}, we use the rules for assignments. We will later introduce our two other running examples: addition and while-loops. Together, these cover all features of semantic rules: environments, external semantic functions, branching, and back-edges.

\label{sec:contains-assn-rules}
\begin{center}
\small{
\begin{tabular}{c}

%
  \infer[AssnCong]
  {(\assign{x}{e},\mu) \sosto (\assign{x}{e\p}, \mu\p)}
 {(e, \mu) \sosto (e\p, \mu\p)}
  \\[1.0em]
    \infer[AssnEval]
    {(\assign{x}{v}, \mu) \sosto (\impskip, \upd{\mu}{x}{v})}
    {}
  
\end{tabular}
}
\end{center}


\noindent These rules give the basic mechanism to evaluate a program, but don't have the form of stepping from one subterm to another, as a control-flow graph would. So, from these rules, our algorithm will automatically generate an \textit{abstract machine} (AM). This machine acts on states $\am{(t,\mu)}{K}$. $K$ is the \textit{context} or \textit{continuation}, and represents what the rest of the program will do with the result of evaluating $t$. $K$ is composed of a stack of \textit{frames}. While the general notion of frames is slightly more complicated (\secref{sec:pam}), in most cases, a frame can be written as e.g.: $\shortcxtfr{(\assign{x}{\hole_t}, \hole_\mu)}$, which is a frame indicating that, once the preceding computation has produced a term and environment $(t\p, \mu\p)$, the next step will be to evaluate $(\assign{x}{t\p}, \mu\p)$. Our algorithm generates the following rules. These match the textbook treatment of AMs \cite{felleisen2009semantics}.

\vspace{-0.5em}
\label{sec:contains-imp-machine}
\begin{small}
\begin{mathpar}

  \begin{array}{l@{\ \amto\ }l}
  
  
    
    \am{(\assign{x}{e}, \mu)}{k} & \am{(e, \mu)}{k \frcomp \shortcxtfr{(\assign{x}{\hole_t}, \hole_\mu)}} \\
    \am{(v, \mu)}{k\frcomp \shortcxtfr{(\assign{x}{\hole_t}, \hole_\mu)}}
    & \am{(\impskip, \upd{\mu}{x}{v})}{k} \\
    
  \end{array}
\end{mathpar}
\end{small}

\noindent The first AM rule, on seeing an assignment $\assign{x}{e}$, will focus on $e$. Other rules, not shown, then reduce $e$ to a value. The second then takes this value and evaluates the assignment.

While these rules have been previously hand-created, this work is the first to derive them automatically from SOS. These two SOS rules become two AM rules. A naive interpretation is that the first SOS rule corresponds to the first AM rule, and likewise for the second. After all, the left-hand sides of the first rules match, as do the right-hand sides of the second. But notice how the RHSs of the firsts do not match, nor do the LHSs of the seconds. The actual story is more complicated. This diagram gives the real correspondence:

\begin{center}
\vspace{0.2em}
\includegraphics[scale=0.8]{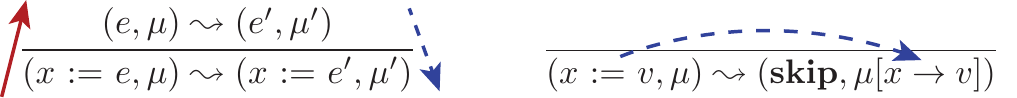}
\vspace{0.2em}
\end{center}

The first AM rule is simple enough: it corresponds to the solid arrow, the first half of the first SOS rule. But the second AM rule actually corresponds to the two dashed arrows, \textsc{AssnEval} and the second half of \textsc{AssnCong}. We shall find that jumping straight from SOS to AM skips an intermediate step, which treats each of the three arrows separately.

This fusing of \textsc{AssnEval} and \textsc{AssnCong} happens because only two actions can follow the second half of \textsc{AssnCong}: \textsc{AssnEval}, and the first half of \textsc{AssnCong} --- which is the inverse of the second half. Hence, only the pairing of \textsc{AssnCong} with \textsc{AssnEval} is relevant, and two AM rules are enough to describe all computations. And the second AM rule actually does two steps in one: it plugs $(v,\mu)$ into $\shortcxtfr{(\assign{x}{\hole_t}, \hole_\mu)}$ to obtain $(\assign{x}{v}, \mu)$, and then evaluates this result. So, by fusing these rules, the standard presentation of AM obscures the multiple underlying steps, hiding the correspondence with the SOS.

This insight --- that SOS rules can be split into several parts --- powers our algorithm to construct abstract machines. The algorithm first creates a representation called the \textit{phased abstract machine}, or PAM, which partitions the two SOS rules into three parts, corresponding to the diagram's three arrows, and gives each part its own rule. The algorithm then fuses some of these rules, creating the final AM, and completing the first stage of our CFG-creation pipeline.

\begin{wrapfigure}{R}{0.42\textwidth}
\vspace{-0.7em}
\begin{mdframed}
\vspace{-0.2em}
\begin{center}
\hspace{-042cm}\begin{lstlisting}[language=Haskell,basicstyle=\footnotesize\sffamily]
valueIrrelevance t =
  mapTerm valToStar t
 where
  valToStar (Val _ _) = ValStar
  valToStar t         = t
\end{lstlisting}
\end{center}
\vspace{-0.2em}
\end{mdframed}
\vspace{-0.8em}
\caption{}
\vspace{-1em}
\label{fig:val-irrelevance-code}
\end{wrapfigure}

\secref{sec:pam} explains PAM in full, while \secref{sec:sos-to-pam} and \secref{sec:pam-to-am} give the algorithm for creating abstract machines. \secref{sec:correctness-body} shows correctness.

\subsection{Run Abstract Program, Get CFG}
\label{sec:overview-cfg-gen}

The AM rules show how focus jumps into and out of an
assignment during evaluation --- the seeds of control-flow.  But these
transitions are not control-flow edges; there are still a few
important differences. The AM allows for infinitely-many
possible states, while control-flow graphs have finite numbers of
states, potentially with loops. The AM
executes deterministically, with every state stepping into one other
state. Even though we assume determistic languages, even for those,
the control-flow graph may branch. We will see how abstraction solves
both of these issues, turning the AM into the interpreted-mode control-flow graph generator.

To give a complete example, we'll also need to evaluate an expression. Here is the rule for variable lookups, which looks up $y$ in the present environment:

\vspace{-1em}
\begin{small}
\begin{center}
\begin{mathpar}
    \am{(y, \mu)}{k}\ \amto\  \am{(\mu(y), \mu)}{k} \\
\end{mathpar}
\end{center}
\end{small}
\vspace{-1em}

Now consider the statement $\assign{x}{y}$. It can be executed with an infinite number of environments: the starting configuration $(\assign{x}{y}, [y\mapsto 1])$ results in $(\impskip, [y\mapsto 1, x\mapsto 1])$; $(\assign{x}{y}, [y\mapsto 2])$ yields $(\impskip, [y\mapsto 2, x\mapsto 2])$; etc.

To yield a control-flow graph, we must find a way to compress this infinitude of possible states into a finite number. The value-irrelevance abstraction, given by the code in \figref{fig:val-irrelevance-code}, replaces all values with a single \textit{abstract value} representing any of them: $\star$. Under this abstraction, all starting environments for this program will be abstracted into the single environment $[y\mapsto\star]$.

In a typical use of abstract interpretation, at this point a new abstract semantics must be manually defined in order to define executions on the abstract state. However, with this kind of syntactic abstraction, our system can interpret the exact same AM rules on this abstract state, a process called \textit{abstract rewriting}. Now, running in fixed context $K$, there is only one execution of this statement.

\vspace{0.6em}
\begin{small}
\qquad \begin{tabular}{crcl}
      &  $\am{(\assign{x}{y},[y\mapsto\star])}{K}$ & $\amto$ &  $\am{(y, [y\mapsto\star])}{K \frcomp\shortcxtfr{(\assign{x}{\hole_t}, \hole_\mu)}}$ \\[0.5em]
 $\amto$  &  $\am{(\star, [y\mapsto\star])}{K \frcomp\shortcxtfr{(\assign{x}{\hole_t}, \hole_\mu)}}$  & $\amto$ &   $\am{(\impskip, [y\mapsto\star, x\mapsto\star])}{K }$
 \end{tabular}
 \end{small}
\vspace{0.6em}

\noindent This abstract execution is divorced from any runtime values, yet it still shows the flow of control entering and exiting the assignment and expression--- exactly as in the expression-level control-flow graph from \figref{fig:cfg-variants-expr}. And thus we can take these abstract states to be our control-flow nodes. The CFG is an abstraction of the transitions of the abstract machine.

Note that, because there are only finitely-many abstract states, this also explains loops in control-flow graphs: looping constructs lead to repeated states, which leads to back-edges in the transition graph. And abstractions also account for branching. Consider the rules for conditionals:

\vspace{-1em}
\begin{small}
\begin{eqnarray*}
\am{(\true, \mu)}{k\frcomp \shortcxtfr{(\ite{\hole_t}{s_1}{s_2},\hole_\mu)}}\,\amto\,\am{(s_1, \mu)}{k} \\
\am{(\false, \mu)}{k\frcomp \shortcxtfr{(\ite{\hole_t}{s_1}{s_2},\hole_\mu)}}\,\amto\,\am{(s_2, \mu)}{k} \\
\end{eqnarray*}
\end{small}
\vspace{-2em}

\noindent Under the value-irrelevance abstraction, the condition of an if-statement will evaluate to a configuration of the form $(\star, \mu)$. Because $\star$ can represent both $\true$ and $\false$, \textbf{both} of the above rules will match. This gives control-flow edges from the if-condition into both branches, exactly as desired.

The value-irrelevance abstraction gives an expression-level CFG. But it is not the only choice of abstraction. \secref{sec:cfg-abstraction} presents the CFG-derivation algorithm, and shows how other choices of abstraction lead to other familiar control-flow graphs, and also introduces projections, which give the CFG-designer the ability to specify which transitions are ``internal'' and should not appear in the CFG.

\subsection{A Syntax-Directed CFG-Generator}
\label{sec:overview-graph-pat}

The previous section showed how to abstract away the inputs and
concrete values of an execution, turning a program into a CFG. The
per-program state-exploration of this algorithm is
necessary for a path-sensitive CFG-generator, which may create an
arbitrary number of abstract states from a single AST node. But for abstractions which
discard all contextual information, only a few additional small ingredients
are needed to generate a single artifact that describes the control-flow of all instances of a
given node-type. This is done once per language, yielding the
compiled-mode CFG-generator. We demonstrate how this works for while-loops, showing how our
algorithm can combine many rules to infer control-flow, abstracts away
extra steps caused by the internal details of the semantics, and can
discover loops in the control-flow even though they are not explicit
in the rules.

The semantics of while-loops in \IMP are given in terms of other language constructs,
by the rule $\while{e}{s} \sosto
\ite{e}{(\seq{s}{\while{e}{s}})}{\impskip}$. Consider an AM state evaluating an arbitrary while-loop $\while{e}{s}$
in an arbitrary context $k$, with an arbitrary abstract environment $\mu$. Such a
state can be written $\am{(\while{e}{s}, \mu)}{k}$. Any such
$\mu$ can be represented by the ``top'' environment
$[\star\mapsto\star]$. This means that all possible transitions from
any while-loop can be found by finding all rules that could match
anything of the form $\am{(\while{e}{s},
  [\star\mapsto\star])}{k}$. Repeatedly expanding these transitions
results in a \textit{graph pattern} describing the control-flow for
every possible while-loop.

However, merely searching for matching rules will not result in a
finite graph, because of states like  $\am{(e,
  [\star\mapsto\star])}{k}$. These states, which represents the intent
to evaluate the unknown subterm $e$, can match rules for any
expression. Instead, we note that, for any given $e$, other rules
(corresponding to other graph patterns) would evaluate its
control-flow, and their results can all be soundly overapproximated by
assuming $e$ is eventually reduced to a value. Hence, when the search
procedure encounters such a state, it instead adds a ``transitive
edge'' to a state $\am{(\star, [\star\mapsto\star])}{k}$. With this
modification, the search procedure finds only $8$ unique states for
while-loops. It hence terminates in the graph pattern of \figref{fig:front-page-result} (with dotted lines for transitive edges), which describes the control-flow of all possible while-loops. From this pattern, one could directly generate a CFG fragment for any given while-loop by instantiating $e$, $s$, and $k$. In combination with the graph patterns for all other nodes, this yields a control-flow graph with a proven correspondence to the original program.

But, from these graph patterns, it is also straightforward to output
code for a syntax-directed CFG-generator similar to what a human would
write. Our code-generator traverses this graph pattern,
identifying some states as the entrance and exit nodes of the entire
while-loop and its subterms. All other states are considered internal
steps which get merged with the enter and exit states (via a
\textit{projection}), resulting in a few ``composite''
states. \figref{fig:front-page-result} shows how the code-generator
groups and labels the states of this graph pattern as well as the generated code\footnote{This is verbatim \mandate output except that, in actual output, (1) the \code{connect} statements are in no particular order, and (2) the actual return value is \code{([tIn], [tOut])}, as, in general, AST nodes such as conditionals may have multiple final CFG nodes.}.

After many steps transforming and analyzing the semantics of \IMP, the
algorithm has finally boiled down all aspects of the control flow into
concise, human-readable code --- for an expression-level
CFG-generator. To generate a statement-level CFG-generator, the user
must merely re-run the algorithm again with a different
abstraction. For while-loops, the resulting pattern and code are
similar to those of \figref{fig:front-page-result}, except that they skip the evaluation
of $e$.

The last few paragraphs already gave most of the details of graph-pattern construction. \secref{sec:syntax-directed}
gives the remaining details, while
\appref{app:graph-pattern-correctness} proves the algorithm's correctness.

\section{From Operational Semantics to Abstract Machines}
\label{sec:sos-to-am}

The first step in our algorithm is to convert the structural operational semantics for a language into an abstract machine which has a clear notion of control-flow. Surprisingly, no prior algorithm for this exists (see discussion in \secref{sec:really-explains-danvy} and \appref{app:related-work}). This section hence presents \textbf{the first algorithm to convert SOS to AM}, which works on any language satisfying the conditions in \secref{sec:assumptions}. Our algorithm is unique in its use of a
new style of semantics as an intermediate form, the phased abstract
machine, which simulates a recursive program running the SOS rules. We
believe this formulation is particularly elegant and leads to simple
proofs, while being able to scale to the realistic languages
\textsc{Tiger} and \textsc{MITScript}.

There is a lot of notation required first. \secref{sec:terms-languages} gives a notation for all programming languages, while  \secref{sec:sos} gives an alternate notation for structural operational semantics, one more amenable to inductive transformation. \secref{sec:pam} and \secref{sec:am} describe the phased and orthodox abstract machines, while \secref{sec:sos-to-pam} and \secref{sec:pam-to-am} give the conversion algorithm. Correctness results are provided in \secref{sec:correctness-body} and \appref{app:correctness}.

\subsection{Setting and Assumptions}
\label{sec:assumptions}

We take as our starting point a transition relation on term/state pairs $(t, \mu) \sosto (t', \mu')$, defined using rules in the variant of structural operational semantics (SOS) to be given in \secref{sec:sos}. From this, our algorithms shall construct the Phased Abstract Machine, the Abstract Machine, and finally a CFG-generator. The middle step of PAM-to-AM conversion does the work of discovering which rules may follow which other rules, and imposes some additional requirements on the language semantics.

\begin{assumption}[Sanity of Values] All terms can be classified into either \textit{values} or \textit{nonvalues}, based only on the root node. For a rule, $(t,\mu) \sosto \text{rhs}$, the pattern $t$ must not match a value. If $t$ is a nonvalue, there are $\mu,t\p,\mu\p$ such that $(t,\mu)\sosto (t\p,\mu\p)$.
\end{assumption}

\begin{assumption}[Determinism]
For any $t, \mu$, there is at most one $t\p, \mu\p$ such that $(t,\mu) \sosto (t\p, \mu\p)$.
\end{assumption}

\begin{assumption}[No Up-Down Rules and All Up-Rules Invertible]
Discussed in \secref{sec:pam-to-am}.
\end{assumption}

The Sanity of Values assumption is useful in correctness proofs, but imposes no real burden; a language definition can generally be tweaked to satisfy it.

The Determinism assumption may seem strong, but it is in fact necessary because \textbf{nondeterministic languages lack a clear notion of control-flow}. Nondeterminism means that consecutive steps may occur in distant parts of a term: evaluating $((a * b) + (c - d))+(e / f)$ may e.g.: evaluate first the multiplication on the left, then the division on the right, and then the subtraction on the left. The control flow of such programs cannot be given as a transition system (i.e.: a graph), which assumes a single program point. Instead, it is properly given as a Petri Net, called the ``sea of nodes'' representation \cite{click1995combining}. We leave generating a sea-of-nodes-generator to future work.

The last assumption is most easily defined on the PAM, and discussed in the corresponding section. Its first part, ``No Up-Down Rules,'' translates easily to SOS: it combines the superficial constraint that any rule which recursively evaluates a subterm must do any additional processing before the recursion, and the stronger constraint that there can be no single step of a term which simultaneously steps two independent subterms. The ``All Up-Rules Invertible'' assumption is less easy to translate, but it effectively means that successive steps in the SOS will focus on the same subterm, which is key to identifying the present ``program point.''

\subsection{Terms and Languages}

\label{sec:terms-languages}

Our presentation requires a uniform notation for terms in all languages. \figref{fig:term-gram} gives this notation, describing both concrete terms as well as patterns used in rewrite rules. Throughout this paper, we use the notation $\overline{\cdot}$ to represent lists, so that, e.g.: $\overline{\text{term}}$ represents the set of lists of terms.

\begin{wrapfigure}{L}{0.51\textwidth}
\begin{mdframed}
\begin{center}
\begin{small}
\vspace{-0.2em}
\hspace{-0.4cm}\begin{tabularx}{\textwidth}{rcX}

$\text{const}$    & \gramdef & $n \in \set{Int} \sor \text{str} \in \set{String} $ \\
\text{sym} & \gramdef & $+, <, \textbf{if},\dots$ $\in \set{Symbol}$ \\
\text{mt} & \gramdef & $\mtval \sor \mtnonval \sor \mtall$ \hfill{(Match Types)} \\
& & $a,b,c,\dots$ $\in \set{Var}$ \hfill{(Raw Vars)} \\
$x$ & \gramdef & $a_{\text{mt}}$ \hfill{(Pattern Vars)}\\
$\text{term}$   & \gramdef & $ \text{nonval}(\text{sym}, \overline{\text{term}}) $ \\
                & \gramalt & $ \text{val}(\text{sym}, \overline{\text{term}}) $ \\
                & \gramalt & $ \text{const} $ \\
                & \gramalt & $ x $ \\
$\mu$             & \gramdef & $\set{State}_l$  \hfill{(Reduction State)} \\
$c$             & \gramdef & $(\text{term}, \mu)$ \hfill{(Configurations)}

\end{tabularx}
\vspace{-0.5em}
\end{small}
\end{center}
\end{mdframed}
\caption{Universe of terms.}
\label{fig:term-gram}
\vspace{-1em}
\end{wrapfigure}

A typical presentation of operational semantics will have variable names like $v_1$ and $e\p$, where $v_1$ implicitly represents a \textit{value} which cannot be reduced, while $e\p$ represents a \textit{nonvalue} which can. We formalize this distinction by marking each node either value or nonvalue, and giving each variable a \textit{match type} controlling whether it may match only values, only nonvalues, or either.



Each variable is specialized with one of three \keyterm{match types}. Variables of the form $a_\mtval$, $a_\mtnonval$, and $a_\mtall$ are called \keyterm{value}, \keyterm{nonvalue}, and \keyterm{general} variables respectively. Value variables may only match $\text{val}$ constructors and constants; nonvalue variables match only $\text{nonval}$ constructors; general variables match any. 


We use a shorthand to mimic typical presentations of semantics. We will use $e$ to mean a nonvalue variable, $v$ or $n$ to mean a value variable, and $t$ or $x$ for a general variable. But variables in a right-hand side will always be general variables unless said otherwise. For instance, in $e \sosto e\p$, $e$ is a nonvalue variable, while $e\p$ is a general variable.

Each internal node is tagged either \textit{val} or \textit{nonval}. For example, $1+1$ is shorthand for the term $\text{nonval}(+,\text{val}(\text{IntLit},1),\text{val}(\text{IntLit},1))$. The \IMP statement $\assign{x}{1}$ may ambiguously refer to either the concrete term $\text{nonval}(:=,``x",\text{val}(\text{IntLit},1))$ or to the pattern $\text{nonval}(:=,x_\mtall,\text{val}(\text{IntLit},1))$, but should be clear from context. Others' presentations commonly have a similar ambiguity, using the same notation for patterns and terms.

Each language $l$ is associated with a \keyterm{reduction state} $\set{State}_l$ containing all extra information used to evaluate the term. For example, $\set{State}_\IMP$ is the set of \textit{environments}, mapping variables to values. Formally:

\vspace{-1.7em}
\begin{small}
\begin{mathpar}
\begin{tabular}{lcl}
$\text{env}$ & \gramdef & $\emptyset \sor \upd{\text{env}}{\text{str}}{v}$ \\
\end{tabular}
\end{mathpar}
\end{small}
\vspace{-1.3em}

\noindent Environments are matched using associative-commutative-idempotent patterns \cite{baader1999term}, so that \eg the pattern $\upd{x}{``y"}{v}$ matches the environment $\upd{\upd{\emptyset}{``y"}{1}}{``z"}{2}$. The latter environment is abbreviated $[z\mapsto 2, y\mapsto 1]$. The environment-extension operator itself is not commutative. Specifically, it is right biased, e.g.: $\upd{\upd{\emptyset}{``z"}{1}}{``z"}{2} = \upd{\emptyset}{``z"}{2}$.

Finally, the basic unit of reduction is a \textit{configuration}, defined $\set{Conf}_l = \set{term} \times \set{State}_l$.  We say that configuration $c=(t,\mu)$ is a value if $t$ is a value.

\subsection{Straightened Operational Semantics}

\label{sec:sos}

%

Rules in structural operational semantics are ordinarily written like logic programs, allowing them to be used to run programs both forward and backwards, and allowing premises to be proven in any order. However, in most rules, there are dependences between the variables that effectively permit only one ordering. In this section, we give an \textbf{alternate syntax} for small-step operational semantics rules, which makes this order explicit. This is essentially the conversion of the usual notation into A-normal form \cite{flanagan1993essence}. 

\begin{wrapfigure}{l}{0.4\textwidth}
\vspace{-0.7em}
\begin{mdframed}
\begin{small}
\begin{center}
\vspace{-0.2em}
\hspace{-0.4cm}\begin{tabular}{rcl}
\text{rule} & \gramdef & $c \sosto \text{rhs}$ \\
\text{rhs}  & \gramdef & $\build{c}$ \\
            & \gramalt & $\letstepto{c}{c}{\text{rhs}} $ \\
            & \gramalt & $\letcomp{c}{\set{semfun}(\overline{c})}{\text{rhs}}$ \\
\end{tabular}
\end{center}
\end{small}
\end{mdframed}
\vspace{-0.7em}
\caption{Notation for SOS}
\label{fig:sos-gram}
\vspace{-0.7em}
\end{wrapfigure}

The most immediate benefit of the Straightened Operational Semantics notation is that it orders the premises of a rule. One can also gain the ordering property by imposing a restriction on rules without a change in notation: Ibraheem and Schimdt's ``L-attributed semantics'' is exactly this, but for big-step semantics \cite{ibraheem1997adapting}. But there is an additional advantage of this new notation: it has an inductive structure, which allows defining recursive algorithms over rules.

\figref{fig:sos-gram} defines a grammar for SOS rules. These rules collectively define the step-to relation for a language $l$, $\sosto_l$. These rules are relations rather than functions, as they may fail and may have multiple matches. $R(A)$ denotes the image of a relation, i.e.: $R(x)$ refers to any $y$ such that $x \mathrel{R} y$.


A rule matches on a configuration, potentially binding several pattern variables. It then executes a right-hand side. Rule right-hand sides come in three alternatives. The two primary ones are that a rule's right-hand side may construct a new configuration from the bound variables, or may recursively invoke the step-to relation, matching the result to a new pattern. For example, the \textsc{AssnCong} rule from \secref{sec:contains-assn-rules} would be rendered as:

\vspace{-1em}
\begin{small}
\begin{mathpar}
(\assign{x}{e},\mu) \sosto \letstepto{(e,\mu)}{(e\p,\mu\p)}{(\assign{x}{e\p}, \mu\p)}
\end{mathpar}
\end{small}
\vspace{-1em}

\noindent As a third alternative, it may invoke an external semantic function. Semantic functions are meant to cover everything in an operational semantics that is not pure term rewriting, e.g.: arithmetic operations. Each language has its own set of allowed semantic functions.

\begin{definition}
Associated with each language $l$, there is a set of \textbf{semantic functions} $\set{semfun}_l$. Each is a relation\footnote{We define semantic ``functions" to actually be relations (i.e.: partial nondetermistic functions) so that this definition can be reused for abstract interpretation in \secref{sec:abstract-rewriting}.}
$R$ of type $\overline{\set{Conf}}_l \times \set{Conf}_l$, subject to the restriction that (1) if $\overline{c} \mathrel{R} d$, then $\overline{c}$ and $d$ are values, and (2) for each $\overline{c} \in \set{Conf}_l$, there are only finitely many $d$ such that $\overline{c} \mathrel{R} d$.
\end{definition}

For instance, there are two semantic functions for the \IMP language: $\set{semfun}_\IMP=\{+,<\}$. Both are partial functions, only defined on arguments of the form $((n_1,\mu_1),(n_2, \mu_2))$. Since these functions only act on their term arguments and ignore their $\mu$ arguments, we invoke them using the abbreviated notation

\begin{small}
\begin{mathpar}
\letcomp{n_3}{+(n_1,n_2)}{\text{rhs}}
\end{mathpar}
\end{small}

\noindent which is short for  $\letcomp{(n_3, \mu_3)}{+((n_1,\mu_1), (n_2, \mu_2))}{\text{rhs}}$, where $\mu_1,\mu_2$ are dummy values, and $\mu_3$ is an otherwise unused variable. 

Semantic functions give straightened-operational-semantics notation a lot of flexibility. They can be used to encode side-conditions, e.g.: ``$\letcomp{\true}{\text{isvalid}(x)}{c}$'' would fail to match if $x$ is not valid. They may include external effects such as I/O. As an example using semantic functions, the rule

\begin{small}
\begin{mathpar}
\infer[AddEval]{(v_1+v_2,\mu) \sosto (n,\mu)}{n = v_1+v_2}
\end{mathpar}
\end{small}

\noindent would be rendered as

\begin{small}
\begin{mathpar}
(v_1+v_2,\mu) \sosto \letcomp{n}{+(v_1,v_2)}{(n, \mu)}
\end{mathpar}
\end{small}

\noindent where the first occurrence of ``$+$'' refers to a nonval node of the object language, and the second occurrence refers to a semantic function of the meta-language. 

Overall, this notation gives a simple inductive structure to SOS rules. We will see in \secref{sec:sos-to-pam} how it makes the SOS-to-PAM conversion straightforward. And it loses very little generality from the conventional SOS notation: it essentially only assumes that the premises can be ordered. (And converting this notation back into the conventional form is easy: turn all RHS fragments into premises.)

\subsection{The Phased Abstract Machine}
\label{sec:pam}

In an SOS, a single step may involve many rules, and
each rule may perform multiple computations. The phased abstract
machine (PAM) breaks each of these into distinct steps. In doing so,
it simulates how a recursive functional program would interpret the
operational semantics.

\begin{figure}
\begin{mdframed}
\small{
\begin{center}
\begin{tabular}{rcl}
$\text{frame}$ & \gramdef & $\cxtfr{c}{\text{rhs}}$ \\
$K$ & \gramdef & $\halt$ \hfill{(Contexts)} \\
& $\sor$ & $K \frcomp \text{frame}$ \\
& $\sor$ & $k$ \hfill{(Context Vars)} \\
$\updown$ & \gramdef & $\up \;\sor\; \down$ \hfill{(Phase)} \\
$\text{pamState}$ & \gramdef & $\pam{c}{K}{\updown}$ \\
$\text{pamRhs}$ & \gramdef & $\text{pamState}$ \\
& $\sor$ & $\letcomp{c}{\set{semfun}(\overline{c})}{\text{pamRhs}}$ \\
$\text{pamRule}$ & \gramdef & $\text{pamState} \pamto \text{pamRhs}$
\end{tabular}
\end{center}
}
\end{mdframed}
\vspace{-0.1cm}
\caption{The Phased Abstract Machine}
\label{fig:pam-def}
\end{figure}

\begin{figure}
\begin{mdframed}
\small{
\begin{center}
\begin{tabular}{rclcl}
C & \gramdef & $\hole$ & $\sor$ & $\letcomp{c}{\set{semfun}(\overline{c})}{C}$
\end{tabular}
\end{center}
}
\end{mdframed}
\vspace{-0.15cm}
\caption{Abstract Machine: RHS contexts}
\label{fig:am-ctx-def}
\end{figure}

A PAM state takes the form $\pam{c}{K}{\updown}$. The configuration
$c$ is the same as for operational semantics. $K$ is a
\textit{context} or \textit{continuation}, which represents the remainder of an SOS right-hand
side. The phase is the
novel part. Each PAM state is either in the evaluating (``down'')
phase $\down$, or the returning (``up'') phase $\up$; an arbitrary phase is given the variable name ``$\updown$''.  A down state
$\pam{c}{K}{\down}$ can be interpreted as an intention for the PAM to
evaluate $c$ in a manner corresponding to one step of the operational
semantics, yielding $\pam{c\p}{K}{\up}$. In fact, in \appref{app:correctness}, we prove that a single step $c_1\sosto
c_2$ in the operational semantics perfectly corresponds to a sequence
of steps $\pam{c_1}{K}{\down} \pamto^* \pam{c_2}{K}{\up}$ in the PAM.

The full syntax of PAM rules is in \figref{fig:pam-def}. A PAM rule steps a left-hand state into a right-hand state,
potentially after invoking a sequence of semantic functions. A PAM state contains a configuration,
context, and phase. A context is a sequence of frames, terminating in $\halt$. 

Note that a \set{pamRhs} consists of a sequence of RHS fragments which terminate in a \set{pamState} $\pam{c}{K}{\updown}$. \figref{fig:am-ctx-def} captures this into a notation for RHS contexts, so that an arbitrary PAM rule may be written {\small $\pam{c_1}{K_1}{\updown_1} \pamto \plug{C}{\pam{c_2}{K_2}{\updown_2}}$}.
 
Finally, as alluded to in \secref{sec:overview-sos-to-am}, a frame like $\cxtfr{(t\p,\mu\p)}{(\assign{x}{t\p},\mu\p)}$ can be abbreviated to $\shortcxtfr{(\assign{x}{\hole_t},\hole_\mu)}$, since $t\p$ and $\mu\p$ are variables (i.e.: no destructuring).

A PAM rule $\pam{c_p}{K_p}{\updown_p}\pamto_l \text{rhs}_p$ for language $l$ is executed on a PAM state $\pam{c_1}{K_1}{\updown_1}$ as follows:

\begin{enumerate}
    \item Find a substitution $\sigma$ such that $\sigma(c_p)=c_1$, $\sigma(K_p)=K_1$.  Fail if no such $\sigma$ exists, or if $\updown_p \neq \updown_1$.
    \item Recursively evaluate $\text{rhs}_p$ as follows:
    \begin{itemize}
      \item If $\text{rhs}_p=\letcomp{c_\text{ret}}{\text{func}(\overline{c_\text{args}})}{\text{rhs}_p\p}$, with $\text{func} \in \set{semfun}_l$, pick $r\in \text{func}(\sigma(\overline{c_\text{args}}))$, and extend $\sigma$ to $\sigma\p$ s.t. $\sigma\p(c_\text{ret})=r$ and $\sigma\p(x)=\sigma(x)$ for all $x\in \text{dom}(\sigma)$. Fail if no such $\sigma\p$ exists. Then recursively evaluate $\text{rhs}_p\p$ on $\sigma\p$.
      \item If {\small $\text{rhs}_p=\pam{c_p\p}{K_p\p}{\updown_p\p}$}, return the new PAM state {\small $\pam{\sigma(c_p\p)}{\sigma(K_p\p)}{\updown_p\p}$}.
    \end{itemize}
\end{enumerate}

Let us give some example PAM rules. (An example execution will be in \secref{sec:has-example-pam-derivation}.) The \textsc{AssnCong} and \textsc{AssnEval} rules from \secref{sec:contains-assn-rules} get
transformed into the following three rules:

\vspace{-0.5em}
\begin{small}
\begin{mathpar}
  \begin{array}{r@{\ \pamto\ }l}
     \pam{(\assign{x}{e},\mu)}{k}{\down} &
                                           \pam{(e,\mu)}{k\frcomp \shortcxtfr{(\assign{x}{\hole_t},\hole_\mu)}}{\down} \\
     \pam{(t,\mu)}{k\frcomp \shortcxtfr{(\assign{x}{\hole_t},\hole_\mu)}}{\up} &
                                          \pam{(\assign{x}{t},\mu)}{k}{\up} \\

    \pam{(\assign{x}{v},\mu)}{k}{\down} & \pam{(\impskip, \upd{\mu}{x}{v})}{k}{\up} \\
   \end{array}
 \end{mathpar}
\end{small}
\vspace{-0.5em}

\noindent The \textsc{AssnCong} rule becomes a pair of mutually-inverse
PAM rules. One is a \textbf{down rule} which steps a down state to a
down state, signaling a recursive call. The other is an \textbf{up rule}, corresponding to a return. This is a distinguishing feature of congruence rules, and
is an important fact used when constructing the final abstract
machine.  Evaluation rules, on the other hand, typically become 
\textbf{down-up} rules. Note also that frames may be able to store
information: here, the frame $\shortcxtfr{(\assign{x}{\hole_t},\hole_\mu)}$ stores the variable to be assigned, $x$.

The PAM and operational semantics share a language's underlying semantic functions. The \textsc{AddEval} rule of, for instance,
\secref{sec:sos} becomes the following PAM rules:

\begin{small}
\begin{mathpar}
  \begin{array}{rcl}
  \pam{(v_1+v_2,\mu)}{k}{\down} &\pamto&\letcomp{n}{+(v_1,v_2)}
                                           {\pam{(n,\mu)}{k}{\down}}
    \\
    \pam{(v,\mu)}{k}{\down} &\pamto& \pam{(v,\mu)}{k}{\up} \\
    \end{array}
  \end{mathpar}
  \end{small}

\noindent The latter rule becomes redundant upon conversion to an abstract machine, which drops the phase.


\subsection{Abstract Machines}
\label{sec:am}

\begin{figure}
\begin{mdframed}
\small{
\begin{center}
\begin{tabular}{rcl}
$\text{amState}$ & \gramdef & $\am{c}{K}$ \\
$\text{amRhs}$ & \gramdef & $\text{amState}$  $\sor$  $\letcomp{c}{\set{semfun}(\overline{c})}{\text{amRhs}}$ \\
$\text{amRule}$ & \gramdef & $\text{amState} \amto \text{amRhs}$ 
\end{tabular}
\end{center}
}
\end{mdframed}
\vspace{-0.2cm}
\caption{Abstract Machines}
\label{fig:am-def}
\vspace{-0.25cm}
\end{figure}

Finally, the AM is similar to the PAM, except that an AM state does not contain a
phase. \figref{fig:am-def} gives a grammar for AM rules. We gave example rules for
the AM for \IMP in \secref{sec:contains-imp-machine}.

\paragraph{Summary of Notation} 

This section has introduced three versions of semantics, each defining their own transition relation.  Mnemonically, as the step-relation gets closer to the abstract machine, the arrow ``flattens out.'' They are:

\begin{enumerate}
\item $c_1 \sosto c_2$ (``squiggly arrow'') denotes one SOS step
   (\secref{sec:sos}).
\item $c_1 \pamto c_2$ (``hook arrow'') denotes one PAM step (\secref{sec:pam}).
\item $c_1 \amto c_2$ (``straight arrow'') denote one AM step (\secref{sec:am}).
\end{enumerate}


The conversion between PAM and AM introduces a
fourth system, the \textit{unfused abstract machine}, which is identical to the AM except that some
rules of the AM correspond to several rules of the
unfused AM. We thus do not distinguish it from the orthodox AM, except in one of the proofs, where its transition relation is given the symbol $\unfusedamto$ (``long arrow'').

\subsection{Splitting the SOS}
\label{sec:sos-to-pam}

\begin{figure*}
\footnotesize{
\begin{mdframed}\centering
\fbox{$\text{sosToPam}(\overline{\text{rules}})$}
\begin{align*}
\text{sosToPam}(\overline{\text{rules}}) & =\ \ \ 
                                                   \left(\bigcup_{r\in\overline{\text{rules}}} \text{sosRuleToPam}(r)\right) \\
& \ \ \ \ \ \ \ \  \cup \left\{ \pam{(t,s)}{\halt}{\up} \pamto \pam{(t,s)}{\halt}{\down}\right\} \\
& \ \ \ \where\ t,\, s\text{ are fresh variables}
\end{align*}
\\[0.8em]
\fbox{$\text{sosRuleToPam}(\text{rule})$}
\begin{align*}
\text{sosRuleToPam}(c \sosto \text{rhs}) & = & \text{sosRhsToPam}(\am{c}{k}{\down},
                                             k, \text{rhs})  & \hfill{(1)} \\
& & \ \ \ \where\ k\text{ is a fresh variable}  
\end{align*}
\\[0.8em]
\fbox{$\text{sosRhsToPam}(\text{pamState},K,\text{rhs})$}
\begin{align*}
\text{sosRhsToPam}(s, k, c) & = & \left\{ s\pamto \pam{c}{k}{\up} \right\} & \hfill{(2)} \\
\text{sosRhsToPam}(s, k, \letstepto{c_1}{c_2}{\text{rhs}}) & = &
                                                                 \left\{ s\pamto \pam{c_1}{k\p}{\down} \right\} & \hfill{(3)} \\
& & \ \ \ \ \ \ \ \ \mathrel{\cup} \text{sosRhsToPam}(\pam{c_2}{k\p}{\up},k,\text{rhs})   \\
& & \ \ \ \where\ k\p = k\frcomp \cxtfr{c_2}{\text{rhs}}  \\
\text{sosRhsToPam}(s, k, \letcomp{c_2}{\text{func}(\overline{c_1})}{\text{rhs}}) & = & \left\{ s \pamto \letcomp{c_2}{\text{func}(\overline{c_1})}{\pam{c_2}{k\p}{\down}} \right\} & \hfill{(4)} \\
& & \ \ \ \ \ \ \ \ \mathrel{\cup} \text{sosRhsToPam}(\pam{c_2}{k\p}{\down},k,\text{rhs})  \\
& & \ \ \ \where\ k\p = k\frcomp \cxtfr{c_2}{\text{rhs}} 
\end{align*}
\end{mdframed}
}
\vspace{-0.5em}
\caption{The SOS-to-PAM algorithm. Labels are used in the proofs of \appref{app:correctness}.}
\label{fig:sos-to-pam}
\vspace{-1em}
\end{figure*}

In this section, we present our algorithm for converting an
operational semantics to PAM. \figref{fig:sos-to-pam} defines the
$\textsc{sosToPam}$ function, which computes this transformation.

The algorithm generates PAM rules for each SOS rule.  For each SOS rule $c \sosto \text{rhs}$, it begins in
the state $\pam{c}{k}{\down}$, the start state for evaluation of
$c$. It then generates rules corresponding to each part of
$\text{rhs}$. For a semantic function, it transitions to a down state,
and begins the next rule in the same down state, so that they may
match in sequence. For a recursive invocation, it transitions to a
down state $\pam{c}{k\p}{\down}$, but begins the next rule in the
state $\pam{c}{k\p}{\up}$, so that other PAM rules must evaluate $c$
before proceeding with computations corresponding to this SOS rule. Finally, upon encountering
the end of the step $c$, it transitions to a state $\pam{c}{k}{\up}$,
returning $c$ up the stack.

For each step, it also pushes a
frame containing the remnant of the SOS $\text{rhs}$ onto the context,
both to help ensure rules may only match in the desired order, and
because the $\text{rhs}$ may contain variables bound in the left-hand
side, which must be preserved across rules.

After the algorithm finishes creating PAM rules for each of the SOS
rules, it adds one special rule, called the \textbf{reset rule}: \begin{small}
$\pam{(t,\mu)}{\halt}{\up} \pamto \pam{(t,\mu)}{\halt}{\down}$
\end{small}

The reset rule takes a state which corresponds to completing
one step of SOS evaluation, and changes the phase to $\down$ so that
evaluation may continue for another step. Note that it matches using a
nonvalue-variable $t$ so that it does not attempt to evaluate a term
after termination. Note also that the LHS and RHS of the reset rule
differ only in the phase. The translation from PAM to abstract machine hence removes this rule, as, upon dropping the phases, this rule would become a self-loop. It is also removed in the proof of Theorem \ref{thm:sos-pam}.

\subsection{Cutting PAM}

\label{sec:pam-to-am}

The PAM evaluates a term in lockstep with the original SOS rules. Yet, in both, each step of computation always begins at the root of the term, rather than jumping from one subterm to the next. By optimizing these extra steps away, our algorithm will create the abstract machine from the PAM.

Consider how the PAM evaluates the term $(1+(1+1))+1$, shown in
\figref{fig:ex-pam-reduction}. Notice how lines \ref{eqn:mirror-lines-first-start}--\ref{eqn:mirror-lines-first-end} mirror lines \ref{eqn:mirror-lines-second-start}--\ref{eqn:mirror-lines-second-end}. After evaluating $1+1$ deep within the term, the PAM walks up to the root, and then back down the same path. Knowing this, an optimized machine could jump directly from line \ref{eqn:mirror-lines-first-start} to \ref{eqn:after-mirror-lines}.

This insight is similar to the one behind Danvy's \textit{refocusing}, which converts a reduction semantics to an abstract machine \cite{danvy2004refocusing}. But in the setting of PAM, the necessary property becomes particularly simple and mechanical:

\begin{definition}
\label{def:up-rule-invertible}
An up-rule {\small $\pam{c_1}{K_1}{\up}\pamto \plug{C}{\pam{c_2}{K_2}{\up}}$} is \keyterm{invertible} if, for any $c_1$ nonvalue, {\small $\pam{c_2}{K_2}{\down} \pamto^* \pam{c_1}{K_1}{\down}$}.
\end{definition}

\label{sec:has-example-pam-derivation}
\begin{wrapfigure}{r}{0.48\textwidth}
\begin{mdframed}\centering
\vspace{-0.3em}
{\tiny
\begin{align}
 & \pam{((1+(1+1))+1,\emptyenv)}{\halt}{\down} \\
\pamto & \pam{(1+(1+1), \emptyenv)}{\halt \frcomp
\shortcxtfr{(\hole_t+1,\hole_\mu)}}{\down} \\
\pamto & \pam{(1+1, \emptyenv)}{\halt \frcomp
                       \shortcxtfr{(\hole_t+1,\hole_\mu) \frcomp \shortcxtfr{(1+\hole_t,\hole_\mu)}}}{\down}  \\
\pamto & \pam{(2, \emptyenv)}{\halt \frcomp 
                \shortcxtfr{(\hole_t+1,\hole_\mu) \frcomp \shortcxtfr{(1+\hole_t,\hole_\mu)}}}{\down} \\
\pamto & \pam{(2, \emptyenv)}{\halt \frcomp 
           \shortcxtfr{(\hole_t+1,\hole_\mu) \frcomp \shortcxtfr{(1+\hole_t,\hole_\mu)}}}{\up} \\
\label{eqn:mirror-lines-first-start}
\pamto & \pam{(1+2, \emptyenv)}{\halt \frcomp 
             \shortcxtfr{(\hole_t+1,\hole_\mu)}}{\up} \\
\label{eqn:mirror-lines-first-end}
\pamto & \pam{((1+2)+1, \emptyenv)}{\halt}{\up} \\
\label{eqn:mirror-lines-second-start}
\pamto & \pam{((1+2)+1, \emptyenv)}{\halt}{\down} \\
\label{eqn:mirror-lines-second-end}
\pamto & \pam{(1+2, \emptyenv)}{\halt \frcomp 
                \shortcxtfr{(\hole_t+1,\hole_\mu)}}{\down} \\
\label{eqn:after-mirror-lines}
\pamto & \pam{(3, \emptyenv)}{\halt \frcomp 
             \shortcxtfr{(\hole_t+1,\hole_\mu)}}{\down} \\
\pamto & \pam{(3, \emptyenv)}{\halt \frcomp 
                \shortcxtfr{(\hole_t+1,\hole_\mu)}}{\up} \\
\pamto & \pam{(3+1, \emptyenv)}{\halt}{\up} \\
\pamto & \pam{(3+1, \emptyenv)}{\halt}{\down} \\
\pamto & \pam{(4, \emptyenv)}{\halt}{\down} \\
\pamto & \pam{(4, \emptyenv)}{\halt}{\up}
\end{align}
}
\vspace{-0.8em}
\end{mdframed}
\vspace{-0.5em}
\caption{Example PAM derivation.}
\label{fig:ex-pam-reduction}
\vspace{-1em}
\end{wrapfigure}

If an up-rule and its corresponding down-rules do not invoke any semantic functions, invertibility can be checked automatically via a reachability search. When all up-rules are invertible, we can show that $\pam{c}{K}{\up} \pamto^* \pam{c}{K}{\down}$ whenever $c$ is a non-value, meaning that these transitions may be skipped (Lemma \ref{lem:invertibility}). This means that all up-rules are redundant unless the LHS is a value, and hence they can be specialized to values. The phases now become redundant and can be removed, yielding the first abstract machine.

The last requirement for an abstract machine to be valid is not having any up-down rules, rules of the form $\pam{c}{k}{\up} \pamto C[\pam{c\p}{k\p}{\down}]$. An up-down rule follows from any SOS rule which simultaneously steps multiple subterms, and are not found in typical semantics. An example SOS rule which does result in an up-down rule is this lockstep-composition rule:

\begin{mathpar}
\infer[LockstepComp]
  {\lockstep{e_1}{e_2} \sosto \lockstep{e_1\p}{e_2\p}}
  {e_1 \sosto e_1\p & e_2 \sosto e_2\p}
\end{mathpar}

\noindent The \textsc{LockstepComp} rule differs from normal parallel composition, in that both components must step simultaneously. Up-down rules like this break the locality of the transition system, meaning that whether one subterm can make consecutive steps may depend on different parts of the tree. Correspondingly, it also means that a single step of the program may step multiple parts of the tree, making it difficult to have a meaningful notion of program counter.

Most of the time, the presence of an up-down rule will also cause some up-rules to not be invertible, making a prohibition on up-down rules redundant. However, there are pathological cases where this is not so. For example, consider the expression $\lockstep{e_1}{e_2}$ with the \textsc{LockstepComp} rule. The \textsc{LockstepComp} rule splits into 3 PAM rules, of which the third is an up-rule, $\pam{e_2\p}{k \frcomp \shortcxtfr{\lockstep{e_1\p}{\hole}}}{\up} \pamto \pam{\lockstep{e_1\p}{e_2\p}}{k}{\up}$. If it is possible for $e_1\p$ to be a value but not $e_2\p$, then this rule is not invertible.  However, if $e_1 \sosto e_1$ and $e_2 \sosto e_2$ for all $e_1, e_2$, then this rule is invertible. Hence, the PAM-to-AM algorithm includes an additional check that there are no up-down rules save the reset rule.

\paragraph{Algorithm: PAM to Unfused Abstract Machine}

\begin{enumerate}
\item Check that all up-rules for $l$ are invertible. Fail if not.
\item Check that there are no up-down rules other than the reset rule. Fail if not.
\item Remove the reset rule.
\item For each up-rule with LHS $\pam{(t,\mu)}{K}{\up}$, unify $t$ with a fresh value variable. The resulting $t\p$ will either have a value node at the root, or will consist of a single value variable. If $t$ fails to unify, remove this rule.
\item Remove all rules of the form $\pam{c}{K}{\down}\pamto \pam{c}{K}{\up}$, which would become self-loops.
\item Drop the phase $\updown$ from the $\text{pamState}$'s in all rules
\end{enumerate}

\paragraph{Fusing the Abstract Machine}

This unfused abstract machine still takes more intermediate steps than a normal abstract machine. The final abstract machine is created by \textit{fusing} successive rules together. A rule $\am{c_1}{K_1}\amto C_1[\am{c_1\p}{K_1\p}]$ is fused with a rule $\am{c_2}{K_2}\amto C_2[\am{c_2\p}{K_2\p}]$ by unifying $(c_1\p,K_1\p)$ with $(c_2,K_2)$, and replacing them with the new rule $\am{c_1}{K_1}\amto  C_1[C_2[\am{c_2\p}{K_2\p}]]$.

\begin{property}[Fusion]
\label{prop:fusion}
Consider two AM rules F and G, and let their fusion be FG. Then $\am{c}{K} \overset{F}{\amto} \am{c\p}{K\p} \overset{G}{\amto} \am{c\pp}{K\pp}$ if and only if $\am{c}{K} \overset{FG}{\amto} \am{c\pp}{K\pp}$.
\end{property}

There are two cases where rules should be fused. First, considering \figref{fig:sos-to-pam}, rules which invoke a semantic function always have only one possible successor rule, and should be fused. Without this, the abstract machine for \textsc{AddEval} would have an extra state for after it invokes the semantic computation $+(n_1,n_2)$, but before it plugs the result into a term. Second, up-rules should be fused with all possible successors. Without this, computing $e_1+e_2$ would have an extra state where, after evaluating $e_1$, it revisits $e_1+e_2$, rather than jumping straight into evaluating $e_2$. Both steps are strictly optional. However, doing so generates abstract machine rules which match the standard versions (as in e.g.: \cite{felleisen2009semantics}), and also generate more intuitive control-flow graphs.

For example, here are the final rules for assignment:

\todo{Explain caveats in camera ready <-- Wait, what caveats?}
\vspace{-1em}
\begin{mathpar}
  \begin{array}{r@{\ \amto\ }l}
     \am{(\assign{x}{e},\mu)}{k} &
                                           \am{(e,\mu)}{k\frcomp \shortcxtfr{(\assign{x}{\hole_t},\hole_\mu)}} \\
     \am{(v,\mu)}{k\frcomp \shortcxtfr{(\assign{x}{\hole_t},\hole_\mu)}} & \am{(\impskip, \upd{\mu}{x}{v})}{k} \\
   \end{array}
\end{mathpar}



\section{Correctness}
\label{sec:correctness-body}

This section provides the correspondence between the operational semantics and abstract machine. We present only the high-level theorems here, with the proofs available in 
\textbf{\appref{app:correctness}}.

The core idea of the correspondence is simple: The PAM emulates the SOS because each PAM rule was explicitly constructed to correspond to an RHS fragment of the SOS:

\begin{restatable}{theorem}{thmsospam}
\label{thm:sos-pam}
$c_1 \sosto_l^* c_2$ if and only if, for all contexts $K$, $\pam{c_1}{K}{\down} \pamto_l^* \pam{c_2}{K}{\up}$
\end{restatable}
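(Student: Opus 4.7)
The plan is to decompose the bi-implication into a single-step correspondence (one SOS step matches one ``phase-complete'' PAM trace) and then lift to the reflexive transitive closure using the reset rule. Concretely, I would first prove the following lemma: for any nonvalue $c_1$ and any context $K$, $c_1 \sosto_l c_2$ if and only if there is a PAM derivation $\pam{c_1}{K}{\down} \pamto_l^+ \pam{c_2}{K}{\up}$ that never drops $K$ off the bottom of the context and never uses the reset rule. The overall theorem then follows by chaining these single-step simulations with the reset rule to flip the phase between successive SOS steps.

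For the ($\Rightarrow$) direction of the single-step lemma, I would perform a structural induction on the right-hand side of the matched SOS rule, exactly mirroring the recursion of $\text{sosRhsToPam}$ in Figure~\ref{fig:sos-to-pam}. The base case $\build{c}$ (branch (2)) produces a single PAM rule $\pam{c_1}{K}{\down} \pamto \pam{c_2}{K}{\up}$, giving the derivation directly. The $\letstepto{c_1'}{c_2'}{\text{rhs}}$ case (branch (3)) produces a down-transition that pushes the frame $\cxtfr{c_2'}{\text{rhs}}$ on top of $K$; the inductive hypothesis applied to the sub-derivation of $c_1' \sosto c_2'$ with context $K \frcomp \cxtfr{c_2'}{\text{rhs}}$ returns us to the corresponding $\up$-state, after which further induction on $\text{rhs}$ supplies the tail of the PAM trace. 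The $\letcomp$ case (branch (4)) is analogous but uses a single PAM rule with an embedded semantic-function invocation. In every case the bottom suffix $K$ is preserved, since rules only manipulate the frame stack above the pushed frame.

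For the ($\Leftarrow$) direction I would strengthen the statement to an invariant that tags each reachable PAM state with the SOS rule it arose from and the position within that rule's RHS. The point is that each generated frame $\cxtfr{c}{\text{rhs}}$ uniquely identifies which call of $\text{sosRhsToPam}$ produced it, so the context, phase, and LHS shape of any reachable state pin down exactly one generated PAM rule to fire next (modulo the nondeterminism of semantic functions, which is already present in SOS). Thus a derivation from $\pam{c_1}{K}{\down}$ that returns to $\pam{c_2}{K}{\up}$ without touching below $K$ must traverse the chain of rules generated from a single SOS rule whose LHS matches $c_1$; induction on the length of this derivation, peeling off each balanced push/pop pair on $K$ and invoking the inductive hypothesis for the bracketed sub-derivation (which sits one frame higher on the stack), yields the corresponding SOS derivation of $c_1 \sosto c_2$ step-by-step.

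Lifting to the full theorem is then routine. For ($\Rightarrow$), given $c_1 \sosto_l^* c_2$, I interleave single-step PAM simulations (by the lemma, instantiated at the ambient $K$) with applications of the reset rule $\pam{c}{\halt}{\up} \pamto \pam{c}{\halt}{\down}$; a small wrinkle is that the reset rule only matches at context $\halt$, but by the preservation of $K$ in the lemma the phase flip is equally justified at any $K$ by an a-fortiori argument, since the suffix $K$ is never inspected. For ($\Leftarrow$), decompose the PAM trace at every intermediate $\pam{c}{K}{\up}$ state; the only rule that can leave such a state at context $K$ while keeping $c$ within it is the reset rule (all other $\up$-rules either pop a frame of $K$ or invoke something not of this shape), so each segment is a single-step simulation to which the lemma applies. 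The main obstacle is the $(\Leftarrow)$ direction of the single-step lemma, since it is what rules out any ``spurious'' PAM trace that could reach $\pam{c_2}{K}{\up}$ by a route not induced by the SOS; the frame-tagging invariant is the crux of that argument.
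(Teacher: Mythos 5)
Your core decomposition is the same as the paper's: the paper proves Theorem \ref{thm:sos-pam} as a corollary of a strengthened single-step lemma (Lemma \ref{lem:pam-sos-strengthened}) stating that $c_1 \sosto_l c_2$ iff, for every $K$, there is a reset-rule-free derivation $\pam{c_1}{K}{\down} \pamto_l^* \pam{c_2}{K}{\up}$ that uses the same rule sequence regardless of $K$. Its forward direction is an induction over the fragments $C_1[\dots C_n[c_2]]$ of the matched rule's right-hand side, mirroring $\text{sosRhsToPam}$ exactly as you describe, and its backward direction rests on Observation \ref{obs:pam-gen-invert} --- that the LHS of each generated PAM rule uniquely identifies the invocation of $\text{sosRhsToPam}$ that produced it --- which is precisely your frame-tagging invariant. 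So the single-step lemma and both of its directions are in essence the paper's own argument.

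The gap is in your lifting step. You correctly notice that the reset rule is syntactically restricted to the empty context $\halt$, but the ``a-fortiori'' dismissal does not repair this: there is no PAM rule whose LHS matches $\pam{c}{K}{\up}$ for $K \neq \halt$ and whose RHS is $\pam{c}{K}{\down}$, and the observation that the reset rule ``never inspects'' a context suffix is beside the point, because its context pattern is the literal $\halt$, which does not unify with $K \frcomp f$. A derivation from $\pam{c_2}{K}{\up}$ to $\pam{c_2}{K}{\down}$ at non-empty $K$ does exist, but only via a genuinely multi-step detour --- popping frames with up-rules until reaching $\halt$, firing the reset rule once, and descending again --- and that is exactly the content of the Invertibility Lemma (Lemma \ref{lem:invertibility}), which in turn requires the ``all up-rules invertible'' assumption and non-stuckness of $(c_2, K)$. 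So either restrict the multi-step chaining to $K = \halt$ or invoke invertibility explicitly; as written, your concatenation of single-step simulations at an arbitrary $K$ does not yield a valid PAM derivation. (The paper is also terse here, dispatching the lift as a ``corollary,'' but it does not assert a phase flip that the rules do not license.)
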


\noindent The PAM and AM are equivalent because the AM merely removes some redundant steps from the PAM, and because the fused rules in the AM each correspond to several rules in the PAM. However, a PAM derivation may have some ``false starts" corresponding to a partially-applied SOS rule, and so we first must explain some technical definitions that determine which states are included in the correspondences.

\paragraph{\textbf{Stuck States}} The first kind of ``false start" comes from steps that cannot be completed.

\begin{definition}
A configuration/context pair $(c,K)$ is \textbf{non-stuck} if $\pam{c}{K}{\up} \pamto^* \pam{c\p}{\halt}{\up}$ for some $c\p$.
\end{definition}

Because each PAM rule corresponds to part of an SOS rule, our definition of non-stuckness is different from the usual one: it is intended to exclude terms which correspond to a partial match on an SOS rule. A single step $c_1 \sosto c_2$ in the SOS corresponds to a sequence $\pam{c_1}{\halt}{\down} \pamto^* \pam{c_2}{\halt}{\up}$ in the PAM, so a state is non-stuck if it can complete the current step. Stuck states result from SOS rules which only partially match a term. For example, the SOS rule

$$(\assign{a.b}{v}, \mu) \sosto \letcomp{(r,\mu\p)}{\text{Lookup}((a,\mu))}{\letcomp{\false}{\text{ContainsField}(r, b)}{(\error, \mu)}}$$

\noindent decomposes into 3 PAM rules. If $\text{Lookup}$ succeeds, the first brings $\pam{(\assign{a.b}{v}, \mu)}{K}{\down}$ into the state 

$$\pam{(r,\mu\p)}{K\frcomp \shortcxtfr{\letcomp{\false}{\text{ContainsField}(\hole_t,b)}{(\error, \mu)}}}{\down}$$ 

\noindent If $\false \neq \text{ContainsField}(r,b)$, then this will be a stuck state.

\paragraph{\textbf{Working Steps}}

As seen in the example in \secref{sec:pam-to-am}, many steps get removed when converting from PAM to AM. This causes the second form of ``false start."

\begin{definition}
An \keyterm{inversion sequence} beginning at $\pam{c}{K}{\up}$ is a sequence of transitions $\pam{c}{K}{\up} \pamto^* \pam{c}{K}{\down}$ which contains at most one application of the reset rule.
\end{definition}

This idea of an inversion sequence partitions a derivation $\pam{c_1}{K_1}{\down}\pamto^*\pam{c_2}{K_2}{\down}$ into two parts: the inversion sequences, which do redundant work, and the remainder, which we call the \keyterm{working steps}. A PAM state inside an inversion sequence might not correspond to any AM state.

\begin{definition}
A reduction $\pam{c_1}{K_1}{\updown_1}\pamto \pam{c_2}{K_2}{\updown_2}$ within a derivation is a \textbf{working step} if the derivation cannot be extended so that $\pam{c_1}{K_1}{\updown_1}$ is part of an inversion sequence.
\end{definition}

\paragraph{\textbf{PAM-AM Correspondence}} The PAM and AM correspond as follows, via the Unfused AM.

\begin{restatable}[PAM-Unfused AM: Forward]{theorem}{thmpamunfusedamforward}
\label{thm:pam-am}
Suppose, for some $c,K$, $\pam{c}{K}{\down}\pamto_l^* \pam{c\p}{K\p}{\down}$, $\pam{c\p}{K\p}{\down}$ is non-stuck, and the derivation's last step is working. Then there is a derivation $\am{c}{K}\unfusedamto_l^* \am{c\p}{K\p}$.
\end{restatable}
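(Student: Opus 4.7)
The plan is to prove the theorem by strong induction on the length $n$ of the PAM derivation $\pam{c}{K}{\down}\pamto_l^*\pam{c\p}{K\p}{\down}$, leveraging the PAM-to-AM conversion of \secref{sec:pam-to-am} as the central translation mechanism. The guiding invariant is that each working PAM step corresponds to a rule preserved by the conversion, while the non-working steps that make up inversion sequences are exactly the ones the conversion elides.

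\textbf{Peeling off the last step.} Because the derivation ends at a down-phase state, its final PAM rule must have shape either down-down or up-down. The only up-down rule in the system is the reset rule, but a single reset step is itself an inversion sequence of length one, so its source lies in an inversion and the step is non-working --- contradicting the hypothesis. Hence the last PAM step uses a down-down rule. Inspecting cases (3) and (4) of the sosRhsToPam procedure in \figref{fig:sos-to-pam}, I observe that every down-down PAM rule generated from the SOS pushes a fresh frame onto the context, so no such rule is a self-loop, and step (5) of the PAM-to-AM algorithm (which removes only down-up self-loops) preserves it. After dropping phases, the last PAM step becomes an unfused AM step $\am{c_{n-1}}{K_{n-1}}\unfusedamto_l\am{c\p}{K\p}$, where $\pam{c_{n-1}}{K_{n-1}}{\down}$ is the penultimate PAM state.

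\textbf{Invoking the IH.} It remains to apply the induction hypothesis to the strict prefix $\pam{c}{K}{\down}\pamto_l^*\pam{c_{n-1}}{K_{n-1}}{\down}$. Non-stuckness of the new endpoint follows by combining invertibility (Lemma \ref{lem:invertibility}), which gives $\pam{c_{n-1}}{K_{n-1}}{\up}\pamto_l^*\pam{c_{n-1}}{K_{n-1}}{\down}$ when $c_{n-1}$ is a nonvalue, with the already-preserved down-down step into $\pam{c\p}{K\p}{\down}$ and the assumed halt-reachability from $\pam{c\p}{K\p}{\up}$ to cobble together a run ending at a halt state.

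\textbf{Main obstacle.} The delicate part is establishing the working-last-step hypothesis for the prefix: the prefix could end with a reset step (taking $\pam{c_{n-1}}{\halt}{\up}$ to $\pam{c_{n-1}}{\halt}{\down}$), which is a length-one inversion and therefore non-working, blocking the direct inductive application. My plan is to strengthen the inductive statement to cover both up- and down-terminating derivations, so that in the reset-ending case I can recurse to the shorter prefix ending at the up-phase state $\pam{c_{n-1}}{\halt}{\up}$ and still obtain an unfused AM derivation to $\am{c_{n-1}}{\halt}$. Since the reset rule is removed by the PAM-to-AM conversion, the absorbed step needs no corresponding AM step, and the structural preservation of down-down, non-trivial down-up, and value-specialized up-up rules propagates the induction cleanly through all remaining cases.
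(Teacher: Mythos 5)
Your high-level reading of the conversion is right (down-down rules survive steps (4)--(6) of the PAM-to-AM algorithm and become unfused AM steps; resets and inversion sequences are what get elided), and the paper's own proof is not an induction at all: it excises every maximal inversion sequence from the derivation in one pass, drops phases, and then checks that each surviving step's rule has an AM counterpart, using Corollary \ref{corr:no-inversion-value} plus determinism to show that the only rules without counterparts (down-up self-loops and non-value up-rules) occur only as stutters or inside the excised sequences. Measured against that, your inductive peeling has two genuine gaps. First, the obstacle you flag is broader than the reset case: the prefix's last step can be non-working even when it is a down-down step, because the second half of an inversion sequence $\pam{c}{K}{\up}\pamto^*\pam{c}{K}{\down}$ consists of ordinary frame-pushing down-down steps (see lines \ref{eqn:mirror-lines-second-start}--\ref{eqn:mirror-lines-second-end} of \figref{fig:ex-pam-reduction}). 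Your strengthening to up-terminating derivations only lets you absorb a single reset; to make the induction go through you must peel off an \emph{entire} maximal inversion sequence at once (its endpoints collapse to the same AM state, so no AM step is emitted), and at that point you are reconstructing the paper's global argument inside the inductive step, including the case analysis on which PAM rules can appear inside versus outside an inversion sequence.

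Second, your justification of the non-stuckness hypothesis for the prefix is circular: Lemma \ref{lem:invertibility} takes non-stuckness of $(c_{n-1},K_{n-1})$ as a \emph{hypothesis} (its proof starts from the run $\pam{c_{n-1}}{K_{n-1}}{\up}\pamto^*\pam{c\p}{\halt}{\up}$ whose existence is exactly what non-stuckness asserts), so it cannot be used to manufacture that run. Chaining through the preserved down-down step does not help either, since that step leaves you at the down-phase state $\pam{c\p}{K\p}{\down}$, whereas non-stuckness of the target is a statement about what is reachable from $\pam{c\p}{K\p}{\up}$. To repair the induction you would need either a separate lemma propagating non-stuckness backward along working steps, or a reformulation of the inductive statement that does not re-assume non-stuckness at every intermediate state --- which is precisely what the paper's non-inductive excision argument buys.
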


\begin{restatable}[PAM-Unfused AM: Backward]{theorem}{thmpamunfusedambackward}
\label{thm:am-pam}
If $\am{c}{K} \unfusedamto_l^* \am{c\p}{K\p}$, then there are phases $\updown_c$ and $\updown_{c\p}$ such that $\pam{c}{K}{\updown_c} \pamto_l^* \pam{c\p}{K\p}{\updown_{c\p}}$.
\end{restatable}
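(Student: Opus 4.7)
I would prove this by induction on the length $n$ of the AM derivation. The key observation is that the PAM-to-AM algorithm of Section \ref{sec:pam-to-am} produces each unfused-AM rule by phase-erasing a specific PAM rule (after removing the reset rule and self-loops, and specializing up-rules' LHSs to values). So every AM step $\am{c}{K} \unfusedamto \am{c_1}{K_1}$ lifts to exactly one PAM step $\pam{c}{K}{\sigma^L} \pamto \pam{c_1}{K_1}{\sigma^R}$ with specific phases $\sigma^L, \sigma^R$ recorded by the underlying PAM rule.

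For the base case $n = 0$, take $\updown_c = \updown_{c\p} = \down$. For the inductive step, split the derivation as $\am{c}{K} \unfusedamto \am{c_1}{K_1} \unfusedamto^{n-1} \am{c\p}{K\p}$, lift the first AM step as above, and apply the IH to the tail to obtain $\pam{c_1}{K_1}{\updown_{c_1}} \pamto^* \pam{c\p}{K\p}{\updown_{c\p}}$ for some phases. Setting $\updown_c := \sigma^L$, I would concatenate the two PAM derivations at $(c_1, K_1)$, inserting a bridging sub-derivation from $\pam{c_1}{K_1}{\sigma^R}$ to $\pam{c_1}{K_1}{\updown_{c_1}}$ whenever those phases differ. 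An $\up \to \down$ bridge at a non-value is supplied by Lemma \ref{lem:invertibility}, and when $K_1 = \halt$ by the reset rule. A $\down \to \up$ bridge is supplied by the ``finalization'' PAM rule produced by rule (2) of sosRhsToPam in Figure \ref{fig:sos-to-pam}, applied using the frame pattern produced by the preceding PAM step.

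The main obstacle will be the bridging case analysis, particularly guaranteeing that a $\down \to \up$ bridge is available whenever it is needed --- there is no general rule of this form, so one must trace through the sosToPam construction to show that the junction context always has a frame of the shape rule (2) requires. A useful alternative tactic is to strengthen the inductive hypothesis so that $\updown_{c_1}$ may be chosen to equal $\sigma^R$ whenever the first rule of the remaining AM derivation admits it, threading the phase through the induction. Under this strengthening, only the $\up \to \down$ direction of bridging is ever invoked, which is exactly the case handled by invertibility and the reset rule, cleanly closing the proof.
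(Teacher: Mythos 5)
Your skeleton --- induct on the length of the unfused-AM derivation, lift each AM rule to the PAM rule that generated it, and repair phase mismatches at the junctions --- is the same as the paper's, and your $\up\to\down$ repair at a non-value junction (Lemma \ref{lem:invertibility}, plus the reset rule at $\halt$) is exactly the paper's case (2)/(1). The gap is in the remaining junction cases, which is where the paper's proof does its real work. The $\down\to\up$ bridge you propose does not exist: the rule emitted by line (2) of $\textsc{sosRhsToPam}$ has the form $\pam{c_2}{k\frcomp\cxtfr{c_2}{\text{rhs}}}{\updown} \pamto \pam{c}{k}{\up}$ --- it pops a frame and rewrites the configuration to the SOS rule's final right-hand side, so it is never a phase flip at a fixed $(c_1,K_1)$ (the only PAM rules of that shape come from SOS self-loops $c \sosto c$, and those are deleted in step 5 of the PAM-to-AM conversion). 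Your fallback of strengthening the induction hypothesis does not make the case disappear either: the phase on the LHS of the next step's generating PAM rule is fixed by that rule, not by a choice in the IH, and a $\down$-state really can be followed in the unfused AM by a rule whose PAM original has an $\up$-state LHS (e.g., a descent or semantic-function step, whose RHS is forced to be a $\down$ state by lines (3)--(4), followed by the up-rule continuation generated by line (3)). Symmetrically, your $\up\to\down$ bridge covers only non-value junctions, but the paper's case (2)/(4) is an $\up\to\down$ junction at which the configuration is the value returned by a semantic function, where Lemma \ref{lem:invertibility} is inapplicable.

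The idea you are missing is that these junctions are repaired by \emph{rule replacement}, not by inserting bridging transitions. Using Observation \ref{obs:pam-gen-invert} --- the LHS of a PAM rule determines the $\textsc{sosRhsToPam}$ invocation that generated it --- the paper shows that for each remaining ``confused pair'' the construction of Figure \ref{fig:sos-to-pam} also emitted a sibling rule with the same right-hand side whose LHS carries the phase the derivation actually has (rules J and H in the paper's case analysis), and substitutes that rule for the offending one. This also undercuts your opening premise that every AM step lifts to exactly one PAM step: the resolution of the hard cases depends precisely on there being more than one candidate. Without this structural argument about $\textsc{sosToPam}$, your induction does not close.
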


\noindent Reductions in the Unfused AM correspond to reductions in the normal AM unless the last rules used in the Unfused AM have been fused away.

\begin{restatable}[Unfused AM-AM]{theorem}{thmunfusedamam}
$\am{c}{K} \amto_l^* \am{c\p}{K\p}$ if and only if $\am{c}{K} \unfusedamto_l^* \am{c\p}{K\p}$ by a sequence of rules whose last rule is not fused away.
\end{restatable}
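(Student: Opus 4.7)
The proof reduces almost entirely to Property \ref{prop:fusion} (Fusion), and the key structural observation about which rules get eliminated. In both fusion cases described in Section \ref{sec:pam-to-am}, it is the first component of a fused pair that is removed from the AM: for a semantic-function rule $F$ with its unique successor $G$, it is $F$ that gets subsumed into $FG$; for an up-rule $F$ being fused with all of its possible successors $G_1, G_2, \ldots$, again it is $F$ that is replaced by the rules $FG_1, FG_2, \ldots$. In both cases the ``anchor'' rule (the successor) may still be reachable by other paths and so is retained in the AM. Consequently, the last component of any fused AM rule $R_1 R_2 \cdots R_k$ is guaranteed not to be fused away.

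Forward direction: I would induct on the length $n$ of the AM derivation $\am{c}{K} \amto_l^* \am{c\p}{K\p}$. The base case $n=0$ is trivial. For the inductive step, the first AM rule $R$ is either an original, preserved Unfused AM rule --- in which case it corresponds to a single Unfused AM step --- or a fused rule $R_1 \cdots R_k$, in which case iterated application of Property \ref{prop:fusion} shows the AM step corresponds to the $k$-step Unfused AM sequence $R_1, \ldots, R_k$. Concatenating the expansion with the Unfused AM derivation supplied by the inductive hypothesis yields an Unfused AM derivation with the same endpoints. Its last rule is unchanged from the one provided by the inductive hypothesis (or is $R_k$ if the entire AM derivation is a single step), and hence is not fused away.

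Backward direction: I would induct on the length $n$ of the Unfused AM derivation $\am{c}{K} \unfusedamto_l^* \am{c\p}{K\p}$, whose last rule is assumed not fused away. The base case $n=0$ is trivial. For $n \geq 1$, examine the first Unfused AM rule $F_1$. If $F_1$ has not been fused away, it is itself an AM rule, and taking one AM step reduces to an Unfused AM derivation of length $n-1$ ending with the same non-fused-away rule; the inductive hypothesis applies. If $F_1$ has been fused away, then $n \geq 2$ (since $F_1$ cannot be the last rule of the derivation), and the next rule $F_2$ must be one of $F_1$'s fusion partners. Thus $F_1 F_2$ exists as an AM rule; by Property \ref{prop:fusion}, one AM step using $F_1 F_2$ collapses these two Unfused AM steps, and the remainder satisfies the inductive hypothesis. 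If $F_1 F_2$ is itself fused into a longer rule $F_1 F_2 F_3$, the argument is iterated.

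The main obstacle is the subclaim in the backward inductive step that whenever a fused-away $F_1$ is followed in the Unfused AM by some $F_2$, the pair $F_1 F_2$ must be an AM rule. This relies on the design of the fusion algorithm: semantic-function rules have a unique Unfused AM successor (with which they are fused), and up-rules are fused with \emph{all} of their possible successors; so any Unfused AM successor of a fused-away rule is guaranteed to appear as a fusion partner. Making this invariant precise and verifying it for both kinds of fusion is where the core technical content of the proof lies; once established, everything else is a routine induction using Property \ref{prop:fusion}.
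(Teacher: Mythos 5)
Your proposal is correct and matches the paper's approach: the paper proves this as a corollary of Lemma~\ref{lem:am-fusion-lemma} (fusing one rule with all of its possible successors changes the transition relation only by collapsing consecutive pairs of steps beginning with that rule), which is itself immediate from Property~\ref{prop:fusion}. Your explicit inductions on derivation length, and your key subclaim that any Unfused-AM successor of a fused-away rule appears as a fusion partner, are exactly the content that the paper packages into that lemma.
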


\section{Control-flow Graphs as Abstractions}
\label{sec:cfg-abstraction}

The abstract machine is a transition system describing all possible executions of a program. Applying an \textit{abstract interpretation} shrinks this to a finite one. Some abstractions yield a graph resembling a traditional CFG (\secref{sec:abstractions}).

Yet to compute on abstract states, one must also abstract the transition rules, and this traditionally requires manual definitions. Fortunately, we will find that the desired families of CFG can be obtained by a specific class of \textit{syntactic abstraction} which allow the transition rules to be abstracted automatically, via \textit{abstract rewriting} (\secref{sec:abstract-matching}--\secref{sec:abstract-rewriting}).
After constructing the abstract transition graph, more control can be
obtained by further combining states using a \textit{projection
  function} (\secref{sec:projections}). A final choice of control-flow
graph is then obtained by an abstraction/projection pair $(\alpha,
\pi)$.

The development of abstract-rewriting in this section is broadly
similar to that of \citet{bert1993abstract}, but departs
greatly in details to fit our formalism of terms and machines. In
\secref{sec:syntax-directed}, we explore connections to an
older technique called \textit{narrowing}.
 
\subsection{Abstract Terms, Abstract Matching}
\label{sec:abstract-matching}
 
Our goal is to find a notion of abstract terms flexible enough to allow us to express desired abstraction functions, but restricted enough that we can find a way to automatically apply the existing abstract machine rules to them. We accomplish this by defining a set of generalized terms $\set{term}\star$, satisfying $\set{term} \subset \set{term}\star$, where some nodes have been replaced with $\star$ nodes which represent any term.

\vspace{-1em}
\begin{mathpar}
\begin{tabular}{lcl}

$\set{term}\star$ & \gramdef & $\text{nonval}(\text{sym}, \overline{\text{term}\star})
                \mathrel{\gramalt} \text{val}(\text{sym}, \overline{\text{term}\star})$  \\
& &             $\mathrel{\gramalt} \text{const}
                \mathrel{\gramalt} x \mathrel{\gramalt} \star_\text{mt} $ \\
\end{tabular}

\end{mathpar}

\noindent Here, \text{mt} is a match type (\figref{fig:term-gram}), so that the allowed $\star$ nodes are $\valstar$, $\nonvalstar$, and $\allstar$. Formally, we define an ordering $\prec$ on $\set{term}\star$ as the reflexive, transitive, congruent closure of an ordering $\prec\p$, defined by the following relations for all $t, s, \overline{t}$:

\vspace{-1em}
\begin{small}
\begin{eqnarray*}
  \text{nonval}(s, \overline{t}) \prec\p \nonvalstar\ \ \    & \ \ \   \text{val}(s, \overline{t}) \prec\p \valstar\ \ \  &\ \ \    t \prec\p \allstar 
\end{eqnarray*}
\end{small}
\vspace{-1em}

\noindent A join operator $t_1 \sqcup t_2$ then follows as the least upper bound of $t_1$ and $t_2$. For instance, {
\small $(\assign{x}{1+1}) \sqcup (\assign{y}{2+1}) = (\assign{\valstar}{\valstar+1})$}. We can then define the set of concrete terms represented by $\abstr{t} \in \set{term}\star$ as:
 
\vspace{-0.5em}
\begin{small}
$$\gamma\left(\abstr{t}\right)= \left\{t \in \set{term} | t \prec  \abstr{t} \right\}$$
\end{small}
\vspace{-0.5em}

\noindent The power of this definition of $\set{term}\star$ is that it allows \textit{abstract matching}, which allows the rewriting machinery behind abstract machines to be automatically lifted to abstract terms.

\begin{definition}[Abstract Matching]
A pattern $t_p \in \set{term}$ matches an abstract term $\abstr{t}\in \set{term}\star$ if there is at least one $t\in \gamma(\abstr{t})$ and substitution $\sigma_t$ such that $\sigma_t(t_p)=t$. The witness of the abstract match is a substitution $\abstr{\sigma}$ defined:
\begin{small}
$$\abstr{\sigma}(x)=\bigsqcup \left\{\sigma_t(x) | t\in \gamma\left(\abstr{t}\right) \wedge \sigma_t(t_p)=t \right\}$$
\end{small}
\end{definition}

 For example, the abstract term $\allstar$ matches the pattern $v_1+v_2$ with a witness $\abstr{\sigma}$ with $\abstr{\sigma}(v_1)=\abstr{\sigma}(v_2)=\valstar$. We are now ready to state the main property of abstract matching.

\begin{property}[Abstract Matching (for terms)]
Let $t_p$ be a pattern, and $t\in\set{term}$ be a matching term, so that there is a substitution $\sigma$ with $\sigma(t_p)=t$. Consider a $t\p \in \set{term}\star$ such that $t\p \succ t$. Then $t\p$ matches $t_p$ with witness $\sigma\p$, where $\sigma\p$ satisfies $\sigma(x) \prec \sigma\p(x)$ for all $x\in \text{dom}(\sigma)=\text{dom}(\sigma\p)$, and $t \prec \sigma\p(t_p)$.
\end{property}

We assume there is some external definition of abstract reduction states $\abstr{\set{State}_l}$ (discussed in \secref{sec:abstract-rewriting}). After doing so, the definitions of abstract terms and states can be lifted to abstract configurations $\set{Conf}\star_l$, lists of abstract configurations $\overline{\set{Conf}\star}$, contexts $\set{Context}\star$, abstract machine states $\set{amState}\star$, and abstract semantic functions $\set{semfun}\star_l$, etc by transitively replacing all instances of $\set{term}$ and $\set{State}_l$ in their  definitions with $\set{term}\star$ and $\abstr{\set{State}_l}$. Abstract matching and the Abstract Matching Property are lifted likewise. To define abstract rewriting, we need a few more preliminaries.

\subsection{Abstract Rewriting}
\label{sec:abstract-rewriting}

Abstract rewriting works by taking the (P)AM execution algorithm of \secref{sec:pam}, and using abstract matching in place of regular matching. Doing so effectively simulates the possible executions of an abstract machine on a large set of terms. To define it, we must extend $\prec$ to other components of an abstract machine state.

First, we assume there is some externally-defined notion of abstract reduction states $\abstr{\set{State}_l}\supseteq \set{State}_l$ with ordering $\prec$. There must also be a notion of substitution satisfying $\sigma_1(\mu) \prec \sigma_2(\mu)$ for $\mu\in \abstr{\set{State}_l}$ if $\sigma_1(x) \prec \sigma_2(x)$ for all $x\in \text{dom}(\sigma_1)$. Finally, there must be an abstract matching procedure for states satisfying the Abstract Matching Property. In the common case where $\set{State}_l$ is the set of environments mapping variables to terms, this all follows by extending the normal definitions above to associative-commutative-idempotent terms.

We can now abstract matching and the $\prec$ ordering over abstract contexts $\set{Context}\star$, abstract configurations $\set{Conf}\star_l$, and lists of abstract configurations $\overline{\set{Conf}\star}_l$ via congruence. We extend $\prec$ over sets of configurations $\mathbb{P}(\set{Conf}\star_l)$ via the multiset ordering \cite{dershowitz1987termination}
\footnote{The multiset ordering, also known as the Dershowitz-Manna ordering, extends a base ordering $\prec_D$ on a elements $d \in D$ to an ordering $\prec_{PD}$ on multisets $A,B \in \mathbb{N}^D$ as follows: $A \prec_{PD} B$ if $A$ can be obtained from $B$ by applying the following operations any number of times: (1) removing an element, and (2) replacing a single element $b\in B$ by a finite set of elements $a_1, \dots, a_n$ such that each $a_i \prec_D b$.}
, and extend $\prec$ over semantic functions pointwise, i.e.: for $f_1, f_2 \in \set{semfun}_l$, $f_1 \prec f_2$ iff $f_1(x) \prec f_2(x)$ for all $x \in \overline{\set{Conf}\star}_l$.

Because normal AM execution may invoke an external semantic function, we need some way to abstract the result of semantic functions. We assume there is some externally-defined set $\abstr{\set{semfun}_l}$. Abstract rewriting will be hence parameterized over a ``base abstraction'' $\beta : \set{semfun}_l \rightarrow \abstr{\set{semfun}_l}$, satisfying the following property:

\vspace{-1em}
\begin{small}
$$\beta(f)(\abstr{\overline{c}}) \succ \bigcup \{ f(\overline{c}) | \overline{c} \in \gamma(\abstr{\overline{c}})  \} $$
\end{small}
\vspace{-1em}

\todo{There's a problem: Multiset ordering is sad if RHS is infinite. So, just instantiating this with abstract +, it does not hold....unless you define the base abstraction to pump out an infinite number of copies of star. Which, hey, why not?}

We are now ready to present abstract rewriting. An AM rule $\am{c_p}{K_p} \amto_l \text{rhs}_p$ for language $l$ is abstractly executed on an AM state $\am{c_1}{K_1}$ using base abstraction $\beta$ as follows:

\begin{enumerate}
    \item Compute the abstract match of $\am{c_p}{K_p}$ and $\am{c_1}{K_1}$, giving witness $\abstr{\sigma}$; fail if they do not abstractly match.
    \item Recursively evaluate $\text{rhs}_p$ as follows:
    \begin{itemize}
      \item If $\text{rhs}_p=\letcomp{c_\text{ret}}{\text{func}(\overline{c_\text{args}})}{\text{rhs}_p\p}$, with $\text{func} \in \set{semfun}_l$, then pick $r\in \beta(\text{func})(\abstr{\sigma}(\overline{c_\text{args}}))$ and compute $\abstr{\sigma_r}$ as the witness of abstractly matching $c_\text{ret}$ against $r$. Define $\abstr{\sigma\p}$ by $\abstr{\sigma\p}(x)=\abstr{\sigma}(x)$ for $x\in\text{dom}(\abstr{\sigma})$, and $\abstr{\sigma\p}(y)=\abstr{\sigma_\text{r}}(y)$ for $y\in \text{dom}(\abstr{\sigma_\text{r}})$. Fail if no such $\abstr{\sigma\p}$ exists. Then recursively evaluate $\text{rhs}_p\p$ using $\abstr{\sigma\p}$ as the new witness.
      \item If $\text{rhs}_p=\am{c_p\p}{K_p\p}$, return the new abstract AM state $\am{\abstr{\sigma}(c_p\p)}{\abstr{\sigma}(K_p\p)}$.
    \end{itemize}
\end{enumerate}

 \noindent If $\am{\abstr{c_1}}{\abstr{K_1}}$ steps to {\small $\am{\abstr{c_2}}{\abstr{K_2}}$} by abstractly executing an AM rule with base abstraction $\beta$, we say that {\small $\am{\abstr{c_1}}{\abstr{K_1}} \absred{\beta}  \am{\abstr{c_2}}{\abstr{K_2}}$}. Here is the fundamental property relating abstract and concrete rewriting:

\begin{lemma}[Lifting Lemma]
If {\small $\am{c_1}{K_1} \prec \am{\abstr{c_1}}{\abstr{K_1}}$}, and {\small $\am{c_1}{K_1} \amto \am{c_2}{K_2}$} by rule F, then, for any $\beta$, there is a {\small $\am{\abstr{c_2}}{\abstr{K_2}}$} such that {\small $\am{c_2}{K_2} \prec \am{\abstr{c_2}}{\abstr{K_2}}$} and {\small $\am{\abstr{c_1}}{\abstr{K_1}} \absred{\beta}  \am{\abstr{c_2}}{\abstr{K_2}}$} by  $F$.
\end{lemma}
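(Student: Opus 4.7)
The plan is to proceed by induction on the structure of the right-hand side of rule $F$, threading an invariant that tracks how concrete substitutions are refined by abstract ones. The key enabling result is the Abstract Matching Property, extended via congruence from terms and states to AM states, configurations, and contexts, together with the base-abstraction property for $\beta$.

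First I would unpack the concrete step. Let rule $F$ have the form $\am{c_p}{K_p} \amto_l \text{rhs}_p$, and let $\sigma$ be the matching substitution with $\sigma(c_p) = c_1$ and $\sigma(K_p) = K_1$. The execution of $F$ then recursively evaluates $\text{rhs}_p$ under $\sigma$, producing a chain of extensions $\sigma = \sigma_0 \subseteq \sigma_1 \subseteq \cdots \subseteq \sigma_m$, with each extension arising from a semantic-function call, and ending in $\am{c_2}{K_2} = \am{\sigma_m(c_p\p)}{\sigma_m(K_p\p)}$. Since $\am{c_1}{K_1} \prec \am{\abstr{c_1}}{\abstr{K_1}}$, the Abstract Matching Property (lifted) yields that $\am{c_p}{K_p}$ abstractly matches $\am{\abstr{c_1}}{\abstr{K_1}}$ with a witness $\abstr{\sigma_0}$ satisfying $\sigma_0(x) \prec \abstr{\sigma_0}(x)$ for every $x \in \text{dom}(\sigma_0)$.

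Next I would induct on the structure of $\text{rhs}_p$, maintaining the invariant that at stage $n$ there exists an abstract witness $\abstr{\sigma_n}$ with $\text{dom}(\abstr{\sigma_n}) = \text{dom}(\sigma_n)$ and $\sigma_n(x) \prec \abstr{\sigma_n}(x)$ for all such $x$. In the base case $\text{rhs}_p = \am{c_p\p}{K_p\p}$, defining $\am{\abstr{c_2}}{\abstr{K_2}} = \am{\abstr{\sigma_m}(c_p\p)}{\abstr{\sigma_m}(K_p\p)}$ suffices, because $\prec$ is a congruence on abstract AM states, so $\am{c_2}{K_2} = \am{\sigma_m(c_p\p)}{\sigma_m(K_p\p)} \prec \am{\abstr{c_2}}{\abstr{K_2}}$, and the abstract execution of $F$ under $\beta$ reaches exactly this state. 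In the inductive case $\text{rhs}_p = \letcomp{c_\text{ret}}{\text{func}(\overline{c_\text{args}})}{\text{rhs}_p\p}$, concrete evaluation picks some $r \in \text{func}(\sigma_n(\overline{c_\text{args}}))$ and extends $\sigma_n$ to $\sigma_{n+1}$ with $\sigma_{n+1}(c_\text{ret}) = r$. By the invariant and congruence, $\sigma_n(\overline{c_\text{args}}) \prec \abstr{\sigma_n}(\overline{c_\text{args}})$, so $\sigma_n(\overline{c_\text{args}}) \in \gamma(\abstr{\sigma_n}(\overline{c_\text{args}}))$; the base-abstraction property then produces an $\abstr{r} \in \beta(\text{func})(\abstr{\sigma_n}(\overline{c_\text{args}}))$ with $r \prec \abstr{r}$. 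Abstract matching of $c_\text{ret}$ against $\abstr{r}$ (using the Matching Property once more) yields an extension $\abstr{\sigma_{n+1}}$ with $\sigma_{n+1}(y) \prec \abstr{\sigma_{n+1}}(y)$ on the new variables, restoring the invariant. Applying the inductive hypothesis to $\text{rhs}_p\p$ completes the step.

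The hard part will be the semantic-function case, specifically extracting, from the set-level inequality $\beta(f)(\abstr{\overline{c}}) \succ \bigcup\{f(\overline{c}) \mid \overline{c} \in \gamma(\abstr{\overline{c}})\}$ interpreted under the multiset ordering, the pointwise witness $\abstr{r}$ above the specific concrete result $r$. The multiset/Dershowitz--Manna ordering only guarantees a majorization, not a direct pointwise cover, so I would need either to read the $\succ$ here as the induced domination (every element of the smaller set is strictly below some element of the larger one, which is the standard consequence) or appeal to the caveat flagged by the authors to strengthen $\beta$ so that each concrete output is dominated. The remaining obligations—congruence of $\prec$ under substitution on terms, states, contexts, and AM states, and the match-domain bookkeeping across $\sigma_n$ versus $\abstr{\sigma_n}$—are routine consequences of the definitions in \secref{sec:abstract-matching} and \secref{sec:abstract-rewriting}.
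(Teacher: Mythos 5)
Your proposal is correct and follows essentially the same route as the paper: the paper proves a Generalized Lifting Lemma (for two base abstractions $\beta_1$ pointwise below $\beta_2$) and obtains this lemma by taking $\beta_1$ to be the identity, but the proof content is exactly your induction on the structure of $\text{rhs}_p$, threading the invariant $\sigma(x) \prec \abstr{\sigma}(x)$ via the Abstract Matching Property and lifting semantic-function results via the base-abstraction property. Your flagged concern about extracting a pointwise witness $\abstr{r} \succ r$ from the multiset-ordered inequality is a real subtlety that the paper's proof glosses over (and which the authors themselves note internally); your proposed reading — that the Dershowitz--Manna ordering implies every element of the dominated set is below some element of the dominating set — is the intended resolution.
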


Note that, if $\beta$ is the identity function, then $\absred{\beta}$ is the same as $\amto$. Hence, the Lifting Lemma theorem follows from a more general statement.

\begin{restatable}[Generalized Lifting Lemma]{lemma}{thmgenabsrewriting}
\label{thm:gen-abs-rewriting}
Let $\beta_1, \beta_2$ be base abstractions where $\beta_1$ is pointwise less than $\beta_2$, i.e.: $\beta_1(f)(\overline{c}) \prec \beta_2(f)(\overline{c})$ for all $f,c$. Suppose $\am{c_1}{K_1} \prec \am{\abstr{c_1}}{\abstr{K_1}}$, and also $\am{c_1}{K_1} \absred{\beta_1} \am{c_2}{K_2}$ by rule F. Then there is a $\am{\abstr{c_2}}{\abstr{K_2}}$ such that $\am{c_2}{K_2} \prec \am{\abstr{c_2}}{\abstr{K_2}}$ and $\am{\abstr{c_1}}{\abstr{K_1}} \absred{\beta_2}  \am{\abstr{c_2}}{\abstr{K_2}}$ by rule $F$.
\end{restatable}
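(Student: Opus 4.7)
The plan is to prove the Generalized Lifting Lemma by induction on the structure of the RHS of rule $F$, carrying along the invariant that the two witness substitutions arising from matching $F$ on each side are pointwise related by $\prec$.

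First, I would match the LHS $\am{c_p}{K_p}$ of $F$ against $\am{c_1}{K_1}$ (a successful match exists because we are given a $\absred{\beta_1}$ step), obtaining witness $\abstr{\sigma_1}$, and against $\am{\abstr{c_1}}{\abstr{K_1}}$, obtaining witness $\abstr{\sigma_2}$. The second match succeeds and satisfies $\abstr{\sigma_1}(x) \prec \abstr{\sigma_2}(x)$ for every pattern variable $x$ by the Abstract Matching Property, lifted via congruence from terms to configurations, contexts, and AM states as described at the end of \secref{sec:abstract-matching}.

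Next, I would induct on $\text{rhs}_p$. In the base case $\text{rhs}_p = \am{c_p\p}{K_p\p}$, the two resulting states are $\am{\abstr{\sigma_1}(c_p\p)}{\abstr{\sigma_1}(K_p\p)}$ and $\am{\abstr{\sigma_2}(c_p\p)}{\abstr{\sigma_2}(K_p\p)}$, and the conclusion follows from congruence of substitution under $\prec$. In the inductive case $\text{rhs}_p = \letcomp{c_\text{ret}}{\text{func}(\overline{c_\text{args}})}{\text{rhs}_p\p}$, the given $\absred{\beta_1}$ step picks some $r_1 \in \beta_1(\text{func})(\abstr{\sigma_1}(\overline{c_\text{args}}))$, and the goal is to produce a matching $r_2 \in \beta_2(\text{func})(\abstr{\sigma_2}(\overline{c_\text{args}}))$ with $r_1 \prec r_2$. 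From $\abstr{\sigma_1}(\overline{c_\text{args}}) \prec \abstr{\sigma_2}(\overline{c_\text{args}})$ together with the pointwise assumption $\beta_1 \prec \beta_2$, I would argue $\beta_1(\text{func})(\abstr{\sigma_1}(\overline{c_\text{args}})) \prec \beta_2(\text{func})(\abstr{\sigma_2}(\overline{c_\text{args}}))$ in the multiset ordering; the required $r_2$ is then delivered by that ordering's definition. Abstractly matching $c_\text{ret}$ against both $r_1$ and $r_2$ yields new witnesses related by $\prec$, and extending $\abstr{\sigma_1}$ and $\abstr{\sigma_2}$ by them preserves pointwise $\prec$ on the common domain, letting the induction hypothesis on $\text{rhs}_p\p$ close the case.

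The hardest step is establishing the inequality between the two abstract-function outputs. The paper's stated requirement on base abstractions only constrains each $\beta(f)$ against its concrete underapproximations, not directly against another base abstraction on a larger abstract input. To bridge this I would either strengthen the well-formedness of base abstractions to also require pointwise monotonicity in their arguments --- a natural property that holds for the abstractions actually used in \mandate --- or derive the needed inequality by applying the safety property of $\beta_2$ to sweep up every concretization that $\beta_1(\text{func})(\abstr{\sigma_1}(\overline{c_\text{args}}))$ could cover. The remaining bookkeeping --- that the extended witnesses $\abstr{\sigma_1\p}, \abstr{\sigma_2\p}$ remain comparable on their common domain, that $\prec$ really congruences through every term, state, configuration, and context constructor, and that the same rule $F$ is used on both sides --- is routine once the abstract-function step is handled.
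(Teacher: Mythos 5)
Your proposal follows essentially the same route as the paper's proof: match the LHS of $F$ against both states, use the Abstract Matching Property to obtain witnesses related pointwise by $\prec$, and then recurse on the structure of the rule's RHS, with the base case closed by congruence of substitution and the semantic-function case closed by exhibiting a $\prec$-larger output of $\beta_2(\text{func})$. The monotonicity gap you flag in the semantic-function case is real, and the paper's own proof does not close it any more carefully --- it simply asserts the existence of $\abstr{r} \succ r$ from the base-abstraction property, which suffices only in the instance actually used downstream (where $\beta_1$ is the identity and the picked output is concrete, so the soundness condition on $\beta_2$ applies directly); for the general statement your proposed strengthening to argument-monotone base abstractions is the right fix.
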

\begin{proof}
See \appref{app:abs-rewriting-proofs}.
\end{proof}

\subsection{Machine Abstractions}
\label{sec:abstractions}

This section finally ties the knot on the adage ``CFGs are an abstraction of control-flow'' by defining the relation between a program's actual control flow (the concrete state transition graph) and its finite CFG. Intuitively, this relation involves skipping intermediate steps and combining states that correspond to the same program point. Yet getting the details right is tricky.

The simple function call \code{f()} exemplifies the difficulties of defining this relationship. Running \code{f()} may evaluate arbitrary code. (In the language of abstract rewriting, without context on \code{f}, it evaluates to an arbitrary program $\nonvalstar$.) An intraprocedural CFG generator must generate a small fixed number of nodes (typically, $1$ evaluation node or part of a basic-block node) to represent the call to \code{f()}; any additional nodes depending on the definition of \code{f} makes it no longer intraprocedural. It must then choose to either (1) draw edges through these nodes continuing onwards, or (2) when \code{f()} provably does not terminate, it may optionally draw edges into but not out of these nodes. Under a naive ``combine states and skip steps'' definition of valid abstraction, an intraprocedural CFG-generator must first perform an interprocedural termination analysis to be sound. And if \code{f()} only conditionally diverges, then both options are incorrect.

We shall present a relation which does indeed justify abstracting \code{f()} to $\valstar$, using a definition with subtle treatment of nontermination and branching. The upshot of this work is the ability to write control-flow abstractions with formal guarantees whose actual implementation is trivial; abstracting function calls as described here is but a small modification to the $5$ lines of \figref{fig:val-irrelevance-code}. And, in spite of the definition's complexity, actually showing a candidate abstraction meets the definition is usually trivial.

Over the next paragraphs, we develop the abstraction preorder $a \sqsubseteq b$ between abstract states. From this, we obtain our first formal definition of a CFG: when $G$ is the concrete transition graph for some program $P$, and $H$ is a finite abstract transition graph, every state in $G$ is $\sqsubseteq$ some state in $H$, and every state in $H$ is $\sqsupseteq$ some state in $G$, then $H$ is a valid CFG for $P$.

The $(\sqsubseteq)$ relation is built from three smaller relations. Clearly, $(\sqsubseteq)$ must contain the $(\prec)$ ordering, so that a single CFG node may describe concrete states that differ only in values. It should be able to ignore some steps of computation (e.g.: desugaring a while-loop), and so should involve {\small $(\absred{\beta})$}. It also must be able to skip over loops regardless of termination. We define the $(\triangleleft)$ relation to skip over infinite loops. Intuitively, $a=\am{(\abstr{t}, \abstr{\mu})}{\abstr{K}} \triangleleft \am{(\valstar, \top)}{\abstr{K}'}$ if $\top$ is the ``any'' state (i.e.: overapproximates all possible effects) and all executions of $a$ get ``trapped'' in some part of the program, with $\abstr{K}'$ never getting popped off the stack.

\todo{Wait why am I allowing breaking out here? Isn't that redundant with the other parts of the abstraction definition? Also, why separate K and K'; I don't think this works unless K=K'.  For camera ready, need to see if this can be simplified....as of 5/30/2022, I don't understand this comment.}

\begin{definition}[Nontermination-cutting ordering]
Let $\top_l$ be a maximal element of $\abstr{\set{State}_l}$.  Consider a state $a=\am{(\abstr{t}, \abstr{s})}{\abstr{K}}\in\set{amState}\star$, and let $\abstr{K}\p$ be a subcontext of $\abstr{K}$. Suppose that, for all $\am{(t,s)}{K}\prec a$, and for all derivations of the form $\am{(t,s)}{K}\amto^* \am{(t\p,s\p)}{K\p}$, either $K$ is a subterm of $K\p$, or there is a subderivation of the form $\am{(t,s)}{K}\amto^* \am{(t\pp, s\pp)}{K}$ such that $t\pp \prec \valstar$. Then $a \triangleleft \am{(\valstar,\top_l)}{\abstr{K}\p}$. If $a$ does not satisfy this condition, then $\nexists a\p. a \triangleleft a\p$.
\end{definition}

We now combine the $(\prec), (\triangleleft),$ and {\small $(\absred{\beta})$} orderings to fully describe what an abstraction may do. The naive approach would be to take the transitive closure of $(\prec) \cup (\triangleleft) \cup (\absred{\beta})$. But this would permit abstracting {\small $\am{\ite{\valstar}{A}{B}}{K}$} to {\small $\am{A}{K}$}! Instead:

\begin{definition}[Ordering $\sqsubseteq$ of $\set{amState}\star$]
\label{defn:abstraction-ordering}
The relation $\sqsubseteq$ is defined inductively as follows: $a\sqsubseteq b$ if any of the following hold:
\begin{enumerate}
    \item $a = b$
    \item For some $c$, $a \prec c$ and $c \sqsubseteq b$
    \item For some $c$,  $a \triangleleft c$ and $c \sqsubseteq b$
    \item \textit{For all} $c$ such that $a \absred{\beta} c$, $c \sqsubseteq b$
\end{enumerate}
\end{definition}

We can now define an abstract machine abstraction to be a pair $(\alpha, \beta)$, where $\beta$ is a base abstraction and $\alpha : \set{amState}\star \rightarrow \set{amState}\star$ is a function which is an upper closure operator under the $\sqsubseteq$ ordering, meaning it is monotone and satisfies $x \sqsubseteq \alpha(x)$ and $\alpha(\alpha(x)) = \alpha(x)$. It is well-known that such an upper closure operator establishes a Galois connection between $\set{amState}\star$ and the image of $\alpha$, $\alpha(\set{amState}\star)$ \cite{nielson2015principles}. We will assume that every $\alpha$ is associated with a unique $\beta$, and will abbreviate the machine abstraction $(\alpha, \beta)$ as just $\alpha$. \appref{app:graph-pattern-correctness} adds a few additional technical restrictions on $\alpha$. These restrictions have no bearing on interpreted-mode graph generation, but do rule out some pathological cases in the correctness proofs for compiled-mode graph generation.

We now define the abstract transition relation $\absred{\alpha}$ for $\alpha$ as: if {\small $a \absred{\beta}  b$}, then {\small $a \absred{\alpha} \alpha(b)$}. In other words, the abstract transition relation is the same as abstract rewriting, except that the RHS of a transition must always lie within the image of the abstraction $\alpha$.
We now state the fundamental theorem of abstract transitions, proved in \appref{app:abs-rewriting-proofs}.

\begin{restatable}[Abstract Transition]{theorem}{thmabstransition}
For $a, b \in \set{amState}$, if $\alpha$ is an abstraction with base abstraction $\beta$, and $a\amto b$, then either $b \sqsubseteq \alpha(a)$, or $\exists g \in \set{amState}\star.\; \alpha(a) \absred{\alpha} g \wedge b \sqsubseteq g$.
\end{restatable}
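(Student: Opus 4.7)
The plan is to establish the theorem via a stronger intermediate claim that holds for arbitrary abstract $a^\star, c^\star$, then instantiate it with $c^\star = \alpha(a)$. Concretely, the intermediate claim I would prove is: whenever $a^\star \sqsubseteq c^\star$ and $a^\star \absred{\beta} b^\star$, either $b^\star \sqsubseteq c^\star$ or there exists $g^\star$ with $c^\star \absred{\beta} g^\star$ and $b^\star \sqsubseteq g^\star$. The advantage of this formulation is that its conclusion has the same shape as its hypothesis, permitting induction on the derivation of $a^\star \sqsubseteq c^\star$ given by Definition~\ref{defn:abstraction-ordering}.

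The induction splits into the four alternatives of that definition. The reflexive case $a^\star = c^\star$ is immediate, taking $g^\star = b^\star$. The $\prec$-case, $a^\star \prec c^{\star\star}$ with $c^{\star\star} \sqsubseteq c^\star$, is handled by first invoking the Generalized Lifting Lemma on $a^\star \prec c^{\star\star}$ and $a^\star \absred{\beta} b^\star$ to obtain some $b^{\star\star}$ with $b^\star \prec b^{\star\star}$ and $c^{\star\star} \absred{\beta} b^{\star\star}$, and then applying the inductive hypothesis to the smaller derivation $c^{\star\star} \sqsubseteq c^\star$ with the lifted step. The fourth, universal, alternative of the definition (that every $\absred{\beta}$-successor of $a^\star$ lies $\sqsubseteq c^\star$) directly yields $b^\star \sqsubseteq c^\star$ since $a^\star \absred{\beta} b^\star$ is one of the steps being quantified over.

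Given the intermediate claim, the theorem follows by choosing $c^\star = \alpha(a)$. Because $\alpha$ is an upper closure operator with respect to $\sqsubseteq$, we have $a \sqsubseteq \alpha(a)$. The concrete step $a \amto b$ is promoted to an abstract step $a \absred{\beta} b^\star$ with $b \prec b^\star$ via the Lifting Lemma. The intermediate claim then yields either $b^\star \sqsubseteq \alpha(a)$, from which $b \sqsubseteq \alpha(a)$ follows by transitivity, or $\alpha(a) \absred{\beta} g^\star$ with $b^\star \sqsubseteq g^\star$. In the latter branch, set $g = \alpha(g^\star)$: by definition of $\absred{\alpha}$ we have $\alpha(a) \absred{\alpha} g$, and from $b \prec b^\star \sqsubseteq g^\star \sqsubseteq \alpha(g^\star) = g$ (the penultimate step by extensivity of $\alpha$) we obtain $b \sqsubseteq g$.

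The main obstacle is the $\triangleleft$-case of the intermediate claim: showing that if $a^\star \triangleleft c^{\star\star}$ (so that every concrete trace from $a^\star$ either extends its stack forever or returns to the original context with a value) and $a^\star \absred{\beta} b^\star$, then $b^\star \sqsubseteq c^{\star\star}$. The natural approach is to argue $b^\star \triangleleft c^{\star\star}$ using the same witness context $\abstr{K}'$, but the wrinkle is that concrete refinements of $b^\star$ need not all arise as one-step successors of concrete refinements of $a^\star$, so the trapped property cannot simply be pushed forward by the Lifting Lemma. I expect the additional technical restrictions on $\alpha$ promised in \appref{app:graph-pattern-correctness} are exactly what rule out this pathology, by constraining the image of $\alpha$ so that spurious refinements of $b^\star$ still inherit trapped-ness relative to the witness context $\abstr{K}'$.
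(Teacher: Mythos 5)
Your skeleton matches the paper's: both arguments case-split on how the derivation of $a \sqsubseteq \alpha(a)$ begins, dispatch the $\prec$ case with the Lifting Lemma, and observe that the universally-quantified $\absred{\beta}$ clause of Definition~\ref{defn:abstraction-ordering} closes its case for free. The difference is that you strengthen the claim to abstract source states and abstract steps $a^\star \absred{\beta} b^\star$ so that you can induct on the $\sqsubseteq$ derivation, and this strengthening is exactly what makes the $\triangleleft$ case unprovable: as you yourself note, concretizations of $b^\star$ need not arise as successors of concretizations of $a^\star$, so trapped-ness does not transfer forward. The extra assumptions in \appref{app:graph-pattern-correctness} (monotonicity of $\alpha$ under $\prec$ and no frame-dropping) constrain $\alpha$, not the $\absred{\beta}$-successors of an arbitrary abstract state, and do not repair this; your hope that they do is unfounded. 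This is a genuine gap, and it is created by the generalization rather than being inherent to the theorem.

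The paper avoids the problem by never leaving the concrete world on the left. It normalizes the derivation of $a \sqsubseteq \alpha(a)$, using reflexivity and transitivity of $\prec$, into one of three canonical shapes: a single $\prec$; a $\prec$ followed by $\triangleleft$; or a $\prec$ followed by the universal clause. In the $\triangleleft$ shape it first pulls the relation back along $\prec$ to get $a \triangleleft x_2$ (Observation~\ref{obs:noncut-prop}, part 1) and then pushes it forward along the \emph{concrete} step $a \amto b$ to get $b \triangleleft x_2$ (Observation~\ref{obs:noncut-prop}, part 2); the latter is unproblematic precisely because any concrete derivation out of $b$ extends to one out of $a$ by prepending the step. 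So the fix is to drop your generalization: keep $a$ and $b$ concrete throughout, absorb consecutive $\prec$ steps by transitivity instead of inducting over them, and handle $\triangleleft$ with the concrete pushforward. Your remaining cases and your final assembly (taking $g = \alpha(g^\star)$ and using extensivity of $\alpha$) are fine.
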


\noindent Following are some example abstractions.

\paragraph{Abstraction: Value-Irrelevance}

The value-irrelevance abstraction (code in \figref{fig:val-irrelevance-code}) maps each node
$\text{val}(\text{sym},t)$ and each constant to $\valstar$, and each
semantic function to the constant $x\mapsto \valstar$. Combining this
with the abstract machine for \IMP yields an expression-level
control-flow graph, as in \figref{fig:cfg-variants-expr}. \textsc{Tiger} and \textsc{MITScript} use a modified version described at the beginning of this section, which also abstracts all function calls to $\valstar$.


\paragraph{Abstraction: Expression-Irrelevance}

This abstraction is like value-irrelevance, but
it also ``skips'' the evaluation of expressions by mapping any expression
under focus to $\valstar$. In doing so, it overapproximates modifications to the state from running $e$; the easiest implementation is to add the mapping $[\valstar\mapsto \valstar]$. Combining this with an abstract machine yields a statement-level control-flow graph.

\paragraph{The Boolean-Tracking Abstraction}

This abstraction is similar to  value-irrelevance, except that it preserves some $\true$ and $\false$ values. It differs in two ways: (1) boolean-valued semantic functions such as $<$ nondeterministically return $\{\true,\false\}$, (2) for a configuration $(t,\mu)$, it preserves all $\true$ and $\false$ values in $t$, as well as the value of $\mu(v)$ for each $v$ in a provided set of \textit{tracking variables} $V$. Including the variable ``b'' in the tracked set, and combining this with the basic-block projection (\secref{sec:projections}) yields the path-sensitive control-flow graph seen in \figref{fig:cfg-variants-path-sensitive}.


\subsection{Projections}
\label{sec:projections}

A \keyterm{projection}, also called a \keyterm{quotient map}, is a function $\pi : \set{amState}\star \rightarrow \set{amState}\star$. They are used after constructing the initial CFG to merge together extra nodes.

The definition below comes from \S 5.4 of \citet{manolios2001mechanical}, which presented it in the context of bisimulations. It differs slightly from the standard graph-theoretic definitions, in that it has an extra condition to prevent spurious self-loops.

\begin{definition}
\label{defn:quotient-graph}
Let $(V,E)$ be a graph with $V \subseteq \set{amState}\star$, $E \subseteq \left(\set{amState}\star^2\right)$. Let $\pi$ be a projection. Then the \textbf{projected graph} (or \textbf{quotient graph}) is the graph $(V\p,E\p)$ satisfying:

\begin{enumerate} 
\item $V\p=\pi(V)$
\item For $a \neq b$, $(a,b)\in E\p$ iff there is $(c,d)\in E$ such that $(a,b) = (\pi(c), \pi(d))$.
\item $(a, a) \in E\p$ iff, for all $b\in V$ such that $\pi(b)=a$, there is $c\in V$ such that $(b, c) \in E$.
\end{enumerate}
\end{definition}

\noindent Many of the uses of projections could be accomplished by instead using a coarser abstraction. However, projections have the advantage that they have no additional requirements to prove: they can be any arbitrary function. The addition of projections gives our final definition of a CFG: when $G$ is the concrete transition graph for some program $P$, and $H$ is a finite abstract transition graph, and every state in $G$ is $\sqsubseteq$ some state in $H$ and vice versa, then for any projection $\pi$, the projected graph of $H$ under $\pi$ is a valid CFG for $P$. In short, a CFG is \textbf{a projection of the transition graph of abstracted abstract machine states}.


Projections can be defined either manually, or automatically by the
graph-pattern code generator (\secref{sec:syntax-directed}), and are most often used to hide internal details of a language's semantics, such as in the graph pattern of \figref{fig:front-page-result}, which merges away the internal steps of a while-loop which are mere artifacts of the SOS rules. But one important projection goes further:

\paragraph{Projection: Basic Block}

\todo{This is kinda wrong}
The basic-block projection inputs $\am{c}{K}$ and removes all but the last top-level sequence-nodes from $c$ and $K$, essentially identifying each statement of a basic-block with the last statement in the block. In combination with the expression-irrelevance abstraction, this yields the classic basic-block control-flow graph, as in \figref{fig:cfg-variants-bb}.

\todo{Put this somewhere:
The correspondence between operational semantics and control-flow graphs is then given by Definition \ref{defn:quotient-graph} and Theorems \ref{thm:sos-pam}, \ref{thm:pam-am},  \ref{thm:am-pam}, and \ref{thm:trans-abstraction}.
}

\subsection{Termination}
\label{sec:termination}

Will the interpreted-mode CFG-generation algorithm terminate in a finite control-flow graph? If the abstraction used is the identity, and the input program may have infinitely many concrete states, the answer is a clear no. If the abstraction reduces everything to $\valstar$, the answer is a clear yes. In general, it depends both on the abstraction as well as the rules of the language.

If CFG-generation terminates for a program P, that means there are
only finitely-many states reachable under the $\absred{\alpha}$
relation from P's start state. Term-rewriting researchers call this
property ``global finiteness,'' and have proven it is usually equivalent to another property,
``quasi-termination'' \cite{dershowitz1987termination}. While the
literature on these properties can help, there must still be a
separate proof for the termination of CFG -generation for every
language/abstraction pair. Such a proof may be a
tedious one of showing that e.g.: every while-loop steps to
$\ite{e}{\seq{s}{\while{e}{s}}}{\impskip}$, which eventually steps
back to $\while{e}{s}$, and never grows the stack.

Fortunately, for abstractions which discard context, the graph-pattern
generation algorithm of \secref{sec:syntax-directed} does this
analysis automatically. Because the transitions discovered by abstract
rewriting for a specific program are a subset of the union of graph
patterns for all AST nodes in that program, we discuss in \secref{sec:automated-termination} and prove in \appref{app:graph-pattern-correctness} that, if graph-pattern
generation for a given language and abstraction terminates, then so
does interpreted-mode CFG generation for all programs.

\section{Syntax-Directed CFG Generators}
\label{sec:syntax-directed}

Although it may sound like a large leap to go from generating CFGs for
specific programs to statically inferring the control-flow of every
possible node, it is actually only little more than running the
CFG-generation algorithm of \secref{sec:cfg-abstraction} on a term
with variables. Indeed, the core implementation in \mandate is only 22
lines of code. Hence, the description in
\secref{sec:overview-graph-pat} was already mostly complete, and we
have only a few details to add. Correctness results are given in \appref{app:graph-pattern-correctness}.

There are two primary points of simplification in
\secref{sec:overview-graph-pat}. First, it did not explain
how to run AM rules on terms with variables. Second, it ignored match types.

\paragraph{Executing Terms with Variables}

A $\star$ term does not have identity. In a state $\am{(\nonvalstar,\mu)}{K}$, some work is needed to determine which (if any) term in the original program that $\nonvalstar$ corresponds to. So abstract rewriting can discover that $\nonvalstar + \nonvalstar$ steps
to a state that executes a $\nonvalstar$, but it cannot tell you which
$\nonvalstar$ it executes first, which is needed to tie a control-flow graph to the original AST.

Variables, in contrast, do have identity. And variables can serve in the place of $\star$, for there is a simple way to determine if an AM state with variables could be instantiated to match
the LHS of an AM rule: if they unify
.
The result is then the RHS of
the rule with the substitution applied. This shows that, from the term
$a_\mtnonval+b_\mtnonval$, $a_\mtnonval$ is evaluated first.

This operation is called \textit{narrowing}, used since 1975 in decision procedures \cite{lankford1975canonical} and
functional-logic programming. We present a variant of narrowing which makes it suitable for abstract interpretation of semantics: we say that $f \narrowsto g$ if, for some rule $x \rightarrow y$, $f$ and $x$ unify by substitution $\sigma$, and $g=\sigma(y)$. The resulting relation $\absnarrow{\alpha}$ is defined
identically to the development of $\absred{\alpha}$ given in
\secref{sec:abstract-rewriting} and \secref{sec:abstractions}, except that the witness
$\abstr{\sigma}$ is computed by unification instead of by matching.

Note that the abstract-rewriting of
\secref{sec:cfg-abstraction} can be viewed as an overapproximation of our version of narrowing that follows each
narrowing step with an abstraction that replaces each occurrence of the same with distinct fresh variables (i.e.: $\star$ nodes), along
with extensions to handle match types and semantic functions.

Our abstract rewriting differs from conventional narrowing in
an important way. Conventional narrowing is actually a ternary relation. For instance,
the rule $f(f(x))\rightarrow x$ enables the derivation
$f(y)\narrowsto_{[y\mapsto f(y\p)]} y\p$, with the
unifying substitution as the third component of the relation. We turn
it into a binary relation by applying the substitution on the right,
and ignoring it on the left. 

This small tweak changes the interpretation of narrowing while
preserving its ability to overapproximate rewriting. In conventional narrowing, $[v_1 \mapsto v_2]$ represents
an environment with a single, to-be-determined key/value pair. $v_1$ may unify with both the names ``x'' and ``y'', but only in
parallel universes. In abstract rewriting, $[v_1\mapsto v_2]$
represents arbitrary environments, where any name maps to any value.

\todo{Narrowing becomes interesting when you chain together several transitions. The ``parallel'' universes refers to how, if A ~>sigma1 B, and B ~>sigma2 C, then  must have sigma1=sigma2 in order to compose them. Different paths yield different compositions. If A ~>sigma1 D and A~>sigma3 E, and sigma1 and sigma3 are incompatible, it matters not if D and E can both step to F; the paths will not rejoin. Perhaps I can elaborate this somewhere.}

\paragraph{Graph-Pattern Generation (Now with Match-Types)}

The final requirement to generate graph patterns for a language $l$ is
that the ordering for $\set{State}_l$ has a maximum value $\top_l$. Then, graph
patterns are generated as follows: For each
node type $N$, generate
the abstract transition graph by narrowing from the start state
{\small $S=\am{(N(\overline{x^i_\mtnonval}), \top_l)}{k}$}, where the $x^i$ are
arbitrary non-value variables, and $k$ is fresh. Any time a state of the form {\small $\am{(e_\mtnonval,\mu)}{K}$} is
encountered, instead of narrowing, add a transitive edge to
{\small $\am{(\valstar, \top_l)}{K}$}. Halt at any state $\am{(v, \mu)}{k}$,
where $k$ is the same context variable as $S$, and $v$, $\mu$ are
an arbitrary value and reduction state.

From this, we see that the graph pattern in \figref{fig:front-page-result} was
slightly simplified. The real one has each starting variable
annotated with $\mtnonval$ and replaces each $\star$ node with $\valstar$.

\paragraph{Code-Generation}

Given a graph pattern, the code generator will perform a greedy search to find a projection which groups the graph nodes into loop-free segments, each group associated with an AST node. From this graph, the final code generation step is trivial, with one \code{connect} statement per edge in the projected graph pattern (see \figref{fig:front-page-result}). The discovered projection is guaranteed to be valid, and hence the generated CFG-generator is guaranteed to be correct. While they do not appear in the languages under study, there are also pathological cases where a projection of the desired form does not exist; in this case, the search terminates with failure.

The code generator's high-level workings were in \secref{sec:overview-graph-pat}; some additional details are in
\appref{app:code-gen}; examples are available in \secref{sec:tiger-mitscript-cfgs} and the supplementary material.

\subsection{An Automated Termination-Prover}
\label{sec:automated-termination}

As the normal abstract state transition graph overapproximates all concrete executions of a program, a graph pattern for a language construct overapproximates the relevant fragment of all abstract state transition graphs.

Is it possible to have finite graph patterns but infinite abstract transition graphs, i.e.: for the compiled-mode CFG generator to terminate, but not the interpreted-mode one? With some light assumptions, we can show this is not the case. Hence, while the output of the compiled-mode CFG-generator will always be less precise than that of an interpreted-mode generator, it turns out running the compiled-mode generator once per language will prove that the interpreted-mode generator terminates on all programs in that language.

\begin{restatable}{theorem}{thmbothfinite}
\label{thm:both-finite}
Let $a\in\set{amState}_l$ and $\alpha$ be a machine abstraction.  If the graph patterns under abstraction $\alpha$ for all nodes in $a$ are finite, then only finitely many states are reachable from $a$ under the $\absred{\alpha}$ relation.
\end{restatable}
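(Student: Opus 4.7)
The plan is to exhibit an injection from the set of $\absred{\alpha}$-reachable states into a finite set built from the graph patterns of node types appearing in $a$. Every reachable state $\am{(t,\mu)}{K}$ will be labelled by (i) a node of the graph pattern for the root type $N$ of $t$, together with (ii) a finite ``call stack'' recording how the execution descended into nested subterm evaluations. Each component will be shown to range over a finite set, yielding a product bound on reachable states.

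First I would prove a nested simulation lemma. Starting from a focused state of the form $\am{(N(\overline{t_i}), \mu_0)}{K_0}$, each $\absred{\alpha}$ step is mirrored by a step in the graph pattern for $N$: the pattern's context variable $k$ instantiates to $K_0$, its child variables $x^i_\mtnonval$ instantiate to the concrete $t_i$, and each abstract-rewriting step either matches a narrowing edge of the pattern --- justified by the abstract-rewriting-as-narrowing correspondence of \secref{sec:syntax-directed} together with the Lifting Lemma --- or corresponds to descending into a subterm $e$ precisely at a point where the graph pattern inserts a transitive edge. In the descent case, the concrete execution recursively enters the graph pattern for the root type of $e$, and when $e$ reduces to a value the execution returns to a state matching the transitive edge's target and simulation resumes in the outer pattern. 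Because graph patterns halt at value states with the pattern's context variable unchanged, each recursive descent visits only finitely many distinct state classes before returning.

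Second I would bound the reachable set using this nested structure. The current graph-pattern node is drawn from a finite set by hypothesis. The descent stack has bounded depth because each descent enters a strictly smaller focused subterm, and every focused term arising during execution is either a subterm of $a$ or appears inside some graph pattern for a node type in $a$ --- a finite pool. The outer context at each stack level is determined by concatenation of frames drawn from the finite collection of frames appearing in the relevant graph patterns. The product of these bounds is finite, giving the desired finiteness of the $\absred{\alpha}$-reachable set.

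The main obstacle will be making the nested simulation precise, especially proving that values returned from a sub-evaluation re-enter the outer graph pattern at the target of its transitive edge. This requires a careful invariant relating the post-descent concrete state to the corresponding graph-pattern instantiation, together with a well-founded measure (for instance, the depth of the focused subterm within its ambient term) to support induction across descents. A secondary concern is handling the case where two distinct concrete sub-evaluations return control to the outer pattern at the same transitive-edge target; since $\alpha$ is an upper closure operator, such collapses only shrink the reachable set, so they are benign. Once the simulation is established, counting the reachable states reduces to the product bound above.
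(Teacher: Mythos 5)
Your overall strategy --- simulate the interpreted-mode execution inside the graph patterns and then count --- matches the paper's, and your ``nested simulation lemma'' is essentially the paper's Correctness of Graph Patterns theorem (Theorem \ref{thm:graph-pat-correctness}), including the treatment of transitive edges; the bounded descent depth is also fine, since descents cross transitive edges only into the non-value children of the current node, hence into proper subterms of $a$. The gap is in the counting step. You claim an \emph{injection} from reachable states into pairs (graph-pattern node, call stack), but you never argue injectivity, and injectivity is exactly the hard part of this theorem. The simulation only gives you $f \prec \sigma(p)$ for each reachable state $f$ and some pattern node $p$; since $\sigma(p)$ is still an abstract term (it contains $\valstar$ nodes and a reduction state descended from $\top_l$), infinitely many distinct reachable states could a priori sit below the same pattern node --- think of a loop whose abstract environment changes on every iteration while the pattern, which starts from $\top_l$, stays finite. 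Your only remark on collisions (``two distinct sub-evaluations returning to the same transitive-edge target \dots only shrink the reachable set'') addresses the harmless direction (many labels collapsing to one state), not the fatal one (many states sharing one label). The same issue infects your claim that the instantiated frames at each stack level range over a finite set.

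The paper closes this gap differently: it argues by contradiction via K\"onig's Lemma that an infinite CFG fragment must contain an infinite path (out-degrees are finite because semantic functions have finite images --- a fact you would also need but do not invoke), whose image in the finite pattern must revisit some node, yielding a cycle $p_1 \to \dots \to p_1$; it then uses the facts that the witnessing substitution at the later visit \emph{extends} the one at the earlier visit and that $\sigma(p_1)$ is ground to force the two corresponding states in the fragment to coincide, a contradiction. Some version of this substitution-extension/groundness argument --- or any other proof that each pattern node is realized by only finitely many reachable states --- is what your proposal is missing; without it the ``product bound'' bounds only the number of labels, not the number of states.
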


This proof requires a few more technical assumptions and a refinement to the definition of $\absnarrow{\alpha}$. The details are in \appref{app:graph-pattern-correctness}.

\section{Deriving Control from a \mandate}
\label{sec:mandate}

\todo{Give a review of how much we've done so far, and then say what's left}

\begin{wrapfigure}{r}{0.55\textwidth}
\begin{mdframed}
\vspace{-0.2em}
\begin{center}
\hspace{-05cm}\begin{lstlisting}[language=Haskell,basicstyle=\tiny\sffamily]
name "assn-cong" $
mkRule5 (\x e e' mu mu' ->
 let (gx, ne, ge') = (GVar x, NVar e, GVar e')
 in StepTo (conf (Assign gx ne) mu)
     (LetStepTo (conf ge' mu') (conf ne mu)
      (Build $ conf (Assign gx ge') mu')))
\end{lstlisting}
\end{center}
\vspace{-0.2em}
\end{mdframed}
\vspace{-0.8em}
\caption{Encoding of the \textsc{AssnCong} rule from \secref{sec:sos}}
\vspace{-0.5em}
\label{fig:sos-dsl}
\end{wrapfigure}

We have implemented our approach in a tool called \mandate. \mandate
takes as input an operational semantics for a language as an embedded
Haskell DSL, and generates a control-flow graph generator for that
language for every abstraction/projection supplied. It can then output
a generated CFG to the graph-visualization language DOT.  \mandate
totals approximately $9600$ lines of Haskell: $4100$ lines in the core engine, and $5500$ lines
for our language definitions and example analyzers. $1350$ of those lines define the $80$ SOS rules for \textsc{Tiger} and $60$ rules for \textsc{MITScript}, using the DSL depicted in \figref{fig:sos-dsl}. $550$ of those lines are automatically-generated CFG-generation code.


\begin{wraptable}[12]{l}{0.55\textwidth}
\vspace{-1.1em}
\caption{}
\vspace{-1em}
\label{table:cfg-impls}
\begin{tabular}{r|ccccc}
   \hline
     & \multicolumn{3}{c}{Interpreted} & \multicolumn{2}{c}{Compiled} \\
          & E & S & P & E & S \\
\hline
  \textbf{\IMP} & \checkmark & \checkmark & \checkmark & \checkmark & \checkmark \\
  \textbf{\textsc{Tiger}} & \checkmark & N/A & $\times$ & \checkmark & N/A \\
  \textbf{\textsc{MITScript}} & \checkmark & \checkmark & $\times$ & \checkmark & \checkmark \\
\hline
\end{tabular}
\vspace{0.5em}

\caption{Example analyzers}
\vspace{-0.5em}
\label{table:example-analyzers}
\begin{tabular}{r|r|ccc}
   \hline
          & LOC & \IMP & \textsc{Tiger} & \textsc{MIT} \\
\hline
  \textbf{Constant-prop} & $115$ & \checkmark & \checkmark & \checkmark \\
  \textbf{Paren-balancing} & $49$ & \checkmark & $\times$ & $\times$ \\
\hline
\end{tabular}
\end{wraptable}

Table \ref{table:cfg-impls} lists the CFG-generators we have generated
using \mandate. The columns E, S, and P correspond to the
expression-level, statement-level, and path-sensitive CFGs from
\secref{sec:introduction}, and which are generated by the
three abstractions from \secref{sec:abstractions}. The expression- and
statement-level CFG-generators come in interpreted-mode and
compiled-mode flavors. \textsc{Tiger} lacks a statement-level CFG-generator, because everything in \textsc{Tiger} is an expression. \secref{sec:tiger-mitscript-cfgs} explains the structure of heaps in \textsc{Tiger} and \textsc{MITScript}, and the implementation that would be required for \mandate to support the boolean-tracking abstraction for them.

The high readability of the generated CFG-generators (included in the supplementary material) allows for easy inspection and comparison to intuitive control-flow. But, to further test the usefulness of the generated CFGs, we built two example analyzers, summarized in Table \ref{table:example-analyzers}. The first is a simple constant-propagation analysis on expression-level CFGs, supporting assignments and integer arithmetic. The second is the parenthesis-balancing analyzer described in \secref{sec:introduction}, built atop the path-sensitive CFG. Even though \mandate was built as a demonstration of theory rather than as a practical tool, the simplicity of this exercise is further evidence that \mandate's output does indeed correspond to conventional hand-written CFG-generators, while their brevity reinforces our thesis that \textbf{having the appropriate kind of CFG-generator greatly simplifies tool construction}.

In the remainder of this section, we demonstrate the power of Mandate by showing how it
generates concise, readable code even in the face of complicated language constructs.







\subsection{Control-Flow Graphs for \textsc{Tiger} and \textsc{MITScript}}
\label{sec:tiger-mitscript-cfgs}

\begin{wrapfigure}[15]{r}{0.5\textwidth}
\begin{center}
\vspace{-0.5cm}
\begin{lstlisting}[language=Haskell,basicstyle=\tiny\sffamily]
(ConsFrame (HeapAddr 0) NilFrame,
 JustSimpMap $ SimpEnvMap $ Map.fromList
  [ (HeapAddr 0,
     ReducedRec
     $ RedRecCons (RedRecPair (Name "print")
                   (RefVal $ HeapAddr 1))
     $ RedRecCons (RedRecPair (Name "read")
                   (RefVal $ HeapAddr 2))
     $ RedRecCons (RedRecPair (Name "intcast")
                   (RefVal $ HeapAddr 3))
     $ Parent $ HeapAddr $ -1)
  , (HeapAddr 1, builtinPrint)
  , (HeapAddr 2, builtinRead)
  , (HeapAddr 3, builtinIntCast)
  ])                                  
\end{lstlisting}
\end{center}
\vspace{-1em}
\caption{Starting state of \textsc{MITScript} programs}
\label{fig:mitscript-heap}
\end{wrapfigure}

Previous sections used \IMP as the running language, which, in our implementation, has only $20$ SOS rules, with low complexity. In this section, we explain how our techniques work when applied to two larger languages, \textsc{Tiger} and \textsc{MITScript}, which have $80$ and $60$ rules, respectively. It turns out that these do not introduce fundamental new challenges, although they do impose more stress on \mandate's term-rewriting engine.

The actual generated graph patterns and compiled-mode CFG-generators, as well as several example outputs of the interpreted-mode CFG-generators, are available in the supplemental material.

\begin{wrapfigure}{r}{0.5\textwidth}
\vspace{-3em}
\begin{center}
\begin{lstlisting}[language=Haskell,basicstyle=\tiny\sffamily]
genCfg t@(Node "ForExp" [a, b, c, d]) =
  do (tIn, tOut) <- makeInOut t
     (bIn, bOut) <- genCfg b
     (cIn, cOut) <- genCfg c
     (dIn, dOut) <- genCfg d
     (aIn, aOut) <- genCfg a
     connect tIn bIn; connect dOut tOut
     connect dOut dIn; connect dOut tOut
     connect cOut dOut; connect bOut cIn
     return (inNodes [tIn], outNodes [tOut])
\end{lstlisting}
\vspace{-0.5em}
\caption{Generated Tiger for-loop CFG generator}
\label{fig:tiger-for-loop}
\end{center}
\end{wrapfigure}

\paragraph{Reduction State}

The main difference between \IMP and the larger languages is in the structure of their heap. In \IMP, the reduction state was a simple map of variable names to values. In \textsc{Tiger} and \textsc{MITScript}, the reduction state must allow for stack frames, pointers, and closures. This reduction state is merely \textbf{a particularly-shaped pair of environment and term}, and involves \textbf{no extension} to the mechanics already used in \IMP.  We show an example reduction state in \figref{fig:mitscript-heap}, the starting state of all \textsc{MTIScript} programs, which contains hardcoded mappings of several strings to builtin functions. More details about these reduction states are available in \appref{app:tiger-red-state}.

A limitation of the current \mandate implementation is that it does not support associative-commutative matching for nested maps. Hence, each heap record is implemented as an assoc-list rather than a map, with record lookup implemented as a semantic function. This is the reason why the boolean-tracking abstraction is not implemented for \textsc{Tiger} or \textsc{MITScript}.

\paragraph{Examples}

Our first example is \textsc{MITScript} if-statements, which illustrate how \mandate can generate multiple CFG-generators from the same semantics. When run in compiled-mode with the value-irrelevance abstraction, \mandate generates the code in \figref{fig:tiger-if-statements} for if-statements:

Note how the generator returns multiple exit-nodes for the if-statement. This means that the graph-generator will draw edges from both nodes to whatever comes after the if-statement. When run with the expression-irrelevance abstraction \mandate produces code generating smaller graphs which do not have nodes to represent the evaluation of the condition, code shown in \figref{fig:tiger-if-stmt}.

We next show our most impressive example, showcasing \mandate's ability to cut through syntactic sugar to determine the control-flow behavior of a node type: \textsc{Tiger} for-loops.  \textsc{Tiger} for-loops have more parts than for-loops in most languages. A for-loop in \textsc{Tiger} looks like this.

\begin{figure*}
\centering
\begin{subfigure}[t]{0.5\textwidth}
\centering
\begin{lstlisting}[language=Haskell,basicstyle=\scriptsize\sffamily]                                  
genCfg t@(Node "If" [a, b, c]) =
  do (tIn, tOut) <- makeInOut t
     (aIn, aOut) <- genCfg a
     (bIn, bOut) <- genCfg b
     (cIn, cOut) <- genCfg c
     connect tIn aIn
     connect aOut cIn
     connect aOut bIn
     return (inNodes [tIn],
             outNodes [bOut,cOut])
\end{lstlisting}
\subcaption{}
\label{fig:tiger-if-statements}
\end{subfigure}
~
\begin{subfigure}[t]{0.5\textwidth}
\centering
\begin{lstlisting}[language=Haskell,basicstyle=\scriptsize\sffamily]
genCfg t@(Node "If" [_, a, b]) =
  do (tIn, tOut) <- makeInOut t
     (bIn, bOut) <- genCfg b
     (aIn, aOut) <- genCfg a
     connect tIn bIn
     connect tIn aIn
     return (inNodes [tIn],
             outNodes [bOut,aOut])
\end{lstlisting}
\vspace{1.53\baselineskip}
\subcaption{}
\label{fig:tiger-if-stmt}
\end{subfigure}
\caption{}
\end{figure*}

\begin{center}
\begin{lstlisting}[language=Algol,basicstyle=\scriptsize\sffamily]
for a = b to c do
  d
\end{lstlisting}
\end{center}

\noindent The semantics are given by a single rule which desugars the above to:

\begin{center}
\begin{lstlisting}[language=Algol,basicstyle=\scriptsize\sffamily,morekeywords={let}]
let a := b
    __hi := c
in
  while (a <= __hi) do
    (d; a := a + 1)
\end{lstlisting}
\end{center}

\mandate generates a graph pattern containing $46$ nodes (shown in \appref{app:tiger-graphpat}). \mandate's code-generator projects these into just $8$ states, for the enter and exit nodes of $b$, $c$, $d$, and the entire loop, yielding the code in \figref{fig:tiger-for-loop}.

\noindent \mandate has blown that giant graph down into just 5 edges. While \mandate does not sort the \texttt{connect} statements, and does output one edge twice, it is still easy to see what the code is doing: it states that control first evaluates $b$, then $c$. The \texttt{connect cOut dOut} line is the most interesting one: it says that, after evaluating $c$ (the upper boundary of the loop), control flows to the thing that happens after $d$ is evaluated, namely the condition of the \texttt{while} loop, from which control flows either to the body of $d$ or to the end of the entire loop.

\noindent It is impressive that \mandate can generate short code for this construct with no reference to these internal computations, particularly considering that \texttt{while} is defined by expansion into \texttt{if}. \textbf{This code is generated completely automatically} from the \textsc{Tiger} semantics and the (function-skipping variant of the) value-irrelevance abstraction. The user need not even provide a projection; one is generated automatically by the code generator, by greedily merging nodes as described in \secref{sec:syntax-directed}.

More examples are available in the supplementary material.

\section{Related Work}

\todo{More self-citations}


\paragraph{Work with Related Goals}

Ours joins a small-but-growing body of work on mechanizing the generation of programming tools. Others include generating static analyzers via multiple executions of interpreters or semantics \cite{darais2015galois,sergey2013monadic,darais2017abstracting,DBLP:journals/pacmpl/BodinGJS19}, using an executable language semantics as a tool (by the K Framework \cite{rosu2010overview}), and work in the language-parametric construction of programming tools such as program transformations \cite{koppel2018one,lin2017quixbugs}. Our work is similar to tools built with the K Framework in that both start with a semantics; ours differs by transforming the semantics into an applied tool, whereas K is limited to applications directly based on equational reasoning, namely interpreters and symbolic executors. Bodin et al \cite{DBLP:journals/pacmpl/BodinGJS19} briefly mention having a 0-CFA for the lambda calculus generated from their ``skeletal semantics;" in personal communication, they described it as ``working code, but not a principled approach" and ``We don't do a CFG-generator."





\paragraph{Transformation of Semantics}

There are a few projects that transform semantics for one language into semantics for a related language. Examples include transforming rules to support gradual types \cite{cimini2016gradualizer,cimini2017automatically}, and deriving new rules for types and scoping of syntactic sugar \cite{pombrio2017inferring,pombrio2018inferring}. Supporting these, there are several languages for defining semantics \cite{lakin2007metalanguage,mulligan2014lem,klein2012run}.

\label{sec:really-explains-danvy}
More closely related are projects that transform semantics presented in one formalism into identical semantics in a different formalism, mostly done by Danvy and his students \cite{danvy2008defunctionalized,danvy2004refocusing,DBLP:journals/lisp/XiaoSA01,ager2003functional,danvy2012inter,danvy2010inter,lgorzata2006derivational}, with a few by others: \citet{hannan1992operational,poulsen2014deriving,huizing2010small,vesely2019one}.

At the start of this project, we attempted to build on this prior work. However, we were surprised to find that \textbf{there is no prior published algorithm for converting SOS to abstract machines}, and found the existing reduction-semantics-to-AM algorithm too weak to be used off-the-shelf as an intermediate step. After discovering PAM, we found it easier to work with: proving the ``up-rules invertible'' property of PAM is done automatically by 20 lines of code, whereas it took a 20-page paper to explain how to mechanically (not automatically) prove the corresponding property for reduction semantics \cite{DBLP:journals/lisp/XiaoSA01}. See \appref{app:related-work} for more discussion of attempts to use prior work. To our knowledge, the \textsc{Tiger} and \textsc{MITScript} languages in this paper are by far the largest languages to undergo automated conversion between two forms of semantics; prior work focuses on simple lambda calculi.


\paragraph{Abstracting Abstract Machines (AAM)}

There are several AAM projects \cite{DBLP:conf/sas/MidtgaardJ08,Midtgaard:2009:CAF:1596550.1596592, van2010abstracting,wei2018refunctionalization,johnson2014abstracting,johnson2013optimizing}, with the earliest example arguably being \citet{jones1981flow}. We explain in \appref{app:related-work} how their abstracting techniques differ fundamentally from abstract rewriting, and why this means that a control-flow analysis based on AAM will not resemble a CFG.


\paragraph{Abstract Rewriting}

Abstract rewriting was introduced by Bert and Echahed in the early 90's \cite{bert1993abstract,DBLP:conf/slp/DidierE95} and has received little attention since. We can thus only compare to their work. While the details differ substantially owing to their different focus (approximating the possible normal forms of a term), it has some common elements with our development: they split nodes into ``constructors'' and ``completely-defined operators,'' resembling our value/nonvalue split, and use a $\top$ node with similar semantics to our $\star$. A major point of departure in their development is that, in their system, each abstract step must overapproximate all concrete transitions from an abstract term. A newer related technique from a different lineage is rewriting modulo SMT \cite{rocha2017rewriting}, which operates on (numeric) symbolic terms constrained by an SMT formula (e.g.: linear arithmetic). We are interested in future work combining these techniques to automatically derive more-precise analyses.

\todo{The previous sentence can be deleted for space}


In \secref{sec:syntax-directed}, we explained how abstract rewriting is
similar to narrowing, but different in an important way. There is a
long tradition in narrowing of proving lifting lemmas similar to our
own; the first comes from \citet{hullot1980canonical}.

\todo{I can put a note here that Bert \& Echahed's analysis would be useful for solving some of the warts in my system}

In \appref{app:related-work}, we discuss the use of several terms similar to ``abstract matching'' in the contexts of abstract interpretation of logic programs, 
model-checking, 
and term-rewriting engines, 
as well as several minor uses of syntactic abstraction.

\paragraph{Control-Flow Analysis}

Many papers have been written on control-flow analysis  \cite{Midtgaard:2012:CAF:2187671.2187672,shivers1991control,jones1981flow,DBLP:conf/popl/JagannathanW95}. Older research tries to manually construct a complicated analysis of programs with highly-dynamic control flow. Our work automatically constructs CFG-generators from first principles. Our goal is not to analyze complex programs, but to match the work of hand-written CFG-generators with minimal user input.

As such, we do not consider this work as part of the literature on control-flow analysis. Owing to their different emphasis, these works uniformly have three limitations that make them unsuitable for automatically deriving CFG-generators: 

\begin{enumerate}
\item While they explain how a human could define a new analysis for different languages, their analyses are ultimately manually defined for each language. They further repeat this manual construction for every abstraction used.
\item They check that their result safely approximates executions, but pay no attention to the shape of the graph.
\item Most importantly, they manually partition program states into equivalence classes. That is, they manually annotate the program with labels or program-points, using these as CFG nodes. This is a hindrance to both automation and theory, as most type theories do not contain labels. A major motivation of this work is to support our ongoing work on combining analyses with different notions of program point (i.e.: partition program states differently), which makes not hardcoding them especially important.
\end{enumerate}

\noindent \appref{app:related-work} gives a brief overview of control-flow analysis, and discusses two works that deserve special mention, due to use of abstract machines and focus on generality.

\section{Conclusion}

This work presented both an algorithm for constructing CFGs
from first principles and the world's first CFG-generator
generator. Yet our work also furthers three larger goals.

First, we have provided an answer to ``what is a
control-flow graph?'' beyond the vague ``a CFG is an abstraction of control-flow:'' A CFG is
a projection of the transition graph of abstracted abstract
machine states. This fulfills our original impetus for this work, that
of needing to create static analyzers with exotic notions of ``program point.''

Second, we have introduced abstract rewriting as a simple yet powerful
technique for deriving tools from a language's semantics. We are excited by the idea of using it to
derive other artifacts from language semantics, such as a symbol-table
generator from the typing abstract-machine \cite{sergey2011type}.

Third, we have used a language's semantics to derive a tool
entirely unlike a semantics. Though it's long been known that a semantics can be executed to obtain an interpreter or even a
symbolic-executor \cite{rosu2010overview}, we see our
contribution as qualitatively different, and an important
step towards the dream of being able to write down a language's
syntax and semantics and automatically derive all desired tools.

Mandate is available from \url{https://github.com/jkoppel/mandate}.

\begin{scriptsize}
\textbf{Acknowledgments:}
 This material is based upon work supported by the US Air Force,
AFRL/RIKE and DARPA under Contract No. FA8750-20-C-0208.  Any
opinions, findings and conclusions or recommendations expressed in
this material are those of the author(s) and do not necessarily
reflect the views of the US Air Force, AFRL/RIKE or DARPA.
\end{scriptsize}


\bibliography{koppel_citations.bib}

\clearpage

\appendix

\section{Correctness of SOS-AM Translation}
\label{app:correctness}

This section proves the correspondence between the operational semantics and abstract machine. We begin by proving the correspondence between operational semantics and PAM. We then prove an important consequence of invertible up rules, and then prove the correspondence between PAM and the abstract machine.

The core idea of the correspondence is simple: The PAM emulates the SOS because each PAM rule was explicitly constructed to correspond to an RHS fragment of the SOS. The PAM and AM are equivalent because the AM merely removes some redundant steps from the PAM, and because the fused rules in the AM each correspond to several rules in the PAM. However, a PAM derivation may have some ``false starts'' corresponding to a partially-applied SOS rule, and so there are some additional technical details to explain which states are included in the correspondences.

\subsection{SOS-PAM Correspondence}

The goal of this section is to prove the main theorem of SOS-PAM correspondence:

\thmsospam*

The forward direction is easy, because the PAM rules were designed to follow in lockstep with each component of the SOS rules. The reverse-direction appears harder, but is rendered easy by two important facts:

\begin{observation}
\label{obs:pam-gen-invert}
From the LHS of each PAM rule, it is possible to identify the arguments $s,K,\text{rhs}$ of $\textsc{sosRhsToPam}$ that generated it.
\end{observation}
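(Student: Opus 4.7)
The plan is to prove this by structural induction on the recursion of $\textsc{sosRhsToPam}$ (equivalently, on the shape of the SOS right-hand side being processed). The key observation is that every recursive call inside $\textsc{sosRhsToPam}$ strictly extends its context argument by a frame of the form $\cxtfr{c_2}{\text{rhs}}$, where $\text{rhs}$ is precisely the remainder of the SOS right-hand side still to be processed. So the context component of a generated PAM rule's LHS encodes the entire history of the decomposition down to the point where that particular rule was emitted, which is the source of the inversion.

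First I would split on whether the LHS of the PAM rule has context of the form $k$ (a bare variable) or $k \frcomp \cxtfr{c_2}{\text{rhs}'}$. In the bare-variable case, the rule must have been emitted by the initial call $\textsc{sosRhsToPam}(\pam{c}{k}{\down}, k, \text{rhs})$ made from $\textsc{sosRuleToPam}$, because every recursive call inside $\textsc{sosRhsToPam}$ pushes a new frame onto the context. The source SOS rule is then identified by matching the LHS configuration $c$ against the set of SOS rule patterns (using the Sanity-of-Values assumption to exclude value patterns on the LHS), and the third argument $\text{rhs}$ is recovered as that SOS rule's right-hand side.

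In the recursive case, the topmost frame $\cxtfr{c_2}{\text{rhs}'}$ directly supplies the missing pieces: $\text{rhs}$ is exactly $\text{rhs}'$ read off the frame, $s$ is the LHS PAM state itself, and $K$ is obtained by peeling off the top frame. Which of the three cases (2), (3), or (4) of $\textsc{sosRhsToPam}$ produced the rule can be further pinned down by inspecting the PAM rule's RHS: case (2) emits an up-state, case (3) emits a plain down-state, and case (4) prefixes the down-state with a $\textbf{let}$-binding of a semantic function.

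The main obstacle I anticipate is not the forward induction itself but bookkeeping around fresh-variable conventions: $\textsc{sosToPam}$ introduces fresh context variables $k, k', \ldots$ at each recursion, and one must ensure these are chosen disjoint from the pattern variables appearing in the original SOS rule so that the decomposition of a PAM rule's context into ``below-top'' and ``top-frame'' is syntactically unambiguous. Under this standard freshness discipline (which $\textsc{sosToPam}$ already enforces via its $\textbf{where}$ clauses), the two cases above are exhaustive and mutually exclusive, and the inversion function is well-defined.
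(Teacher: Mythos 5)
Your proposal is correct and matches the paper's own (one-sentence) justification, which likewise observes that each case of $\textsc{sosRhsToPam}$ emits rules of a syntactically distinct form and that the generated rule carries all the call's parameters (the $\text{rhs}$ living in the pushed frame, the initial case being recognizable and resolved back to its SOS rule). The only cosmetic difference is that the paper separates the line-(1) and line-(4) LHSs by the value/non-value status of the configuration, whereas you separate them by whether the context is a bare variable; both work.
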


This is because each case generates rules in a distinct form, and each generated rule contains all of the parameters of $\textsc{sosRhsToPam}$. (Note that LHSs originating on line (1) of \figref{fig:sos-to-pam} must be non-values, while those from line (4) must be a value.)

\begin{property}[Sanity of Phase]
The following three properties hold:
\begin{enumerate}
\item If $\pam{c}{K}{\updown_c} \pamto \pam{c\p}{K\p}{\updown_{c\p}}$ and $K\p$ contains strictly more stack frames than $K$, then $\updown_c=\updown_{c\p}=\down$, and $K\p=K\frcomp f$ for some $f$.
\item If $\pam{c}{K}{\updown_c} \pamto \pam{c\p}{K\p}{\updown_{c\p}}$ and $K\p$ contains strictly fewer stack frames than $K$, then $\updown_{c\p}=\up$, and $K=K\p \frcomp f$ for some $f$.
\item If the PAM rules for language $l$ have no up-down rules, and $\pam{c}{K}{\up} \pamto_l \pam{c\p}{K\p}{\updown_{c\p}}$ without using the reset rule, then $\updown_{c\p}=\up$, and $K=K\p \frcomp f$ for some $f$.
\end{enumerate}
\end{property}

These properties follow by inspection of the possible rules. We now prove Theorem \ref{thm:sos-pam} as a corollary of a stronger result:



\begin{lemma}
\label{lem:pam-sos-strengthened}
$c_1 \sosto_l c_2$ if and only if, for all $K$, there is a derivation $\pam{c_1}{K}{\down} \pamto_l^* \pam{c_2}{K}{\up}$ which does not use the reset rule. This derivation must use the same sequence of rules regardless of $K$.
\end{lemma}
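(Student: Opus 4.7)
The plan is to prove this lemma by induction on the structure of SOS right-hand sides, leveraging the tight lockstep correspondence built into $\text{sosRhsToPam}$. First I would prove a strengthened helper lemma handling \emph{partial} matches of SOS rules: for any intermediate state $\pam{c_i}{K \frcomp F_i}{\updown_i}$ arising inside the invocation $\text{sosRhsToPam}(s, k, \text{rhs})$ (where $F_i$ denotes the frames that particular call has pushed), the remaining PAM derivation terminates in $\pam{c_2}{K}{\up}$ exactly when the remainder of the SOS rhs evaluates to $c_2$ under the matching substitution. The lemma statement then follows by applying this helper to the outermost $\text{sosRhsToPam}$ invocation.

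For the forward direction, I would induct on the SOS derivation of $c_1 \sosto c_2$. The outermost SOS rule $c_p \sosto \text{rhs}_p$ matches via a substitution $\sigma$ with $\sigma(c_p) = c_1$. Examining the three cases of $\text{sosRhsToPam}$ in \figref{fig:sos-to-pam}: the base case (line 2) generates a single PAM rule that directly produces $\pam{c_2}{K}{\up}$; the $\letcomp$ case (line 4) generates a rule invoking the same semantic function and advancing to the next fragment; and the $\letstepto$ case (line 3) generates a down-rule that pushes a frame $\cxtfr{c_2}{\text{rhs}}$, hands control to a strictly smaller SOS sub-derivation via the outer induction hypothesis, and is then met by an up-rule (from the next recursive $\text{sosRhsToPam}$ call) that pops the frame and continues. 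Crucially, the reset rule is only introduced by $\text{sosToPam}$, not $\text{sosRhsToPam}$, so the constructed derivation automatically avoids it.

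For the backward direction, I would combine the Sanity of Phase property with Observation \ref{obs:pam-gen-invert}. Since the derivation starts and ends at context $K$ and avoids the reset rule, parts (1) and (2) of Sanity of Phase force a well-bracketed shape: frames are pushed during down-transitions and popped by matching up-transitions, and the stack never drops below $K$. Observation \ref{obs:pam-gen-invert} then lets me identify, for each PAM rule fired, the $\text{sosRhsToPam}$ invocation that produced it, and hence its position within a specific SOS rule's rhs. Reassembling these pieces -- with recursive invocations of the induction hypothesis on sub-derivations that push and pop a frame -- yields an SOS derivation concluding $c_1 \sosto c_2$.

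Context-independence (the same rule sequence fires for every $K$) drops out of how $\text{sosRhsToPam}$ introduces $k$ as a fresh variable appearing only as a syntactic prefix of rule LHS contexts: pattern matching is purely structural on $k$, so every match succeeds uniformly in the concrete context substituted for it, and the substitution witness for $k$ is the only thing that varies. The main obstacle I expect is formulating the strengthened helper lemma precisely enough to carry through the $\letstepto$ case, where the inductive hypothesis must simultaneously govern the recursive SOS sub-derivation (on a smaller configuration but an enlarged context) and the continuation of the outer call after the frame pops. Once the invariant on the pushed frames $F_i$ and the accumulated substitution is stated carefully, each case becomes a direct unfolding of \figref{fig:sos-to-pam}.
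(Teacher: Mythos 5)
Your proposal is correct and follows essentially the same route as the paper's proof: a double induction (outer on the SOS derivation, inner on the position within the sequence of RHS fragments, which is your ``strengthened helper lemma'' about partial matches), with Observation \ref{obs:pam-gen-invert} used to reconstruct the originating SOS rule in the backward direction and the Sanity of Phase properties handling the well-bracketed recursive case. Your explicit treatment of context-independence via the freshness of $k$ is a point the paper leaves implicit, but the substance of the argument is the same.
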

\begin{proof}

We address each direction.

$(\Rightarrow)$: Let $K$ be an arbitrary context, and consider a derivation of $c_1 \sosto c_2$. By induction on the last SOS rule applied, we prove there exists a derivation $\pam{c_1}{K}{\down} \pamto^* \pam{c_2}{K}{\up}$. 

Let the last SOS rule used take the form $c_1 \sosto C_1[\dots C_n[c_2]]$, where each $C_i$ is a single RHS fragment. We define $R$ to be the remaining fragments, $R=C_{i+1}[\dots C_n[c_2]]$. We induct again on $i$ to show that $\pam{c_1}{K}{\down}\pamto^* s_i$, where:

\begin{enumerate}
\item If $i=0$ (base case), then $s_i=\pam{c_1}{K}{\down}$.
\item If $C_i$ is of the form $\letstepto{c}{c\p}{\hole}$, then  $$s_i= \pam{c\p}{K\frcomp \cxtfr{c\p}{R}}{\up}$$
\item If $C_i$ is of the form $\letcomp{c\p}{f(\overline{c})}{\hole}$, then $$s_i = \pam{c\p}{K\frcomp \cxtfr{c\p}{R}}{\down}$$
\end{enumerate}

\noindent and further, the PAM system contains rules generated by $\textsc{sosRhsToPam}(s_i, k, R)$ for some context variable $k$. We handle each case:

\begin{enumerate}
\item Satisfied by the empty transition sequence and definition of \textsc{sosRuleToPam}.
\item Then there is a rule that $s\pamto \pam{c}{K\cxtfr{c\p}{R}}{\down}$. By inversion of the SOS derivation, we must have that $c\sosto c\p$. Then, by the outer induction hypothesis, $\pam{c}{K\cxtfr{c\p}{R}}{\down} \pamto^* \pam{c\p}{K\cxtfr{c\p}{R}}{\up}$. The rest follows by line (3) of the definition of \textsc{sosRhsToPam}.
\item Then there is a rule that $s \pamto \pam{c\p}{K\cxtfr{c\p}{R}}{\down}$. The rest follows by line (4) of the definition of \textsc{sosRhsToPam}. 
\end{enumerate}

The inner induction hypothesis for $C_n$ tells us that a rule $s \pamto \pam{c_2}{K}{\up}$ must exist, finishing the proof of the outer induction.

\vspace{1em}

$(\Leftarrow)$: We proceed by a strong induction on all derivations of the form $\pam{c_1}{K}{\down} \pamto_l^* \pam{c_2}{K}{\up}$. Consider the first PAM rule of the derivation. Because it may not depend on $K$, it must have an LHS generated on line (1) of \figref{fig:sos-to-pam}. By Observation \ref{obs:pam-gen-invert}, we can hence reconstruct the entire originating SOS rule, $c_1 \sosto C_1[\dots C_n[c_2]]$. We show that $c_1 \sosto c_2$ by this rule. 

The proof proceeds similarly to the forward direction, so we omit more details. We induct over $i$, and show that there must be a prefix of the derivation $\pam{c_1}{K}{\down} \pamto^* s_i$, where $s_0=\pam{c_1}{K}{\down}$ and $s_i$ is a state corresponding to the computation of $C_i$. Each $s_i$ matches the LHS of the PAM rule used in the derivation; Observation \ref{obs:pam-gen-invert} tells us this rule must be the one generated for $C_{i+1}$. The only interesting case is for recursive steps; there, $s_i=\pam{c}{K\p}{\down}$, and the Sanity of Phase properties dictate there must be a later state in the derivation $\pam{c\p}{K\p}{\up}$; applying the outer inductive hypothesis finishes this case.

\end{proof}



\subsection{Invertibility}

Our goal is to show that, if all up-rules are invertible (Definition \ref{def:up-rule-invertible}), then $\pam{c}{K}{\up}\pamto^* \pam{c}{K}{\down}$ when $c$ is a non-value, justifying the optimizations of \secref{sec:pam-to-am}, and characterizing which PAM states are and are not removed when translating a derivation to AM. However, there are a few technical restrictions on this.

\begin{definition}
A configuration/context pair $(c,K)$ is \textbf{non-stuck} if $\pam{c}{K}{\up} \pamto^* \pam{c\p}{\halt}{\up}$ for some $c\p$.
\end{definition}

Because each PAM rule corresponds to part of an SOS rule, our definition of non-stuckness is different from the usual one: it is intended to exclude terms which correspond to a partial match on an SOS rule. A single step $c_1 \sosto c_2$ in the SOS corresponds to a sequence $\pam{c_1}{\halt}{\down} \pamto^* \pam{c_2}{\halt}{\up}$ in the PAM, so a state is non-stuck if it can complete the current step. Stuck states result from SOS rules which only partially match a term. For example, the SOS rule

\begin{mathpar}
(\assign{a.b}{v}, \mu) \sosto \letcomp{(r,\mu\p)}{\text{Lookup}((a,\mu))}{ \\\vspace{-0.5em} \ \ \qquad\letcomp{\false}{\text{ContainsField}(r, b)}{(\error, \mu)}}
\end{mathpar}

\vspace{1.5em}

\noindent decomposes into 3 PAM rules. Assuming $\text{Lookup}$ succeeds, the first PAM rule brings $\pam{(\assign{a.b}{v}, \mu)}{K}{\down}$ into the state 

\begin{small}
$$\pam{(r,\mu\p)}{K\frcomp \shortcxtfr{\letcomp{\false}{\text{ContainsField}(\hole_t,b)}{(\error, \mu)}}}{\down}$$ 
\end{small}

\noindent If $\false \neq \text{ContainsField}(r,b)$, then this will be a stuck state.

Excluding stuck states is enough to prove the general Invertibility Lemma:

\begin{lemma}[Invertibility]
\label{lem:invertibility}
If all up-rules for $l$ are invertible, and there are no up-down rules for $l$ other than the reset rule, then, for any non-stuck non-value $(c, K)$, $\pam{c}{K}{\up} \pamto^* \pam{c}{K}{\down}$.
\end{lemma}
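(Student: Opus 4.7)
My plan is to induct on the length of the derivation $\pam{c}{K}{\up}\pamto^*\pam{c\p}{\halt}{\up}$ supplied by non-stuckness. In the base case this derivation has length zero, so $K=\halt$ and $c=c\p$; since $c$ is non-value, the reset rule applies directly and yields $\pam{c}{\halt}{\up}\pamto\pam{c}{\halt}{\down}$, which is exactly what we need.

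For the inductive case $K\neq\halt$, combining the hypothesis that no up-down rules exist except the reset rule with Sanity of Phase (part 3) forces the first step of the non-stuck derivation to be an ordinary up-rule popping a single frame: $\pam{c}{K}{\up}\pamto\pam{c_2}{K_2}{\up}$ with $|K_2|<|K|$. Invertibility of this up-rule, which applies because $c$ is non-value, supplies the reverse down-derivation $\pam{c_2}{K_2}{\down}\pamto^*\pam{c}{K}{\down}$. My plan is then to apply the inductive hypothesis to $(c_2,K_2)$ to build $\pam{c_2}{K_2}{\up}\pamto^*\pam{c_2}{K_2}{\down}$, and chain the three fragments together:
\[
\pam{c}{K}{\up}\pamto\pam{c_2}{K_2}{\up}\pamto^*\pam{c_2}{K_2}{\down}\pamto^*\pam{c}{K}{\down}.
\]
Non-stuckness of $(c_2,K_2)$ comes for free, witnessed by the suffix of the original non-stuck derivation.

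The main obstacle is discharging the remaining precondition for the inductive hypothesis: $c_2$ must itself be non-value. In principle, an up-rule can send a non-value LHS to a value-rooted RHS, in which case the induction would not close directly. I plan to rule this out with an auxiliary structural observation: in a PAM produced by $\text{sosToPam}$, any invertible up-rule admitting a non-value instantiation of its LHS has a non-value-rooted RHS. The intended argument is by contradiction. If $c_2$ were value-rooted, invertibility would demand a down-derivation $\pam{c_2}{K_2}{\down}\pamto^*\pam{c}{K}{\down}$ starting from a value configuration and pushing precisely the difference $K\setminus K_2$ onto the stack. Tracing which rules synthesized by $\text{sosRhsToPam}$ can grow the context from a value-rooted down-state (only lines (3) and (4) do so), and using determinism together with Sanity of Values to pin down the forced frame structure, one shows that the frames so generated cannot coincide with $K\setminus K_2$, delivering the contradiction. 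This structural analysis is the real content of the proof; the inductive plumbing that surrounds it is only routine use of Sanity of Phase, invertibility, and determinism.
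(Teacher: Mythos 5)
Your proposal is correct and is essentially the paper's own proof: the paper likewise takes the non-stuckness derivation $\pam{c}{K}{\up} \pamto^* \pam{c\p}{\halt}{\up}$, observes that the absence of up-down rules makes every step an invertible up-rule, and reverses each step in turn (with the reset rule providing the turnaround at $\halt$), so your induction on the length of that derivation is just a formalization of the same reversal. The only place you diverge is the non-value side condition you call the ``main obstacle'': the paper discharges it in one line by noting that invertibility itself forces the RHS state $\pam{c_2}{K_2}{\down}$ to take a reduction step (the inverse derivation back to $\pam{c}{K}{\down}$ is nonempty since the contexts differ), and a down-state that initiates a step must have a non-value term by Sanity of Values, so the heavier frame-tracing contradiction argument you sketch is not needed.
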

\begin{proof}
Consider a derivation $\pam{c}{K}{\up} \pamto^* \pam{c\p}{\halt}{\up}$. Because there are no up-down rules, each step must follow from an up-rule. Hence, each step is invertible. Applying each inverted step gives a new derivation $\pam{c\p}{\halt}{\down} \pamto^* \pam{c}{K}{\down}$.

This requires that the term in the RHS of each step is a non-value, which also follows because each rule is invertible, and hence the RHS can be reduced.
\end{proof}

\noindent This motivates the definition of an \keyterm{inversion sequence}. We add the condition about the reset rule to prevent the definition from including arbitrarily large subsequences of a nonterminating execution.

\begin{definition}
An \keyterm{inversion sequence} which begins at $\pam{c}{K}{\up}$ is a sequence of transitions $\pam{c}{K}{\up} \pamto^* \pam{c}{K}{\down}$ which contains at most one application of the reset rule.
\end{definition}

This idea of an inversion sequence partitions a derivation $\pam{c_1}{K_1}{\down}\pamto^*\pam{c_2}{K_2}{\down}$ into two parts: the inversion sequences, which do redundant work, and the remainder, which we call the \keyterm{working steps}. Sometimes a derivation must be extended to contain a complete inversion sequence, which is then eliminated upon the conversion to an abstract machine.

\begin{definition}
A reduction $\pam{c_1}{K_1}{\updown_1}\pamto \pam{c_2}{K_2}{\updown_2}$ within a derivation is a \textbf{working step} if the derivation cannot be extended so that $\pam{c_1}{K_1}{\updown_1}$ is part of an inversion sequence.
\end{definition}

\begin{observation}
If $(c,K)$ is a non-stuck non-value and all up-rules are invertible, then, by the Determinism Assumption, all sequences $\pam{c}{K}{\up} \pamto^* \pam{c\p}{K\p}{\updown}$ may be extended to contain an inversion sequence starting at $\pam{c}{K}{\up}$.
\end{observation}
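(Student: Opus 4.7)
The plan is to derive the observation as a short corollary of Lemma~\ref{lem:invertibility} together with the Determinism Assumption propagated through the PAM. The key picture is that determinism turns the PAM into a forest of linear threads starting from each state, so ``extending'' a derivation is unambiguous, and the hard work has already been done in Lemma~\ref{lem:invertibility}.

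First I would apply Lemma~\ref{lem:invertibility} to the hypothesis $(c,K)$ non-stuck non-value (together with the standing hypothesis that all up-rules are invertible and the absence of non-reset up-down rules from Assumption~3) to obtain a concrete \emph{witness inversion sequence}
\[
I : \pam{c}{K}{\up} \pamto^* \pam{c}{K}{\down},
\]
which by construction contains at most one application of the reset rule (in fact, zero, since the LHS of the reset rule is in $\up$ phase and the inversion sequence terminates in $\down$; actually Lemma~\ref{lem:invertibility} produces it by inverting an up-only prefix, so no reset is used).

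Next I would use the Determinism Assumption to argue that $I$ and any given derivation starting at the same state must agree. From the SOS Determinism Assumption and the construction of \textsc{sosToPam} (Figure~\ref{fig:sos-to-pam}), at every PAM state at most one rule applies and it fixes a unique successor: the LHSs generated on each line of \textsc{sosRhsToPam} are pairwise disjoint in their phase/context shape, so the PAM step-relation is a partial function. Consequently, for any two derivations $D_1, D_2$ starting at a common state, one is a prefix of the other. Applying this to $I$ and an arbitrary
\[
D : \pam{c}{K}{\up} \pamto^* \pam{c'}{K'}{\updown},
\]
we get two cases: either $D$ is a prefix of $I$, in which case we extend $D$ to all of $I$ (giving an inversion sequence beginning at $\pam{c}{K}{\up}$, with at most one reset, as noted), or $I$ is a prefix of $D$, in which case $D$ already contains $I$ as an initial segment and no extension is needed. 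Either way $D$ can be extended to contain the inversion sequence $I$ starting at $\pam{c}{K}{\up}$.

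The main obstacle is pinning down the ``PAM is deterministic'' step carefully enough that the prefix argument is airtight. SOS determinism is not literally the same as PAM determinism, because the PAM has additional structure (the context and phase). I would discharge this by a short inspection of the cases of \textsc{sosRhsToPam}: each generated rule's LHS uniquely identifies both the SOS rule and the RHS fragment it came from (this is exactly Observation~\ref{obs:pam-gen-invert}), so two distinct PAM rules cannot both fire at the same state without contradicting SOS determinism. The reset rule's LHS ($\up$ phase with empty context $\halt$ and a nonvalue term) is disjoint from every other rule's LHS, so it does not introduce ambiguity either. Once determinism is established, the remainder of the argument is routine.
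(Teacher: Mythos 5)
The paper states this result as an unproved \emph{Observation}, so there is no official proof to compare against; your proposal tries to make precise the implicit argument ``invertibility gives existence of an inversion sequence, determinism forces every derivation to follow it,'' which is the right instinct. A first, minor error: your parenthetical claim that the witness inversion sequence from Lemma~\ref{lem:invertibility} uses \emph{zero} applications of the reset rule is false. Under Assumption~3 the reset rule is the \emph{only} up-down rule, hence the only way any derivation can pass from an $\up$ state to a $\down$ state; the sequence $\pam{c}{K}{\up} \pamto^* \pam{c}{K}{\down}$ built in that lemma ascends to $\pam{c\p}{\halt}{\up}$, applies the reset rule exactly once, and descends via the inverted up-rules. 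Exactly one is still ``at most one,'' so your conclusion survives, but the reasoning misreads how the sequence is constructed.

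The serious gap is the claim that the PAM step relation is a partial function, justified by ``two distinct PAM rules cannot both fire at the same state without contradicting SOS determinism.'' The Determinism Assumption constrains \emph{completed} steps of the relation $(t,\mu)\sosto(t\p,\mu\p)$; it says nothing about the internal proof search. Two SOS rules with overlapping left-hand sides --- e.g.\ one that succeeds and one whose semantic-function side condition fails partway through --- generate down-rules from line (1) of \figref{fig:sos-to-pam} whose LHSs both match the same $\down$ state. This is precisely the source of the ``false starts'' and stuck states that \secref{sec:correctness-body} and \appref{app:correctness} go to such lengths to handle, so the PAM is genuinely nondeterministic. Consequently your prefix-comparability step (``of any two derivations from a common state, one is a prefix of the other'') fails once a derivation passes the reset rule and begins re-descending: the descent can branch, and a branch that gets stuck can never be extended to reach $\pam{c}{K}{\down}$. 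What is actually deterministic is the $\up$ phase --- an up-rule's LHS is pinned down by the frame on top of the context (Observation~\ref{obs:pam-gen-invert}), so the ascent to $\pam{c\p}{\halt}{\up}$ and the single reset step are forced. A complete proof must additionally argue, e.g.\ via Lemma~\ref{lem:pam-sos-strengthened} and the Determinism Assumption applied to the unique SOS successor of the root term, that the post-reset descent can always be routed through $\pam{c}{K}{\down}$; your proposal does not address this portion at all.
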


\begin{corollary}
\label{corr:no-inversion-value}
If a reduction $\pam{c}{K}{\up} \pamto \pam{c\p}{K\p}{\up}$ cannot be extended to contain an inversion sequence starting at $\pam{c}{K}{\up}$, then either $(c,K)$ is stuck or $c$ is a value.
\end{corollary}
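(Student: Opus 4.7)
The plan is to obtain the corollary as the direct contrapositive of the preceding Observation. I would argue as follows: suppose for contradiction that $(c,K)$ is non-stuck and that $c$ is a non-value. By the standing assumption for this subsection that all up-rules are invertible, the Observation applies, and so every derivation of the form $\pam{c}{K}{\up}\pamto^* \pam{c''}{K''}{\updown}$ (with arbitrary final phase) admits an extension that contains an inversion sequence starting at $\pam{c}{K}{\up}$. In particular, the one-step derivation $\pam{c}{K}{\up}\pamto \pam{c'}{K'}{\up}$ from the hypothesis is such a derivation (take $c''=c'$, $K''=K'$, $\updown=\up$), so it too admits such an extension. This contradicts the assumption of the corollary that this reduction cannot be so extended, and the contrapositive yields the claim.

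The only subtle point to verify is that the Observation genuinely applies to the specific one-step reduction in the hypothesis; that is just a matter of instantiating the quantifier in the Observation at $\updown=\up$ and noting that a single $\pamto$ step is a valid instance of $\pamto^*$. I expect no obstacle here — the Observation was stated in precisely the general form needed. If the reviewer wants an unwound argument, I would also note that invertibility gives $\pam{c}{K}{\up}\pamto^*\pam{c}{K}{\down}$ by Lemma~\ref{lem:invertibility} for any non-stuck non-value $(c,K)$, and by Determinism the unique step out of $\pam{c}{K}{\up}$ must coincide with the first step of that inversion sequence; hence the given one-step reduction is a prefix of an inversion sequence, which is exactly the desired extension.

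The main obstacle, to the extent there is one, is merely notational care: making sure the phrase ``extended to contain an inversion sequence starting at $\pam{c}{K}{\up}$'' is applied consistently with its definition (the extension must itself return to $\pam{c}{K}{\down}$ with at most one use of the reset rule). Since the inversion sequence produced by Lemma~\ref{lem:invertibility} returns to the same $(c,K)$ without using the reset rule at all, this condition is automatically satisfied, and the contrapositive goes through cleanly.
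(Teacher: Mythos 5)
Your proof is correct and matches the paper's (implicit) argument: the paper offers no separate proof, treating the corollary as the direct contrapositive of the preceding Observation under the standing invertibility assumption, exactly as you do. Your added unwinding via Lemma~\ref{lem:invertibility} and Determinism is just the content of that Observation made explicit, so nothing further is needed.
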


With these extra properties, we are now ready to exactly state the PAM-AM correspondence.

\subsection{PAM-AM Correspondence}

Intuitively, a derivation in the AM $\am{c_1}{K_1}\amto^* \am{c_2}{K_2}$ is the same as a derivation in the PAM, but with the inversion sequences cut out and with some consecutive steps merged into one.  We prove this in steps. First, we show that if $\pam{c_1}{K_1}{\down}\pamto^* \pam{c_2}{K_2}{\down}$, and the last transition is a working step, then  $\am{c_1}{K_1}\unfusedamto^* \am{c_2}{K_2}$. Next we prove that every derivation in the Unfused AM corresponds to a derivation in the fused AM, but with some consecutive steps merged. We also prove the reverse theorem, which is easier to state, as every state in an AM derivation has a corresponding state in the PAM.

The forward direction comes first. This version of the theorem starts with transitions between down-states, to simplify consideration of which states may be eliminated on conversion to AM.

\thmpamunfusedamforward*

\begin{proof}
First, recall that, if $\unfusedamto_l$ is defined, then all up-rules for $l$ are invertible.

Consider a derivation $\pam{c}{K}{\down}\pamto_l^* \pam{c\p}{K\p}{\down}$. Remove all maximal inversion sequences. Then remove all phases from the PAM states, resulting in AM states. This means that an inversion sequence $\pam{c_1}{K_1}{\up}\pamto^* \pam{c_1}{K_1}{\down}$ is replaced with a single state $\am{c_1}{K_1}$.

If we can show that all PAM rules for all remaining steps of the derivation have a corresponding rule in the unfused abstract machines, then we will be done. Note that the only PAM rules without a corresponding rule in the AM are those of the form $\pam{c}{K}{\down} \pamto \pam{c}{K}{\up}$, and those of the form $\pam{c_1}{K_1}{\up} \pamto \pam{c_2}{K_2}{\up}$ where $c_1$ is a non-value. The former correspond to stutter steps in the AM and may be ignored. For the latter, because of the Determinism Assumption and Corollary \ref{corr:no-inversion-value},  all such transitions must be part an inversion sequence, and were hence removed.
\end{proof}

In the next two proofs, we use the notation $\am{c}{K} \overset{F}{\amto} \am{c\p}{K\p}$ to denote that $\am{c}{K}$ steps to $\am{c\p}{K\p}$ by rule F.



\thmpamunfusedambackward*

\begin{proof}
Consider a derivation $\am{c}{K} \unfusedamto_l^* \am{c\p}{K\p}$. For each rule of the unfused abstract machine used in this derivation, consider the corresponding PAM rule that generated it.

Let $\updown_c$ be the phase of the LHS of the first such rule. We will show that there is $\updown_{c\p}$ such that $\pam{c}{K}{\updown_c} \pamto_l^* \pam{c\p}{K\p}{\updown_{c\p}}$, obtained by replacing each AM rule with its corresponding PAM rule and by inserting inversion sequences. We proceed by induction on the derivation.

Consider the last step of the derivation $\am{c\pp}{K\pp} \unfusedamto_l \am{c\p}{K\p}$. Consider the AM rule of this last step, and let G be the PAM rule from which it originated, $\pam{c_G^1}{K_G^1}{\updown_G^1} \overset{G}{\pamto_l} \plug{C_G}{\pam{c_G^2}{K_G^2}{\updown_G^2}}$. If G matches, it would finish the proof. In the base case where there is only one step, taking $\updown_c=\updown_G^1$ and $\updown_{c\p}=\updown_G^2$ suffices to make it match.

If there is more than one step in the derivation, then, by the induction hypothesis, there are phases $\updown_c$ and $\updown_{c\pp}$ such that $\pam{c}{K}{\updown_c}\pamto_l^* \pam{c\pp}{K\pp}{\updown_{c\pp}}$. Consider the second-to-last step of the AM derivation $\am{c\ppp}{K\ppp}\unfusedamto_l \am{c\pp}{K\pp}$ and its generating AM rule, and let the corresponding PAM rule be F. If the RHS of F matches the LHS of G, we would be done. We hence must consider all PAM rules F, G such that the RHS of F and LHS of G match except for the phase, meaning they would erroneously match upon conversion to AM rules. Call such an $(F,G)$ a \emph{confused pair}. We perform case analysis on \figref{fig:sos-to-pam} to find all possible confused pairs, and for each find a derivation $\pam{c\pp}{K\pp}{\updown_{c\pp}} \pamto_l^* \pam{c\p}{K\p}{\updown_{c\p}}$.

\figref{fig:sos-to-pam} gives 3 possible forms of PAM RHSs, generated on lines (2), (3), and (4), and 3 possible LHSs, generated on lines (1), (3), and (4). Note that some of these only match values/non-values, and that, because of the prohibition on up-down rules, all rules using the LHS from line (3) will be restricted to only match values. This leaves only 3 possible forms for a confused pair: using the RHS/LHS generated on lines (2)/(1), (2)/(4), and (4)/(3). We analyze each in turn. We find that the first case is desirable, as it results from removing inversion sequences, while the other two are benign, as another rule must exist that does match.

\begin{itemize}
\item (2)/(1): In this case, the RHS of a rule F, $s \overset{F}{\pamto_l} \pam{c_F}{K_F}{\up}$, matches the LHS of a rule G, $\pam{c_G}{K_G}{\down} \overset{G}{\pamto_l} t$, where $c_F=c_G$ upon matching with $c\pp$. $c_F=c_G$ must be a non-value because $c_G$ originates from a SOS rule $c_G \sosto r$, and by the Sanity of Values assumption. The invertibility lemma finishes this case.

\item (2)/(4): In this case, the RHS of a rule F, $s \overset{F}{\pamto_l} \pam{c_F}{K_F}{\up}$, matches the LHS of a rule G, $\pam{c_G}{K_G}{\down} \overset{G}{\pamto_l} t$. Further, after unifying with the current state $\am{c\pp}{K\pp}$,  $\am{c_F}{k_F} = \am{c_G}{k_G}$, and $k_F=k_G$ can be written $k_F = k_G = k \frcomp \cxtfr{c_F}{\text{rhs}}$. By the induction hypothesis, there is a derivation $\pam{c}{K}{\updown_c}\pamto_l^* \pam{c_F}{K_F}{\up}$. Extend the last transition of this derivation to a maximal sequence $\pam{c_F\pp}{K_F\pp}{\down} \overset{H}{\pamto_l} \pam{c_F\p}{K_F}{\down} \pamto_l* \pam{c_F}{K_F}{\up}$ which does not use the reset rule; by the Sanity of Phase properties, this must exist, and Rule H must have been created by line (3). By Observation \ref{obs:pam-gen-invert}, we know that rule G was created by an invocation matching $\textsc{sosRhsToPam}(\pam{c_F}{K_F}{\down}, k, \text{rhs})$, while rule H was created by an invocation matching

\vspace{-1em}
$$\qquad \textsc{sosRhsToPam}(\pam{c_F\pp}{k}{\down}, k, \letstepto{x}{c_F}{rhs})$$

\noindent which means there is also a rule J created by an invocation

\vspace{-1em}
$$\qquad\textsc{sosRhsToPam}(\pam{c_F}{K_F}{\up}, k, \text{rhs})$$

\noindent whose RHS is $t$. Using rule J completes this case.

\item (4)/(3): In this case, the RHS of a rule F, $s \overset{F}{\pamto_l} \pam{c_F}{k_F}{\down}$,  matches the LHS of a rule G, $\pam{c_G}{k_G}{\up} \overset{G}{\pamto_l} \plug{C_G}{t}$. Further, after unifying with the current state $\am{c\pp}{k\pp}$,  $\am{c_F}{k_F} = \am{c_G}{k_G}$, and $k_F=k_G$ can be written $k_F = k_G = k \frcomp \cxtfr{x}{\text{rhs}}$. By Observation \ref{obs:pam-gen-invert}, we know that rule F was created by an invocation

\vspace{-1em}
$$\textsc{sosRhsToPam}(s, k, \letstepto{c_F}{x}{\text{rhs}})$$

\noindent and there is hence a rule H created by an invocation $\textsc{sosRhsToPam}(\pam{c_F}{k_F}{\down}, k, \text{rhs})$. As rule G was created by an invocation $\textsc{sosRhsToPam}(\pam{c_G}{k_G}{\up}, k, \text{rhs})$, rule H hence takes the form $\pam{c_F}{k_F}{\down} \overset{H}{\pamto_l} \plug{C_G}{t}$. Using rule H completes this case.

\end{itemize}
\end{proof}

\noindent Finally, to show the correspondence between the Unfused AM and the normal AM, we must show that fusing rules does not substantially alter the transition relation. This is very simple, thanks to the Fusion Property.

\begin{lemma}
\label{lem:am-fusion-lemma}
Let $M$ be an abstract machine whose transition relation is $\amto_M$, containing a rule F. Let $M\p$ be $M$ with Rule F fused with all possible successors, and let its transition relation be $\amto_{M\p}$. Then, for any state $\am{c}{K}$, $\am{c}{K}\amto_{M\p} \am{c\p}{K\p}$ if and only if $\am{c}{K}\amto_M \am{c\p}{K\p}$, or $\am{c}{K} \overset{F}{\amto_M} \am{c\pp}{K\pp} \overset{G}{\amto_M} \am{c\p}{K\p}$ for some rule $F \neq G$.
\end{lemma}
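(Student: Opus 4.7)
The plan is to use the Fusion Property (Property \ref{prop:fusion}) as the central tool, performing case analysis on which rule drives the single-step transition in $M\p$ for one direction, and on which disjunct of the RHS holds for the other direction. First I would spell out what $M\p$ contains: by the construction in \secref{sec:pam-to-am}, $M\p$ consists of the rules of $M$ together with every fusion $FG$ such that the unification of $F$'s output $(c_1\p, K_1\p)$ with $G$'s input pattern $(c_2,K_2)$ succeeds (i.e., $G$ is a syntactically possible successor of $F$). This syntactic criterion is what later bridges single $M\p$-steps with two-step $M$-derivations.

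For the forward direction, suppose $\am{c}{K} \amto_{M\p} \am{c\p}{K\p}$ using some rule $R$ of $M\p$. Either $R$ is an original rule of $M$, in which case $\am{c}{K} \amto_M \am{c\p}{K\p}$ satisfies the first disjunct immediately, or $R$ is a fusion $FG$. In the latter case, the Fusion Property gives an intermediate state $\am{c\pp}{K\pp}$ with $\am{c}{K} \overset{F}{\amto_M} \am{c\pp}{K\pp} \overset{G}{\amto_M} \am{c\p}{K\p}$, satisfying the second disjunct. Since $FG$ was added to $M\p$ only when $G$ is a genuine successor of $F$, such an intermediate state exists. The side condition $F \neq G$ is either guaranteed by the construction (F is not unified with itself during fusion) or a straightforward check: if $F=G$ occurred, $FG$ would have been a degenerate fusion that we can exclude from consideration.

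For the backward direction, case-analyze on which disjunct of the RHS holds. If $\am{c}{K} \amto_M \am{c\p}{K\p}$ by some rule that is preserved in $M\p$, the conclusion is immediate. If the rule used is $F$ itself and we are in the second-disjunct case, then we have $\am{c}{K} \overset{F}{\amto_M} \am{c\pp}{K\pp} \overset{G}{\amto_M} \am{c\p}{K\p}$; since $G$ is witnessed as a successor of $F$, the fusion $FG$ is in $M\p$, and Property \ref{prop:fusion} gives $\am{c}{K} \overset{FG}{\amto_{M\p}} \am{c\p}{K\p}$, hence $\am{c}{K} \amto_{M\p} \am{c\p}{K\p}$.

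The main obstacle, which is really only a bookkeeping obstacle rather than a mathematical one, is being precise about which rules are in $M\p$ versus $M$ and checking that the syntactic criterion for adding $FG$ to $M\p$ coincides with the semantic condition in Property \ref{prop:fusion}. I would not expect either direction to require machinery beyond Property \ref{prop:fusion} and a careful unpacking of the construction of $M\p$; the proof is essentially a translation exercise between the definition of fusion-as-a-syntactic-operation and fusion-as-a-semantic-equivalence.
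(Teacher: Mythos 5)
Your proposal is correct and takes essentially the same route as the paper: the paper's entire proof of this lemma is the single line ``By Property \ref{prop:fusion},'' and your case analysis on original-rule versus fused-rule steps is exactly the unpacking of that citation. The only detail the paper also leaves implicit, which you rightly flag as bookkeeping, is whether $M\p$ retains the unfused rule $F$ and how the $F \neq G$ side condition is discharged.
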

\begin{proof}
By Property \ref{prop:fusion}.
\end{proof}

\thmunfusedamam*

\begin{proof}
Corollary of Lemma \ref{lem:am-fusion-lemma}.
\end{proof}

Finally, we state some useful properties which are analogues of Sanity of Phase.

\begin{property}[Sanity of Frame]

The following properties hold:

\begin{enumerate}
\item If $\am{c}{K} \amto \am{c\p}{K\p}$, then either $K=K\p \frcomp f$ for some $f$, $K\p = K\frcomp f$ for some $f$, or there are $f$, $f\p$, $K\pp$ such that $K=K\pp \frcomp f$ and $K\p=K\pp \frcomp f\p$.

\item If $c$ is a nonvalue, and $\am{c}{K} \amto \am{c\p}{K\p}$, then $K$ is contained in $K\p$.
\end{enumerate}

\end{property}

\paragraph{Why Didn't We Use Bisimulations?}

A common alternative way of stating results like Theorems \ref{thm:pam-am} and \ref{thm:am-pam} is by giving a stuttering bisimulation between the PAM and the AM. However, between any two (terminating) transition systems, without giving additional labels on the states, there always exists a trivial stuttering bisimulation. The interesting information lies in giving a specific stuttering bisimulation. We decided that dealing with the additional machinery of stuttering bisimulations would add to the background needed to understand the proof, without much benefit.

\section{Proofs of Abstract Rewriting Theorems}

\label{app:abs-rewriting-proofs}

\thmgenabsrewriting*

\begin{proof}
We show a correspondence between the applications abstract rewriting algorithms for both $\beta_1$ and $\beta_2$. We know that the LHS of rule $F$ matches $\am{c_1}{K_1}$ with witness $\sigma$; hence, by the Abstract Matching Property, it also matches $\am{\abstr{c_1}}{\abstr{K_1}}$ with some witness $\sigma\p \succ \sigma$. Let the RHS of $f$ be $\text{rhs}_p$. Then:
\begin{itemize}
    \item If $\text{rhs}_p=\letcomp{c_\text{ret}}{\text{func}(\overline{c_\text{args}})}{\text{rhs}_p\p}$, then the concrete rewriting of  $\am{c_1}{K_1}$ must have picked some $r \in \beta_1(\text{func})(\sigma(\overline{c_\text{args}}))$, where $r$ matches $c_\text{ret}$ with witness $\sigma_r$. Since $\beta_1$ is a base abstraction and $\abstr{\sigma}(\overline{c_\text{args}}) \succ \sigma(\overline{c_\text{args}})$, there is an $\abstr{r}\in \beta_2(\text{func})(\abstr{\sigma}(\overline{c_\text{args}}))$ with $\abstr{r}\succ r$. Then, by the abstract matching property, $\abstr{r}$ matches $c_\text{ret}$ with witness $\abstr{\sigma_r}\succ \sigma_r$. The abstract rewriting algorithm then proceeds to recursively evaluate $\text{rhs}_p$ with some $\abstr{\sigma\p}$ and $\sigma\p$ respectively; by their definition and the previous argument, we must have $\abstr{\sigma\p}\succ \sigma\p$.
    
    \item If $\text{rhs}_p=\am{c_p\p}{K_p\p}$, then the result is  $\am{\abstr{\sigma}(c_p\p)}{\abstr{\sigma}(K_p\p)}$. Since $\abstr{\sigma}\succ \sigma$, $\am{\abstr{\sigma}(c_p\p)}{\abstr{\sigma}(K_p\p)} \succ \am{\sigma(c_p\p)}{\sigma(K_p\p)} = \am{c_2}{K_2}$ by the Abstract Matching Property, finishing the proof. 
\end{itemize}
\end{proof}

\thmabstransition*

\noindent The proof uses the following observation:

\begin{observation}
\label{obs:noncut-prop}
The nontermination-cutting relation $(\triangleleft)$ satisfies the following properties:

\begin{enumerate}
    \item If $a \prec b$ and $b \triangleleft c$, then $a \triangleleft c$.
    \item If $a \rightarrow b$ and $a \triangleleft c$, then $b \triangleleft c$.
\end{enumerate}
\end{observation}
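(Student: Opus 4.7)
Both parts reduce to a careful unfolding of the definition of $\triangleleft$. Recall that $x \triangleleft y$ holds exactly when $y = \am{(\valstar, \top_l)}{\abstr{K}'}$ for some subcontext $\abstr{K}'$ of $x$'s context, and every concretization $\am{(t,s)}{K} \prec x$ has the ``trap'' property: every concrete derivation from $\am{(t,s)}{K}$ either keeps $\abstr{K}'$ on the stack at every intermediate state, or revisits $K$ with a value-shaped term. My plan is to attack each part by reducing it to this concretization-quantified trap condition and then checking that a suitable subcontext witness survives.

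For part (1), I would start from $b \triangleleft c$ and unfold to obtain $c = \am{(\valstar, \top_l)}{\abstr{K}'}$ with $\abstr{K}'$ a subcontext of $b$'s context. Monotonicity of the concretization immediately gives $\gamma(a) \subseteq \gamma(b)$ whenever $a \prec b$, so every $\am{(t,s)}{K} \prec a$ is automatically $\prec b$ and inherits the trap condition verbatim. The only bookkeeping is to exhibit a subcontext of $a$'s context aligning with $\abstr{K}'$; since $a \prec b$ forces both contexts to share frame count with each frame of $a$ being $\prec$ the corresponding frame of $b$, the canonical prefix of $a$'s context corresponding to $\abstr{K}'$ plays this role, and the trap condition is stated only in terms of what appears in concretizations, so alignment at the abstract level is all we need.

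For part (2), I would unfold $a \triangleleft c$ to the same shape $c = \am{(\valstar, \top_l)}{\abstr{K}'}$ and aim to show the trap condition for $b$. Take any $\am{(t_b, s_b)}{K_b} \prec b$ and any concrete derivation from it. The key move is to promote this derivation into a derivation from some concretization of $a$: because the abstract step $a \rightarrow b$ was witnessed by some rule $F$ matching $a$ abstractly, the Abstract Matching Property supplies a concrete substitution whose image $\am{(t_a, s_a)}{K_a} \prec a$ concretely steps by $F$ to $\am{(t_b, s_b)}{K_b}$. Prepending this single concrete step to the chosen derivation yields a derivation from below $a$, to which I apply the trap hypothesis and then strip off the prepended step; because stripping off one step cannot remove a frame strictly below $\abstr{K}'$ (otherwise the trap hypothesis for $a$ would already have been violated by that very step), the trap condition transfers to $\am{(t_b, s_b)}{K_b}$ with the same witness $\abstr{K}'$, which is still a subcontext of $b$'s context.

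The main obstacle I expect is establishing the backward direction used in part (2), namely that every concretization of $b$ arises as the concrete image of a concretization of $a$ under the abstract rule $F$ that drove $a \rightarrow b$. This is not the forward Lifting Lemma of \secref{sec:abstract-rewriting}; it is a companion claim that the abstract match of $F$'s LHS against $a$, together with the abstract match of $F$'s RHS template against $b$, factors every concrete $y \prec b$ through a concrete $x \prec a$ with $x \amto y$ by $F$. I expect to prove this directly from the witness-substitution construction in the Abstract Matching Property: given the concretization of $y$, re-run the substitution backwards through the RHS pattern to recover a concrete substitution, and check that its image on the LHS pattern is a concretization of $a$. Once this backward lifting is in hand, part (2) closes cleanly.
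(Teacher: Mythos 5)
The paper itself offers no proof of this Observation --- it is stated in \appref{app:abs-rewriting-proofs} as an unproven remark and immediately used in Case (2) of the Abstract Transition Theorem --- so your proposal is more detailed than anything the authors wrote. Your part (1) is essentially right: $\prec$ is transitive, so every concretization of $a$ is a concretization of $b$ and inherits the universally-quantified trap condition, with only the ``subcontext'' bookkeeping to discharge. The problems are in part (2).

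First, you have made the problem harder than the paper needs it to be. In the one place the Observation is invoked, $a$ and $b$ are \emph{concrete} states in $\set{amState}$ and $a \amto b$ is a concrete AM step; the only concretization of $b$ is $b$ itself, and it is reached from $a$ by hypothesis. By reading $a \rightarrow b$ as an abstract step $\absred{\beta}$ you are forced to invent a ``backward lifting'' lemma --- that every $y \prec b$ factors as $x \amto y$ for some $x \prec a$ --- and that lemma is false in general: abstract rewriting strictly overapproximates. The Lifting Lemma of \secref{sec:abstract-rewriting} goes only one direction, and a base abstraction $\beta(\text{func})$ is only required to \emph{contain} the concrete outputs, so a concretization of an abstract result (e.g.\ of a $\valstar$ produced by $\beta(\text{func})$) need not be the image of any concrete application of $\text{func}$. ``Re-running the substitution backwards through the RHS pattern'' cannot recover what was never there. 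Second, even granting the lifting, your ``strip off the prepended step'' move glosses over the real difficulty: the trap condition is phrased relative to the start state's \emph{own} context. If the step $a \amto b$ pushes or pops a frame, then $K_b \neq K_a$, and a subderivation of the extended derivation that returns to $K_a$ with a value term is not a subderivation from $b$ that returns to $K_b$ with a value term; likewise ``$K_a$ is a subterm of $K\p$'' does not yield ``$K_b$ is a subterm of $K\p$'' when $K_b = K_a \frcomp f$. Your parenthetical about frames ``strictly below $\abstr{K}\p$'' addresses the witness subcontext but not this mismatch, which is exactly the point the definition leaves murky (the authors' own source comments admit as much). A correct write-up should either restrict part (2) to concrete states and a concrete step, where the prepend-and-case-split argument can be carried out explicitly for the two disjuncts, or first repair the definition of $\triangleleft$ so that the trap condition is stated relative to the fixed witness $\abstr{K}\p$ rather than the varying concrete context $K$.
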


\noindent We now prove the Abstract Transition Theorem:

\begin{proof}

Because $(\prec)$ is reflexive and transitive, there must be a \textit{canonical derivation} of $a \sqsubseteq \alpha(a)$ of one of the following three forms:

\begin{itemize}
    \item Case (1): $a \prec \alpha(a)$. Then, by the lifting lemma, there is a $g$ with  $b \prec g$ and $\alpha(a) \absred{\beta} g$.
    \item Case (2): There are $x_1, x_2$ such that $a \prec x \triangleleft x_2 \sqsubseteq \alpha(a)$. Then, by Observation \ref{obs:noncut-prop}, $a \triangleleft x_2$, and hence $b \triangleleft x_2$, and hence $b \sqsubseteq x_2 \sqsubseteq \alpha(a)$.
    \item Case (3): There is an $x_1$ such that $a \prec x_1$, and, for all $x_2$ such that $x_1 \absred{\beta} x_2$, $x_2 \sqsubseteq \alpha(a)$. First, by the lifting lemma, there is an $x_1\p$ with $b \prec x_1\p$ and $x_1 \absred{\beta} x_1\p$.  But, by assumption, $x_1\p \sqsubseteq \alpha(a)$. Hence, $b \prec x_1\p \sqsubseteq \alpha(a)$, so $b \sqsubseteq \alpha(a)$.
\end{itemize}

\end{proof}

\section{More Details on Syntax-Directed CFG-Generators}
\label{app:notes-syntax-directed}

\subsection{Correctness of Graph Patterns}
\label{app:graph-pattern-correctness}

We first note that, by starting abstract execution on a node whose
immediate children are all non-value variables, this algorithm assumes
that the initial state of the program must contain no value nodes. This is not a real restriction; one can
transform any language to meet this criterion by replacing each
value-node $V(\overline{x})$ in initial program states with a
non-value node $MkV(\overline{x})$ and adding the rule
$MkV(\overline{x}) \sosto V(\overline{x})$. So, pedantically speaking,
the graph-patterns produced by the algorithm are not actually graph patterns of the
original language, but rather of this normalized form.

We now build the setting of our proofs. In this develpment, let
$\set{term}_\text{Var}$, $\set{amState}_\text{Var}$, etc be variants
of $\set{term}\star$, $\set{amState}\star$, etc that may contain free variables as well as $\star$ nodes. We extend the $\prec$
ordering to include the subsumption ordering, and with the relation $x_\text{mt} \prec \star_\text{mt}$ for any variable $x$, so that $a\prec b$ if
$a$ may be obtained from $b$ by specializing a match type, expanding a
star node, \textit{or} substituting a variable. 

As mentioned in \secref{sec:abstractions}, we add a few technical conditions to the definition of an abstraction $\alpha$. The first condition prevents an antagonistically-chosen $\alpha$ from doing something substantially different when encountering a more abstract term.

\begin{assumption}
\label{assm:alpha-monotone-prec}
On terms without variables, $\alpha$ must be monotone in the $\prec$ ordering. As the analogue for terms with variables, if there is a substitution $\sigma$ such that $b=\sigma(a)$, then there must be a substitution $\sigma\p$ extending $\sigma$ such that $\alpha(b)=\sigma\p(\alpha(a))$.
\end{assumption}

The next condition is stronger than we need, but greatly simplifies the discovery of the correspondence between graph patterns and interpreted-mode graphs. In plain words, it states that abstractions may not drop stack frames: they may skip over the execution of a subterm entirely, but may not skip over only the latter part of a computation. All abstractions discussed in this paper satisfy it.

\begin{definition}
Define the \textit{stack length} of a state $s=\am{c}{K}$ as:
\begin{itemize}
    \item $\text{stacklen}(\am{c}{\halt}) = 0$
    \item $\text{stacklen}(\am{c}{K \frcomp f}) = 1 + \text{stacklen}(\am{c}{K})$
\end{itemize}
\end{definition}

\begin{assumption}
\label{assm:no-frame-drop}
For all $a\in \set{amState}_\text{Var}$, we require that $$\text{stacklen}(a) \le \text{stacklen}(\alpha(a))$$

\noindent Further, there must be a derivation of $a \sqsubseteq \alpha(a)$ where none of the intervening states $c$, as in Definition \ref{defn:abstraction-ordering}, satisfy $\text{stacklen}(c) < \text{stacklen}(a)$.
\end{assumption}


We now begin the proofs. We need a new version of the Generalized Lifting Lemma for narrowing. Mimicking the proof, and using the relation between matching and unification, gives the following:

\begin{lemma}[Lifting Lemma (Narrowing)]
Let $a\in\set{amState}_\star, b \in \set{amState}_\text{Var}$, $a \prec
b$, and let $\beta$ be a base abstraction. Suppose $a\absred{\beta} a\p$. Then there exists $b\p$ such that $b\absnarrow{\beta} b\p$ and $a\p\prec b\p$.
\end{lemma}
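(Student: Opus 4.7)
The plan is to mirror the proof of Theorem \ref{thm:gen-abs-rewriting}, with the key modification being to replace abstract matching by abstract unification at every point where a witness substitution is computed. Before running the induction, I first need an auxiliary \emph{Abstract Unification Property} that plays the role of the Abstract Matching Property: if a pattern $p$ abstractly matches $a \in \set{term}_\star$ with witness $\sigma_a$, and $a \prec b$ where $b$ may contain free variables, then $p$ abstractly unifies with $b$; moreover, writing $\tau$ for the resulting most general unifier, we have $a \prec \tau(p) = \tau(b)$ and $\sigma_a(x) \prec \tau(x)$ for every $x \in \text{dom}(\sigma_a)$. This is justified by observing that $a$ is a common instance of both $p$ (via $\sigma_a$) and $b$ (via the witness of $a \prec b$), so an MGU exists and its instantiation is no more specific than $a$.

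With this property in hand, I would proceed by induction on the RHS structure of the rule $F$ that justifies $a \absred{\beta} a'$, following the same case analysis as Theorem \ref{thm:gen-abs-rewriting}. The inductive invariant carried along is a pair consisting of the abstract-rewriting witness $\sigma$ for $a$'s derivation and the narrowing substitution $\tau$ for $b$'s derivation, related by $\sigma(x) \prec \tau(x)$ on their common domain. For the terminal case $\text{rhs}_p = \am{c_p'}{K_p'}$, narrowing yields $b' = \tau(\am{c_p'}{K_p'})$ and rewriting yields $a' = \sigma(\am{c_p'}{K_p'})$; then $a' \prec b'$ falls out by pointwise monotonicity of substitution under $\prec$. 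For the semantic-function case $\letcomp{c_\text{ret}}{\text{func}(\overline{c_\text{args}})}{\text{rhs}_p'}$, abstract rewriting selects some $r \in \beta(\text{func})(\sigma(\overline{c_\text{args}}))$; since $\beta$ is a base abstraction and $\sigma(\overline{c_\text{args}}) \prec \tau(\overline{c_\text{args}})$, there must exist $\hat{r} \in \beta(\text{func})(\tau(\overline{c_\text{args}}))$ with $r \prec \hat{r}$. Applying the Abstract Unification Property once more to lift the match of $c_\text{ret}$ against $r$ into a unification against $\hat{r}$ extends both substitutions in tandem, preserving the invariant for the recursive call.

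The main obstacle I anticipate is the bookkeeping around match-type annotations during unification. When $a$ has a concrete value at a position where $b$ has a general variable $x_\mtall$, unification must correctly specialize the variable; conversely, if $p$ demands a nonvalue at a position where $b$ has a value variable, unification must fail precisely when matching would have. Verifying that the extended $\prec$ ordering, which identifies appropriately typed $\star$ nodes with variables via $x_\text{mt} \prec \star_\text{mt}$, interacts correctly with unification for each of the three match types $\mtval$, $\mtnonval$, $\mtall$ is routine but must be checked exhaustively in order to establish the Abstract Unification Property in full generality over $\set{term}_\text{Var}$.
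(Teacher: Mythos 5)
Your proposal is correct and follows essentially the same route as the paper, which proves this lemma only by remarking that one should ``mimic'' the proof of the Generalized Lifting Lemma, ``using the relation between matching and unification.'' Your elaboration --- isolating an Abstract Unification Property as the analogue of the Abstract Matching Property, carrying the invariant $\sigma(x) \prec \tau(x)$ through the same case analysis on the rule's RHS, and flagging the match-type bookkeeping --- is exactly the intended argument, spelled out in more detail than the paper provides.
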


We now prove the correspondence between a graph pattern and the relevant subgraph of an abstract transition graph. To isolate the relevant subgraphs, we use the concept of \textit{hammocks} from graph theory, which are commonly used in the analysis of control-flow graphs (e.g.: \citet{Ferrante1987}). A hammock of a control-flow graph is a single-entry single-exit subgraph. We use the modified term \textit{weak hammock} to refer to a single-entry multiple-exit subgraph.

\begin{definition}
Let $N^+(n)$ be the out-neighborhood of $n$ in a graph $G$. Then the \textit{weak hammock} of $G$ bounded by entry node $n$ and exit node-set $\mathcal{T}$ is the subgraph of $G$ induced by the node set given by the least-fixed-point of $Q$, where

$$Q(S) = \{n\} \cup \bigcup_{m\in (S \setminus \mathcal{T})} N^+(m) $$
\end{definition}

Our goal now is to, given the abstract transition graph of a program, discover the fragment that corresponds to the control-flow of a single node. We will then prove the correspondence between these fragments and the relevant graph pattern.

\begin{definition}
Let $N$ be a non-value node type and $\alpha$ an abstraction, and consider some configuration $S=\am{(N(\overline{e_i}),\mu)}{K}$. Let $T$ be  the abstract transition graph $T$ of $(\absred{\alpha})$ starting from $S$. Let $E = \{t \in T | t \neq S \wedge \text{stacklen}(t) \le \text{stacklen}(S) \}$. Then the \textit{CFG fragment for } $S$ is defined inductively as follows:

\begin{itemize}
  \item If none of the $e_i$ are non-values, then the CFG fragment for $S$ is the weak hammock of $T$ bounded by $S$ and $E$.
  \item Otherwise, the CFG fragment for $S$ is the weak hammock of $T$ bounded by $S$ and $E$, minus the edges of the CFG fragments for each non-value $e_i$, minus also the nodes which then become unreachable from both $S$ and $E$. 
\end{itemize}

We say there is a \emph{transitive edge} from the start state of each sub-CFG-fragment to its end states. By default when discussing the edges of a CFG fragment, we do not include the transitive edges.
\end{definition}

\begin{lemma}
Let $\abstr{e} \prec \nonvalstar$. Then, for any $\abstr{s}, \abstr{K}$, $\am{(\abstr{e},\abstr{s})}{\abstr{K}} \triangleleft \am{(\valstar, \top_l)}{\abstr{K}}$.
\end{lemma}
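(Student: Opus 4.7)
The plan is to unpack the definition of $(\triangleleft)$ and reduce the statement to a uniform property about AM derivations from every concrete refinement $\am{(t,s)}{K}\prec \am{(\abstr{e},\abstr{s})}{\abstr{K}}$. Because $\abstr{e}\prec \nonvalstar$, any $t\prec \abstr{e}$ must have a nonval head node, so every such $t$ is a concrete nonvalue. Also $\abstr{K}$ is trivially a subcontext of itself, so the witnessing state on the right is exactly $\am{(\valstar,\top_l)}{\abstr{K}}$ with $\abstr{K}\p=\abstr{K}$; the work is to verify the disjunctive condition that the definition of $(\triangleleft)$ imposes on every derivation.

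I would fix an arbitrary AM derivation $\am{(t,s)}{K}\amto^* \am{(t\p,s\p)}{K\p}$ and write $\am{(t_i,s_i)}{K_i}$ for its $i$-th state, with $(t_0,s_0,K_0)=(t,s,K)$. The main case split is on whether $K$ is a subcontext of every $K_i$ appearing in the derivation. In the ``yes'' branch, $K$ is in particular a subcontext of $K\p$, so the first disjunct of the $(\triangleleft)$ condition is satisfied directly.

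In the ``no'' branch, let $i$ be the least index with $K$ not a subcontext of $K_i$; I claim that $K_{i-1}=K$ and that $t_{i-1}$ is a value, so the prefix $\am{(t,s)}{K}\amto^* \am{(t_{i-1},s_{i-1})}{K}$ is a subderivation ending at a state whose term $t_{i-1}\prec \valstar$, discharging the second disjunct. For the first part, the contrapositive of Sanity of Frame (2) from Appendix A gives that a nonvalue $t_{i-1}$ would force $K_{i-1}\subseteq K_i$; chained with $K\subseteq K_{i-1}$ this contradicts minimality of $i$, so $t_{i-1}$ is a value. For the second part, I split on Sanity of Frame (1): the push case $K_i = K_{i-1}\frcomp f$ immediately gives $K\subseteq K_i$ and is ruled out; in the pop case $K_{i-1}=K_i\frcomp f$, if $K\subsetneq K_{i-1}$ then the missing frame(s) all lie strictly above $K$, so $K\subseteq K_i$, contradiction; the frame-swap case $K_{i-1}=K\pp\frcomp f,\ K_i=K\pp\frcomp f\p$ is identical, since $K\subsetneq K_{i-1}$ forces $K\subseteq K\pp\subseteq K_i$. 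Hence $K=K_{i-1}$.

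The only potentially delicate point is bookkeeping the three sub-cases of Sanity of Frame (1) and confirming that the context can only ``leave'' $K$ by a pop applied at exactly $K$; once that is spelled out the argument is mechanical, and the lemma follows by discharging the $(\triangleleft)$ definition with the witness already chosen.
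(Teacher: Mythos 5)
Your proof is correct and takes essentially the same route as the paper, which simply cites the Sanity of Frame properties to conclude that every derivation either keeps $K$ as a subcontext or first returns to $K$ with a value before escaping. Your version merely spells out the case analysis on the first index where $K$ ceases to be a subcontext, and in doing so is actually slightly more careful than the paper's two-sentence argument, since you explicitly verify the $t\pp \prec \valstar$ requirement of the nontermination-cutting definition.
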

\begin{proof}
Consider $\am{c}{K} \in \gamma(\am{(\abstr{e},\abstr{s})}{\abstr{K}})$. By the Sanity of Frame properties, for any derivation $\am{c}{K}\amto^* \am{c\p}{K\p}$, either $K\p$ contains $K$ or there is a subderivation of the form $\am{c}{K} \amto^* \am{c\p}{K}$.
\end{proof}

\begin{theorem}[Correctness of Graph Patterns]
\label{thm:graph-pat-correctness}
Let $N$ be a non-value node type, and $P$ be its graph pattern under $(\absred{\alpha})$. For an abstraction $\alpha$, consider the abstract transition graph $T$ of $(\absred{\alpha})$ from some start state $S=\am{(N(\overline{e_i}),\mu)}{K}$. Let $F$ be the CFG fragment for $S$ in $T$. Let $\sigma$ be the substitution resulting from unifying the start state of $P$ with $S$. Then, for every edge
$a \absred{\alpha} b$ in $F$, there are $a\p, b\p \in P$ with
$a\prec \sigma(a\p)$, $b\prec \sigma(b\p)$ such that $a\p \absred{\alpha}
b\p$ is in $P$.
\end{theorem}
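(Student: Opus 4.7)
My plan is to prove, by strong induction on the shortest-path distance in $F$ from $S$ to the source of the given edge, a stronger invariant: every state $a$ appearing in $F$ is dominated by some state $a' \in P$ in the sense that $a \prec \sigma_a(a')$, where $\sigma_a$ is an extension of $\sigma$ recording new bindings introduced by narrowing along the corresponding path in $P$. The base case $a = S$ is immediate since $\sigma$ was chosen to unify $S$ with the start state of $P$.

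For the inductive step, given $a \prec \sigma_a(a')$ and an edge $a \absred{\alpha} b$ in $F$, I split on whether the graph pattern algorithm performs actual narrowing at $a'$ or instead inserts a transitive edge. In the narrowing case, I apply the Lifting Lemma for Narrowing to obtain $b' \in P$ with $a' \absnarrow{\alpha} b'$ and an extension $\sigma_b$ of $\sigma_a$ such that $b \prec \sigma_b(b')$. Assumption~\ref{assm:alpha-monotone-prec} ensures that $\alpha$ commutes suitably with substitution, so that the abstraction step baked into $\absnarrow{\alpha}$ preserves the $\prec$ ordering when composed with $\sigma_b$.

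The subtler case is when $a' = \am{(e_\mtnonval,\mu')}{K'}$: the algorithm installs a transitive edge from $a'$ to $\am{(\valstar, \top_l)}{K'}$ rather than exploring further. All downstream narrowing steps from such an $a'$ in the actual graph $T$ correspond to a sub-CFG-fragment for some non-value child $e_i$, and that fragment's edges are explicitly subtracted from $F$ in the definition. Hence the only edges in $F$ that exit this carved-out region emerge from value-state exits $\am{(v,\mu'')}{K}$ of the sub-fragment; by construction of $\valstar$ and $\top_l$, these exits satisfy $\am{(v,\mu'')}{K} \prec \sigma_a(\am{(\valstar, \top_l)}{K'})$, so traversing the transitive edge in $P$ correctly re-anchors the invariant. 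Assumption~\ref{assm:no-frame-drop} is essential here: it guarantees that no stack frames are dropped across the abstraction, so the sub-fragment exits and the transitive-edge target indeed share the context $K'$ after applying $\sigma_a$.

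The main obstacle will be bookkeeping the substitutions $\sigma_a$ coherently across inductive steps where multiple paths in $F$ reach the same node, and formally matching the carved-out structure of sub-CFG-fragments in $F$ to the transitive edges in $P$. The Lifting Lemma supplies the local step, but showing that its substitution extensions compose consistently across the seam between ordinary $F$-edges and the boundaries of carved-out sub-fragments requires Assumption~\ref{assm:no-frame-drop} together with a careful case analysis of the graph pattern algorithm's stopping condition at non-value states. A secondary subtlety is that the narrowing relation in $P$ introduces fresh variables, and I must verify that these variables can always be instantiated (by choosing $\sigma_b$ appropriately) so that the resulting $\sigma_b(b')$ dominates the actual target $b$ given by $\absred{\alpha}$ in the concrete transition graph.
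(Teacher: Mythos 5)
Your proposal is correct and follows essentially the same route as the paper's proof: the paper likewise propagates the invariant $a \prec \sigma(a\p)$ forward via the Lifting Lemma for Narrowing together with Assumption~\ref{assm:alpha-monotone-prec}, and handles the transitive edges by observing that sub-fragment exit nodes are values whose sources have the form $\am{(e_i,\mu)}{K_x}$ (using Sanity of Frame and Assumption~\ref{assm:no-frame-drop}), so they are dominated by the pattern's $\am{(\valstar,\top_l)}{K_x\p}$ target. The only cosmetic difference is that the paper phrases the induction as a reachability argument from ``seed'' nodes ($S$ and the transitive-edge targets) rather than by path length, and it elides the substitution-extension bookkeeping you rightly flag as a subtlety.
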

\begin{proof}
For any $a\in V(F), a\p \in V(P)$ with $a\prec \sigma(a\p)$, this is true for all
edges reachable from $a$ by the Lifting Lemma and by Assumption \ref{assm:alpha-monotone-prec}. We hence must show that every node in $F$ is reachable from some node $x$ satisfying $\exists x\p\in P, x\prec \sigma(x\p)$. But note that every node in $F$ is reachable from either $S$ or by the exit nodes of the CFG fragment for one of the $e_i$. By construction, $S$ is
equal to the start state of $P\p$. It remains to show this condition for the exit nodes of the CFG fragments for the $e_i$, i.e.: the nodes which are the target of a transitive edge.

Pick such a node $x=\am{c_x}{K_x}$, and note that, by construction and by Sanity of Frame, $c_x$ must be a value. Then, using the Sanity of Frame properties and Assumption \ref{assm:no-frame-drop}, the source of the corresponding transitive edge(s) must have the form $\am{(e_i,\mu)}{K_x}$. By induction, we must have $e_i\p, \mu\p, K_x\p$ with $e_i\prec\sigma(e_i\p)$ and $K_x\prec\sigma(K_x\p)$ with $\am{(e_i\p, \mu\p)}{K_x\p} \in P$. This node has a transitive edge to $\am{(\valstar, \top)}{K_x\p}$ in $P$, which satisfies $\am{c_x}{K_x} \prec \sigma(\am{(\valstar, \top)}{K_x\p})$, finishing the proof.
\end{proof}

One advantage of the compiled-mode is that it's done once per language/abstraction pair, so any corner case that causes an infinite-loop would be exposed up front. Yet interpreted-mode CFG-generation is always more precise, albeit slower and less predictable. Can we at least guarantee that interpreted-mode generation will terminate, i.e.: result in a finite graph?

It's easy to construct examples where interpreted-mode CFG-generation does not terminate (e.g.: the identity abstraction, and any non-terminating program), so there's no universal termination theorem. But here's a cool result: if compiled-mode generation terminates, then so does interpreted-mode. Intuitively, this is so because graph-pattern generation does the kind of case analysis that's needed to prove termination of interpreted-mode CFG-generation for a single language/abstraction pair

\begin{theorem}[Finiteness]
Consider a node type $N$, its graph pattern $P$, any state of the form   $S=\am{(N(\overline{e_i}),\mu)}{K}$, and the CFG fragment $F$ for $S$ in the abstract transition graph of $S$. If $P$ is finite, then so is $F$.
\end{theorem}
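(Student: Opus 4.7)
The plan is to exhibit a finite-to-one map $\phi : V(F) \to V(P)$; since $P$ is finite by hypothesis, this will give $|V(F)| < \infty$, after which finiteness of $E(F)$ will follow because each abstract machine rule contributes at most one successor per matching witness. The map $\phi$ comes directly from Theorem~\ref{thm:graph-pat-correctness}: unifying the start state of $P$ with $S$ yields a substitution $\sigma$, and the theorem guarantees that every $a \in V(F)$ admits some $a' \in V(P)$ with $a \prec \sigma(a')$. Pick one such $a'$ for each $a$ to define $\phi(a) = a'$.

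The central step will be showing that each fibre $\phi^{-1}(a')$ is finite. Every node in $V(F) \setminus \{S\}$ lies in $\alpha(\set{amState}\star)$, because every $\absred{\alpha}$-transition lands in $\alpha$'s image. For a fixed $a' \in V(P)$, the state $\sigma(a')$ has a bounded number of $\star$- and variable positions (controlled by the size of $P$ and of $\sigma$), so any $a \prec \sigma(a')$ is determined by how these positions are instantiated. Combined with $\alpha$'s idempotence and the constraint that $a$ itself is an $\alpha$-image, I expect to show that only finitely many such $a$ can appear: distinct elements of the fibre must differ at some $\star$-position of $\sigma(a')$, and the image of $\alpha$ below such a position is constrained, via the Galois connection induced by $\alpha$ from \secref{sec:abstractions}, to a bounded set.

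Once the fibre bound is in place, the edge count follows because the out-degree in $F$ at any node is dominated by the number of applicable abstract machine rules, which is finite per language. The main obstacle is the fibre-finiteness step itself: it is not immediate from $\alpha$ being merely a monotone idempotent closure operator, and I expect the argument to need Assumption~\ref{assm:no-frame-drop} to bound stack growth relative to $\sigma(a')$, together with an induction on term size showing that only finitely many $\alpha$-images can sit below any given abstract state. As a backup, one can argue via a direct path-simulation: define $\phi$ along paths out of $S$ and show that any two distinct paths in $F$ mapping to the same path in $P$ must eventually coincide, exploiting the determinism of the narrowing substitution generated while building $P$ to collapse the ambiguity among concrete instantiations of $\sigma(a')$.
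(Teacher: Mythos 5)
Your primary route has a genuine gap at exactly the point you flag: fibre-finiteness of $\phi$ does not follow from the paper's assumptions. For a pattern node $a\p$, the state $\sigma(a\p)$ generally still contains $\star$ nodes (e.g.\ $\valstar$ in a value or environment position), and $\gamma(\valstar)$ is infinite; nothing constrains the set of $\alpha$-images sitting below such a position to be finite. An upper closure operator under $\sqsubseteq$ can have infinitely many fixed points below a single abstract state, and neither idempotence, the induced Galois connection, nor Assumption~\ref{assm:no-frame-drop} (which controls only stack length, not the term or reduction-state components) yields the ``induction on term size'' you hope for. So the claim that each fibre $\phi^{-1}(a\p)$ is finite is essentially as strong as the theorem itself and cannot be discharged locally; your proposal correctly identifies this as the main obstacle but does not close it.

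The paper's proof avoids counting fibres altogether. It argues by contradiction: if $F$ is infinite then, by K\"onig's lemma together with the finite-image condition on semantic functions (which bounds out-degree), $F$ contains an infinite path; finiteness of $P$ forces the image of that path under the correspondence of Theorem~\ref{thm:graph-pat-correctness} to revisit a pattern node, yielding a segment $f_1 \rightarrow \dots \rightarrow f_n$ of \emph{distinct} states in $F$ whose image in $P$ is a cycle $p_1 \rightarrow \dots \rightarrow p_{n-1} \rightarrow p_1$; and then --- the key step --- the substitutions produced along the path are consistent extensions of one another and $\sigma(p_1)$ is ground, being pinned down by unification with the start state $S$, so $f_1 = \sigma(p_1) = \sigma\p(p_1) = f_n$, contradicting distinctness. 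Your ``backup'' path-simulation idea is the germ of exactly this argument --- the consistency of the narrowing substitution is what does the work --- but as stated it aims at the wrong target (two distinct paths coinciding, rather than a single path being forced to repeat a node) and is left undeveloped. I would abandon the fibre map and develop the backup into this cycle-pullback argument, remembering to include the K\"onig's-lemma step that converts ``$F$ infinite'' into ``$F$ has an infinite path.''
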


\begin{proof}
Assume $P$ is finite, and suppose $F$ is infinite. By K\"{o}nig's Lemma, either $F$ contains a node of infinite outdegree, or an infinite path. However, because all semantic functions $\text{func}$ have the property that $\text{func}(\overline{c})$ is finite for all $\overline{c}$, the only possibility is for $F$ to contain an infinite path.

Theorem \ref{thm:graph-pat-correctness} establishes a relation $M$ between $F$ and $P$. Since $P$ is finite, this implies that there is a path $f_1 \rightarrow f_2 \rightarrow \dots \rightarrow f_n$ in $F$ where each $f_i$ is distinct, where the corresponding path in $P$ $p_1\rightarrow \dots \rightarrow p_{n-1} \rightarrow p_1$  is a cycle, where each $\rightarrow$ is either a transitive edge, $(\absred{\alpha})$  (in $F$), or $(\absnarrow{\alpha})$ (in $P$).

However, by Theorem \ref{thm:graph-pat-correctness}, there is $\sigma$ such that $\sigma(p_1) = f_1$, and there is $\sigma\p$ extending $\sigma$ such that $\sigma\p(p_1) = f_n$. Since $\sigma(p_1)$ must be ground, that means $\sigma(p_1)=\sigma\p(p_1)$, and hence $f_1=f_n$, contradicting the assumption. Hence, $F$ must be finite.
\end{proof}

\todo{I think there are some technical details involving the relation between star and vars that I'm squishy about, so the ``can extend sigma to sigma' such that..." thing is not entirely justified. But I'm pretty confident in this proof. I declare myself done at last.}

As a corollary, we obtain our automated proof of termination for the interpreted-mode CFG generators.

\thmbothfinite*

\subsection{Code-Generation}
\label{app:code-gen}

The code generator traverses the graph pattern, identifying subterm
enter and exit states, and greedily merges unrecognized intermediate
states with adjacent ones if one dominates the other, essentially building a custom projection. The output projection will have the property that each equivalence class of nodes forms a connected loop-free subgraph of the entire graph pattern. In the end, for an AST-node with $k$ children, there will be up to $2k+2$ graph nodes, an enter and exit CFG-node for the outer AST node and each child. Once all abstract states have been merged into these graph nodes, the algorithm identifies the edges between the nodes, and outputs code like in \figref{fig:front-page-result}.

The code generator guarantees that the generated code will generate a valid CFG (i.e.: a valid projection of the abstract transition graph). However, there are cases where a projection of the desired form does not exist, and hence the search terminates with failure. This occurs in cases where
a single step in the source program corresponds to an arbitrary number of steps
in the internal semantics (as in Java method resolution,
which must traverse the class table). This scenario does not occur in \textsc{Tiger} and \textsc{MITScript}; it successfully generates code for both.

One subtle fact is that not every AST node will have a distinct exit node. For example, in all three languages, the graph pattern for
$\ite{e}{s_1}{s_2}$ will actually terminate in the exit nodes of $s_1$
and $s_2$, which both must be directly connected to whatever statement
follows the if. The generated CFG-generators hence actually treat
lists of enter/exit nodes as the atomic unit, rather than single nodes. Note that, if the language designer did want to have a distinct ``join'' node terminating the CFG for an if-statement, they could accomplish this by changing the semantics of if-statements to introduce an extra step after the body is evaluated.

\section{Addition Notes on Languages}

\subsection{Notes on Designing the Semantics}

Consider the following rule for evaluating the l-value of a field assignment:

$$(\assign{a.n}{e},\mu) \sosto \letstepto{(a,\mu)}{(a\p,\mu\p)}{(\assign{(a\p).n}{e},\mu\p)} $$

\noindent Upon running \mandate in compiled-mode, rules like this produce graph-patterns which are too coarse. That's because \mandate generates graph patterns for all assignments $\assign{x}{e}$, which could match this rule as well as the rules for assignments of variables or array indices. While we could modify \mandate to produce separate graph patterns for $\assign{a.n}{e}$ vs. $\assign{a[i]}{e}$ or $\assign{x}{e}$, we opted instead to change the semantics. We instead modified the rule to this:

$$(\assign{l}{e},\mu) \sosto \letstepto{(l,\mu)}{l\p,\mu\p)}{(\assign{l\p}{e},\mu\p)}$$

\noindent We then ensured there were separate node types for l-values vs. expressions of the form $a.n$, and defined rules to evaluate each kind of l-value. 

The overall lesson is that \mandate's compiled mode works best when rules match only on the topmost node. We have not yet needed to make \mandate more sophisticated to overcome this problem because it is easier to modify the language semantics.

\subsection{Reduction State for \textsc{Tiger} and \textsc{MITScript}}
\label{app:tiger-red-state}

The main difference between \IMP and the larger languages is in the structure of their heap. In \IMP, the reduction state was a simple map of variable names to values. In \textsc{Tiger} and \textsc{MITScript}, the reduction state must allow for stack frames, pointers, and closures. This reduction state is merely \textbf{a particularly-shaped pair of environment and term}, and involves \textbf{no extension} to the mechanics already used in \IMP. Both \textsc{Tiger} and \textsc{MITScript} use the same design for their reduction state, designed as follows:

\begin{small}
\begin{center}
\begin{mathpar}
\begin{tabular}{lcl}
\set{State} & \gramdef &  $(\text{Stack},\text{Heap})$ \\
\set{HeapAddr} & \gramdef & \set{Int} \\
\set{Stack} & \gramdef & $\text{ConsFrame}(\set{HeapAddr}, \set{Stack}) \mathrel{\sor} \text{NilFrame}$ \\
\set{Heap} & \gramdef & $\set{HeapAddr} \partialfunc \set{Record}$ \\
\set{Record} & \gramdef & $(\set{Symbol} \partialfunc \set{Value}, \set{HeapAddr}?)$
\end{tabular}
\end{mathpar}
\end{center}
\end{small}

\noindent The state consists of a stack and a heap. The stack is a list of heap addresses. The heap is a map of heap addresses to records, where each record contains a map of symbols to values. These records are used to store not only stack frames but also arrays and objects in the language. Each record optionally also contains a pointer to a parent record: looking up a symbol in a record will traverse the parent record if not found.

For an example heap, \figref{fig:mitscript-heap} gives the starting heap of all \textsc{MITScript} programs, which contains hardcoded mappings of several strings to builtin functions. These functions are ordinary \textsc{MITScript} values defined elsewhere, whose bodies contain the special \texttt{Builtin} node, whose evaluation invokes an appropriate semantic function.


A limitation of the current \mandate implementation is that it does not support associative-commutative matching for nested maps. Hence, each heap record is implemented as an assoc-list rather than a map, with record lookup implemented as a semantic function. This is the reason why the boolean-tracking abstraction is not implemented for \textsc{Tiger} or \textsc{MITScript}.

\subsection{Graph Pattern for Tiger For-Loops}
\label{app:tiger-graphpat}

See \figref{fig:forexp-graphpat}. The PDF version of this paper contains the image as a vector graphic so that it can be viewed in full detail upon zoom. It is also available with the supplementary material.

\begin{figure*}
\begin{center}
\vspace{-1.5em}
\includegraphics[scale=0.075]{figures/forexp_pat.pdf}
\end{center}
\caption{Graph pattern for \textsc{Tiger} for-loops}
\label{fig:forexp-graphpat}
\end{figure*}

\begin{center}
\begin{lstlisting}[language=Haskell,basicstyle=\scriptsize\sffamily]
genCfg t@(Node "ForExp" [a, b, c, d]) =
  do (tIn, tOut) <- makeInOut t
     (bIn, bOut) <- genCfg b
     (cIn, cOut) <- genCfg c
     (dIn, dOut) <- genCfg d
     (aIn, aOut) <- genCfg a
     connect tIn bIn
     connect dOut tOut
     connect dOut dIn
     connect dOut tOut
     connect cOut dOut
     connect bOut cIn
     return (inNodes [tIn], outNodes [tOut])
\end{lstlisting}
\end{center}

\section{Additional Discussion of Related Work}
\label{app:related-work}

\paragraph{Transformation of Semantics}

When we first began this project, we planned to simply look up an
algorithm to convert SOS to abstract machines, but surprisingly found
none existed. Almost all work in this space is by Danvy and associates, and, while their papers focus on individual formalisms, Danvy personally told us that he is not interested in small-step semantics because it would not require new techniques over his prior work, and that he is similarly uninterested in creating a general algorithm, when he's already sketched how to do the transformation for a single example \cite{danvy2008defunctionalized}. So, although Danvy may personally know how to do the transformation, \textbf{there is no prior published algorithm for converting SOS to abstract machines}.

Our next plan was to convert the SOS to reduction semantics and then apply Danvy's algorithm for converting reduction semantics to an abstract machine \cite{danvy2004refocusing}. However, we found the assumptions of this algorithm too limiting, as it prohibits any use of external semantic functions or state. Rather than extend this algorithm, we found it much easier to use PAM as an intermediate step, because its nature as a modified term-rewriting system gives us access to unification-based techniques for analyzing and transforming it. For instance, while proving up-rules invertible is a reachability search taking 20 lines of code, proving the equivalent property for reduction semantics, unique decomposition, took a 20-page paper \cite{DBLP:journals/lisp/XiaoSA01}.

The older work by \citet{hannan1992operational} is very limited. They give a few examples of manually deriving an abstract machine from big-step semantics, using a sequence of up to 6 hand-proven transformations. \citet{ager2004natural} continues this work, providing a simpler and automated approach, though with some additional limitations, such as needing nondeterminism to execute if-statements.

\paragraph{Abstracting Abstract Machines (AAM)}

 The first known work explicitly on abstract interpretation of abstract machines was by Midtgaard and Jensen \cite{DBLP:conf/sas/MidtgaardJ08,Midtgaard:2009:CAF:1596550.1596592}, though a modern reader may also consider \citet{jones1981flow} to be an earlier example. This was followed by the ``abstracting abstract machines'' (AAM) line of work began by Van Horn and Might \cite{van2010abstracting,wei2018refunctionalization,johnson2014abstracting,johnson2013optimizing}.

These take a substantially different flavor from our own, due to their focus on higher-order analyses of functional languages. A key technical difference is the choice of abstraction operator: they use variants of store-bounding, whereas we use syntactic abstractions designed to make the abstract state space resemble a conventional CFG. The use of these syntactic abstractions allows our algorithm to automatically abstract the transition rules of an abstract machine via abstract rewriting. Their paper's approach only automatically abstracts reads and writes to an abstract store; the abstract transition steps are still manually defined. For example, the algorithm in our paper can take in a rule like $\am{\true}{k\frcomp\shortcxtfr{(\ite{\hole_t}{s_1}{s_2}, \hole_k)}} \amto \am{s_1}{k}$ and the corresponding rule for $\false$, and deduce that if-statements may step into either branch because both rules match a single $\star$ state. The approach in their paper can at best nondeterministically evaluate an expression to separate $\{\true, \false\}$ states and then match both if-rules, which produces an un-CFG-like graph where the branching happens prior to the if-statement.

\paragraph{Abstract Rewriting}

Outside of Bert \& Echahed, there are a few works which share minor details or terminology with our development of abstract rewriting. In the broadest sense, abstract unification means taking some algorithm that uses unification, and replacing the unification with some other operation. Prior work in this sense comes from work on the abstract interpretation of logic programs \cite{king1995abstract,cousot1992abstract}. These algorithms commonly replace unification with a simple operation such as tracking sharing and linearity, for the purpose of, e.g.:  eliminating backtracking or the occurs-check (see \S10 of \citet{cousot1992abstract} for a literature review).

In contrast, the abstract matching procedure in our work (and in Bert
\& Echahed) is an extension of normal matching to a set of abstract
terms, consisting of normal terms augmented with a set of
``blown-down'' terms. This technique appears absent from the literature
on static analysis of logic programs, for it is not useful for
traditional static analysis by abstract interpretation, as the domain
of abstract terms is infinite (meaning: our algorithm cannot compute a
single control-flow graph which usefully describes all possible
programs).

Apart from Bert \& Echahed, we have found a couple papers that use syntactic abstraction, albeit in a different form. An early use is the ``star abstraction'' of  \citet{DBLP:conf/popl/CodishDG93,DBLP:conf/iclp/CodishFM91}, which merges identical subterms of a tree, and is unrelated to the similarly-named abstraction in this paper. \citet{schmidt1996abstract} uses this to merge identical processes in a CCS-like system, bounding executions to aid in model-checking. Although it is not discussed in the paper, \citet{johnson2013optimizing} does some syntactic abstraction in the implementation, representing all numbers as identical \set{number} nodes.

The term ``abstract matching'' also has an unrelated meaning in the model-checking community, where it refers to finding equivalences between abstract states \cite{puasuareanu2005concrete}.

The ARM abstract rewriting machine \cite{kamperman1993arm} provides a compact instruction set for executing term-rewriting systems efficiently. It handles ordinary rewriting, rather than abstract rewriting in our sense.

\paragraph{Control-Flow Analysis}

Research on control-flow analysis typically focuses on functional languages. Perhaps the most famous work is the k-CFA analysis of \citet{shivers1991control}. This and many other works frame their analysis as an abstract interpretation of executions, though there are too many approaches to describe here; \citet{Midtgaard:2012:CAF:2187671.2187672} gives an extensive survey. 

There are two works from this field which deserve special mention. The first is \citet{jones1981flow}, which is the first to define control-flow as an abstract interpretation of executions. Through a modern lens, it is also the first to do so by abstracting abstract machines: it presents an abstraction of a custom abstract machine resembling a CC machine \cite{felleisen2009semantics}. It shares the limitations mentioned above, although its representation of program points is subtle: it uses a set of tokens representing the different function applications of the source program.

The second is \citet{DBLP:conf/popl/JagannathanW95}, due to its focus on generality, explicit construction of graphs, and attempt to relate their construction to an operational semantics. It is also the most extreme in its use of manual program-point annotations, going as far as to design an abstract machine with an explicit program counter.

\end{document}